\mathchardef\ordinarycolon\mathcode`\:
\newcommand{\N}{\mathbb N}
\newcommand{\Z}{\mathbb Z}
\newcommand{\Q}{Q_\hbar^\infty}
\newcommand{\Ql}{Q_\hbar^l}
\newcommand{\Qm}{Q_\hbar^m}
\newcommand{\Qn}{Q_\hbar^n}
\newcommand{\Qp}{Q_\hbar^p}
\newcommand{\R}{\mathbb R}
\newcommand{\C}{\mathbb C}
\newcommand{\T}{\mathbb T}
\newcommand{\A}{\mathcal{A}}
\renewcommand{\S}{\mathcal{S}}
\renewcommand{\H}{\mathcal{H}}
\renewcommand{\L}{\mathcal{L}}
\newcommand{\M}{\mathcal{M}}
\newcommand{\mR}{\mathcal{R}}
\newcommand{\Cr}{C_\mR}
\newcommand{\mB}{\mathcal{B}}
\newcommand{\W}{\mathcal{W}}
\newcommand{\Wr}{\mathcal{W}^0_\mR}
\newcommand{\supnorm}[1]{\norm{ #1 }_\infty}
\newcommand{\absnorm}[1]{\norm{ #1 }_1}
\newcommand{\quadnorm}[1]{\norm{#1}_2}
\newcommand{\norm}[1]{\left\| #1 \right\|}
\newcommand{\p}[2]{\langle #1 , #2 \rangle}
\newcommand{\spn}{\textnormal{span}}
\newcommand{\sgn}{\textnormal{sgn}}
\newcommand{\dom}{\textnormal{dom}}
\newcommand{\supp}{\operatorname{supp}}
\newcommand{\g}{\mathfrak{g}}
\renewcommand{\|}{\Vert} 
\newcommand{\add}{\textnormal{add}}
\newcommand{\sub}{\textnormal{sub}}
\newcommand{\conf}{\textnormal{conf}}
\newcommand{\mom}{\textnormal{mom}}
\newcommand{\vol}{\textnormal{vol}}
\newcommand{\id}{\textnormal{id}}
\newcommand{\indicator}{1}
\newcommand{\uvp}{\underline{\varphi}}
\newtheorem{thm}{Theorem}[section]
\newtheorem{lem}[thm]{Lemma}
\newtheorem{prop}[thm]{Proposition}
\newtheorem{defi}[thm]{Definition}
\newtheorem{exam}[thm]{Example}
\newtheorem{rema}[thm]{Remark}
\newcommand{\hatotimes}{\mathbin{\hat{\otimes}}}
\begin{document}

\title{Strict deformation quantization of abelian lattice gauge fields}
\author{Teun D.H. van Nuland
\\
\small Institute for Mathematics, Astrophysics and Particle Physics, Radboud University Nijmegen, Heyendaalseweg 135, 6525 AJ Nijmegen, The Netherlands. \\
E-mail: t.vannuland@math.ru.nl
}
\date{\today}
\maketitle

\begin{abstract}
	This paper shows how to construct classical and quantum field C*-algebras modeling a $U(1)^n$-gauge theory in any dimension using a novel approach to lattice gauge theory, while simultaneously constructing a strict deformation quantization between the respective field algebras. 
	The construction starts with quantization maps defined on operator systems (instead of C*-algebras) associated to the lattices, in a way that quantization commutes with all lattice refinements, therefore giving rise to a quantization map on the continuum (meaning ultraviolet and infrared) limit. Although working with operator systems at the finite level, in the continuum limit we obtain genuine C*-algebras. We also prove that the C*-algebras (classical and quantum) are invariant under time evolutions related to the electric part of abelian Yang--Mills. Our classical and quantum systems at the finite level are essentially the ones of \cite{vNS20}, which admit completely general dynamics, and we briefly discuss ways to extend this powerful result to the continuum limit. We also briefly discuss reduction, and how the current set-up should be generalized to the non-abelian case.
\end{abstract}

\section{Introduction}

C*-algebras are expected to provide the building blocks of a mathematical construction of gauge theories such as QED and quantum Yang--Mills. Besides having already proven their worth in putting quantum mechanics on a solid basis, C*-algebras feature in the Haag--Kastler axioms \cite{Haag}, and could therefore be used to non-perturbatively construct a quantum field theory. Moreover, as C*-algebras can model both quantum and classical theories, a C*-algebraic model of a classical gauge theory might provide a good footing from which to take the leap towards a quantum gauge theory.
The direction of this leap, then, might be indicated by strict deformation quantization \cite{Landsman98}, for it gives a set of axioms that a quantization map between a classical and a quantum C*-algebra should satisfy. These axioms are stringent, and examples are mostly found in finite-dimensional spaces \cite{Landsman93,Landsman98,LMvdV,Rieffel89,Rieffel}, with a few exceptions that usually rely on finite-dimensional approximations \cite{BHR,vN19}. To quantize a gauge theory, one is therefore advised to first quantize a finite-dimensional regularization, and this is where lattice gauge theory comes in.

Lattice gauge theory was introduced by Wilson in \cite{Wilson} and shows how to approximate gauge fields by their parallel transports on a lattice (where by `lattice' we mean a type of finite graph). Wilson's framework has had a huge impact, both in theoretical and phenomenological physics. On the theoretical side, an important contribution was made in \cite{KS} by Kogut and Susskind, who took a Hamiltonian approach to Wilson's ideas, considering lattices in a time-slice -- typically $\R^3$ -- and showed that the parallel transports of a gauge field on the lattice can be interpreted as rigid rotors, and that Yang--Mills time evolution implies a certain coupled movement of these rotors. 
Important for us, the finite-dimensionality of this quantum Hamiltonian system makes it suitable for the C*-algebraic approach. This C*-algebraic approach to Hamiltonian quantum lattice gauge theory is pursued for instance in \cite{ASS,BS19a,BS19b,GR,vNS20,Stienstra,ST}. A central goal of this program is to describe the continuum limit (in which the lattices are replaced by the full $\R^3$ or a subset thereof) by a C*-algebra invariant under a *-homomorphism coming from Yang--Mills dynamics. Such a continuum C*-algebra has the potential to give rise to a local quantum field theory.

The current paper will add to this program by constructing promising new field algebras for quantum abelian lattice gauge theories in arbitrary dimension, using an approach guided by C*-algebraic quantization.
 %
 It follows up on \cite{vNS20}, where the field algebra corresponding to a quantum abelian gauge theory on a fixed lattice is defined as the closure of the image under a Weyl quantization map of a classical algebra that is the analogue of the commutative resolvent algebra \cite{BG,vN19} when replacing the configuration space $\R^{nk}$ by $\T^{nk}$. Here $\T^n$ is interpreted as the abelian gauge group and $k$ as the number of edges. The obtained field algebra, named the `resolvent algebra on the cylinder' in \cite{vNS20} is a C*-algebra of bounded operators on $L^2(\T^{nk})$, naturally containing a copy of the crossed product algebra $C(\T^{nk})\rtimes \T^{nk}$ as a C*-subalgebra. The main advantage of the resolvent algebra on the cylinder is that it is conserved under a very general class of time evolutions \cite[Theorem 29]{vNS20}. Independently of what C*-algebra one takes at the finite level, there appeared several problems, on the side of quantum embedding maps as well as on the side of quantization. The embedding maps for adding an edge to the lattice are easily defined by construction of the algebras in \cite{vNS20}, but for subdivision of edges we could not find a natural embedding map, for reasons explained in \cite[pages 247--249]{Stienstra}. Moreover, the quantization map was not a strict deformation quantization, lacking injectivity as well as Rieffel's condition. 
 The current paper solves all of the above problems simultaneously, by letting go of multiplicativity of the embedding maps.
 
  On each lattice, we restrict ourselves to a subspace of the classical C*-algebra on which the quantization map of \cite{vNS20} is injective. This subspace and its image under quantization turn out to be only operator systems, and not algebras. At first, this appears to distance us from the powerful C*-algebraic approach. However, on these operator systems, both the classical and quantum embedding maps are now naturally defined and commute with quantization. Moreover, the ensuing limit of operator systems turns out be a *-algebra lying dense in a C*-algebra, thus recovering the C*-algebraic approach.   
  
  This `operator systemic' method has many advantages. The obtained quantum embedding maps respect the gauge action, which becomes very important when one wishes to make the step from field algebras to observable algebras.
  Moreover, in the continuum limit, the quantum and classical theory behave even better than in the case on the lattice, in the sense that they form a strict deformation quantization, satisfying all conditions of \cite[Definition II.1.1.1 and II.1.1.2]{Landsman98}.

For these reasons, the operator systemic method seems to improve upon the existing literature. In most operator algebraic approaches to lattice gauge theory (e.g., \cite{ASS,BS19a,BS19b,GR,Stienstra,ST}) one uses inductive limits of C*-algebras instead. We validate our deviation in \textsection 2.2.

The emergence of a strict deformation quantization counts as another validation of our method, but is also remarkable in itself. Most notably, it involves two limits; besides the usual limit $\hbar\to0$ also the limit of lattice spacing tending to zero becomes important. The interaction between these two limits complicates the proof at most places, but in other places is the very reason the result holds.

Section \ref{sct:2} of this paper constructs the classical C*-algebra on the continuum, the quantization map on the continuum, and the quantum C*-algebra on the continuum. The classical and quantum C*-algebras are shown to be invariant under time evolution related to the electric part of abelian Yang--Mills \cite{KS} in \textsection 2.4. Section \ref{sct:SDQ} gives the proof of strict deformation quantization, and forms by far the most technical part of this paper. Section \ref{sct:outlook} provides a positive outlook on three logical next steps, namely reduction, full time evolution, and generalization to non-abelian gauge groups.



\vspace{-10pt}
\paragraph{Notation} We denote $G:=\T^n:=\R^n/\Z^n$, $\g:=\R^n$ and $\g^*:=\R^n$. Elements of $G^l$ for a set $l$ are usually denoted by $q$ or $[x]$ where $[x]:=x+(\Z^n)^l$ for $x\in(\R^n)^l$. We denote by $L_q$ the left-translation on $G^l$, i.e., $L_{[x]}[y]=[x+y]$. We denote by $M_g$ the multiplication operator of the function $g$. We denote by $e^{i\xi\cdot}$ the function $x\mapsto e^{i\xi\cdot x}$ and, slightly abusing notation, by $e^{2\pi i a\cdot}$ the function $[x]\mapsto e^{2\pi i a\cdot x}$ for $a\in(\Z^n)^l$. We denote by $\psi_a$ the equivalence class of $e^{2\pi ia\cdot}$ in $L^2(G^l)$, the Hilbert space of square-integrable functions, and by $\mB(\H)$ the bounded linear operators on any Hilbert space $\H$. We denote by $C(X),C_b(X),C^\infty(X),C_c^\infty(X),\S(X)$ respectively the continuous functions, the bounded ones, the smooth functions, the compactly supported ones, and the Schwartz functions on $X$. In any metric space, $B_d(x)$ is the open ball around $x$ with radius $d$. We let $B:=B_{1/2}(0_\g)\subseteq\g$, remarking that $x\mapsto [x]$ is a diffeomorphism on $B$. By an operator system we mean a linear subspace of a unital C*-algebra that is preserved under $*$ and contains $1$. We do not require operator systems to be closed.

\vspace{-10pt}
\paragraph{Acknowledgements}
I am grateful to Walter van Suijlekom for providing indispensable constructive feedback and to Klaas Landsman for providing indispensable enthusiasm. Research supported by NWO Physics Projectruimte (680-91-101).

\section{Operator systems and limit C*-algebras}
\label{sct:2}

\paragraph{Lattices}
Let us first define what we mean by `a lattice'. 
For simplicity, we take our time-slice to be $\R^D$ in this paper, although any metric space would work.
Throughout this article, a \emph{lattice} is a finite subset $l\subseteq\R^D\times\R^D$ such that, using the lexicographical ordering of $\R^D$, we have $x<y$ for all $(x,y)\in l$, and, we have $tx+(1-t)y\neq sz+(1-s)w$ for all $(x,y),(z,w)\in l$ and all $0<t,s<1$. The elements $e=(x,y)$ of a lattice $l$ are interpreted as directed straight edges from $x$ to $y$. Thus, all we ask of a lattice is that its edges do not intersect, except possibly at their boundaries. The set of all lattices becomes a directed set, denoted $(\L,\leq)$, when we agree that $l\leq m$ if and only if the lattice $m$ can be obtained from $l$ by adding and subdividing edges in the sense of \cite{ASS}. Put precisely, $l\leq m$ if and only if for all $(x_1,x_2)\in l$ there exists $N\in\N_0$ and $0<t_1<\cdots<t_N<1$ such that for $y_s:=(1-t_s)x_1+t_sx_2$ we have $(x_1,y_1),(y_1,y_2),\ldots,(y_{N-1},y_N),(y_N,x_2)\in m$.
We endow every edge $e=(x,y)$ with a length $d_e:=\norm{x-y}$.

%

 Let us compare our notation with the one in \cite{ASS,Stienstra}, in which an index set $I$ is used, and $\{\Lambda_i\}_{i\in I}$ is the net of finite lattices, including a set of vertices $\Lambda_i^0$, a set of edges $\Lambda_i^0$, and a set of plaquettes $\Lambda_i^2$. In our situation, the elements $l\in\L$ can be identified with the sets of edges $\Lambda^1_i$. As in this paper we will not reduce to the gauge group and only discuss the electric part of Yang--Mills dynamics, the vertices and plaquettes will play no role. By our definition of $l\in\L$ and simply following set notation, $G^l$ denotes the set of functions from the edges in $l$ to elements in $G$, or equivalently ordered tuples of length $|l|$ with elements in $G$.

%
%
%

\subsection{The finite and continuum classical systems}

\paragraph{The continuum phase space}
	 Throughout this paper, we let $G:=\T^n$ as a Lie group. This is the configuration space associated to each edge of a lattice. The (abelian) Lie algebra of $G$ is $\g=\R^n$, and the exponential map $\g\to G$ is denoted $x\mapsto [x]$. The phase space $X^l$ associated to a lattice $l\in\L$ is given by the cotangent bundle of the Lie group $G^l$, i.e., $X^l:=T^*G^l\cong G^l\times(\g^*)^l$. In order to define connecting maps between $X^l$ and $X^m$, for lattices $l\leq m\in\L$, we use the fact that $m$ can be obtained from $l$ by recursively applying two operations: adding an edge to the lattice and subdividing an edge of length $d$ into two edges of lengths $d_1$ and $d_2$ with $d_1+d_2=d$. In that manner, we define connecting maps
\begin{align*}
	\gamma_{lm}=(\gamma_{lm}^\conf,\gamma_{lm}^\mom):G^m\times(\g^*)^m\to G^l\times(\g^*)^l
\end{align*}
by recursively composing embedded versions of the maps $\gamma_{\text{add}}=(\gamma^{\conf}_{\text{add}},\gamma^{\mom}_{\text{add}}):G^2\times(\g^*)^2\to G\times\g^*$ and $\gamma_{\text{sub}}=(\gamma^{\conf}_{\text{sub}},\gamma^{\mom}_{\text{sub}}):G^2\times(\g^*)^2\to G\times\g^*$ defined by
%
%
%
\begin{align*}
	\gamma^\conf_\add([x_1],[x_2])&:=[x_1];&\gamma^\mom_\add(v_1,v_2)&:=v_1;\\
	\gamma^\conf_\sub([x_1],[x_2])&:=[x_1+x_2];&\gamma^\mom_\sub(v_1,v_2)&:=\frac{d_1v_1+d_2v_2}{d}.
\end{align*}
These embedding maps arise naturally by interpreting $x_e\in G$ as the parallel transport along the edge $e$ and $v_e\in\g$ as the average rate of change along $e$. One could replace `average' by `total', at the cost of a slightly different quantization map. By construction, the maps $\gamma_{lm}:X^m\to X^l$ for $l\leq m\in\L$ define an inverse system of phase spaces.
The ensuing inverse limit is denoted as
\begin{align*}
	X^\infty:=\lim_{\leftarrow} X^l=\lim_{\leftarrow}G^l\times\lim_{\leftarrow}(\g^*)^l,\quad
	\gamma_l=(\gamma_l^{conf},\gamma_l^{mom})
	:X^\infty\to X^l.
\end{align*}

\paragraph{Operator systems}
The classical system on the lattice $l\in\L$ can be described by the commutative C*-algebra introduced in \cite{vNS20}, namely
\begin{align*}
	A^l_0:=\Cr(T^*G^l)=C(G^l)\hatotimes \Wr((\g^*)^l),
\end{align*}
where $\Wr((\g^*)^l)$ is the C*-subalgebra of $C_b((\g^*)^l)$ generated by the commutative Weyl C*-algebra $\W((\g^*)^l,0)$ from \cite{BHR} and the commutative resolvent algebra $\Cr((\g^*)^l)$ from \cite{vN19}. The reason to work with the unital C*-algebra $A^l_0$ is that $A_0^l$ and its Weyl quantization are conserved under fully general dynamics in the sense of \cite{vNS20}. In contrast, the C*-subalgebra $C(G^l)\hatotimes\W((\g^*)^l,0)\subseteq A^l_0$, where $\W((\g^*)^l,0):=\overline{\spn}\{e^{i\xi\cdot}:~\xi\in\g^l\}$, is only conserved under `free' time evolution \cite{vNS20}. As explained in \cite{vNS20}, $A_0^l$ is the closure of the *-algebra $\A_0^l$ defined by
\begin{align*}
	\A_0^l:=\spn\left\{e^{2\pi i b\cdot}\otimes e^{i\xi\cdot}(g\circ P_V):~\begin{aligned}&b\in(\Z^n)^l,~ V\subseteq \g^l\text{ linear},\\ &g\in \S(V),~\hat{g}\in C_c^\infty(V^*),~\xi\in \g^l\end{aligned}\right\}.
\end{align*}
For this paper, we will only need that any element of $\A_0^l$ can be written as $\sum_{k=1}^K g_k\otimes h_k$ with $h_k\in C_b((\g^*)^l)$ a Fourier transform $h_k=\hat{\mu}_k:=\int d\mu_k(\xi)e^{i\xi\cdot}$ of a compactly supported finite complex Borel measure $\mu_k$ on $\g^l$. We can thusly define the operator system
\begin{align*}
	\M_0^l:=\spn\{g\otimes \hat\mu\in\A_0^l:~\supp(\mu)\subseteq B^l\}\subseteq\A_0^l,
\end{align*}
where $B=B_{1/2}(0_\g)$.
The *-algebras $\A^l_0$ are endowed with the connecting maps $\gamma_{lm}^*:\A_0^l\to\A_0^m$, whose restrictions to the operator systems $\M_0^l$ we denote as 
	$$F^{ml}_C:=\gamma_{lm}^*|_{\M^l_0}:\M_0^l\to\M_0^m,$$
and refer to as the classical embedding maps.
We define the *-algebraic direct limit 
	$$\A_0^\infty:=\lim_{\rightarrow}\A_0^l,$$
and identify $\A_0^\infty\subseteq C_b(X^\infty)$ by identifying $F_C^l:\A_0^l\to\A_0^\infty$ with the restriction of $\gamma_l^*:C_b(X^l)\to C_b(X^\infty)$. To describe $\A_0^\infty$, it turns out we only need to regard the operator systems $\M_0^l$. To prove this, we first introduce the following useful notation.

\begin{defi}\label{def:l_R}
	For a lattice $l$ and a positive integer $R$, we let $l^R\geq l$ be the lattice obtained by subdividing every edge of $l$ into $R$ edges of equal length. 
\end{defi}

\begin{lem}
The direct limit of *-algebras $\A_0^l$ is also the direct limit of the operator systems $\M_0^l$, in the sense that we have
\begin{align}\label{eq:classical algebra vs operator system}
	\A_0^\infty=\{f\circ\gamma_l:~l\in \L,~f\in\M_0^l\}.
\end{align}
\end{lem}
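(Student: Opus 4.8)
The inclusion "$\supseteq$" is immediate: each $\M_0^l\subseteq\A_0^l$, so any $f\circ\gamma_l$ with $f\in\M_0^l$ lies in $F_C^l(\A_0^l)\subseteq\A_0^\infty$. The content is the reverse inclusion: every element of $\A_0^\infty$ can be realized as $f\circ\gamma_l$ for some lattice $l$ and some $f\in\M_0^l$, i.e. with momentum-measure support inside the \emph{small} box $B^l$. Since an arbitrary element of $\A_0^\infty$ has the form $g\circ\gamma_l$ with $g\in\A_0^l$, and since $\A_0^l$ is spanned by elements $g=h\otimes\hat\mu$ with $\mu$ a compactly supported measure on $\g^l$, it suffices (by linearity and the fact that $\gamma_l^*$ is linear) to treat a single such generator. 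The plan is: given $h\otimes\hat\mu$ on $X^l$ with $\supp\mu$ contained in some large box, pass to a finer lattice $m\geq l$ on which the pullback $\gamma_{lm}^*(h\otimes\hat\mu)$ has momentum-support shrunk into $B^m$, so that it lies in $\M_0^m$; then $(h\otimes\hat\mu)\circ\gamma_l = \gamma_{lm}^*(h\otimes\hat\mu)\circ\gamma_m$ exhibits the element in the required form.

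The key computation is the effect of $\gamma_{lm}^\mom$ on momentum variables under edge subdivision. Refining $l$ to $l^R$ (Definition \ref{def:l_R}) subdivides each edge of length $d$ into $R$ edges each of length $d/R$; iterating $\gamma^\mom_\sub(v_1,v_2)=\frac{d_1v_1+d_2v_2}{d}$ shows that $\gamma^\mom_{l,l^R}$ sends a momentum configuration $(v_e)_{e\in l^R}$ to the length-weighted averages over the sub-edges, which is a convex-combination (hence norm-nonincreasing in an appropriate sense) map. Dually, the pullback acts on the Fourier side by pushing the measure $\mu$ forward along the \emph{adjoint} of this averaging map. Because the adjoint of an $R$-fold averaging scales like $1/R$ on the relevant directions, the pushed-forward measure $\tilde\mu$ on $\g^{l^R}$ has $\supp\tilde\mu$ contained in a box of size $O(1/R)$ times the original; choosing $R$ large enough forces $\supp\tilde\mu\subseteq B^{l^R}$. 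One must also check that adding edges (the $\gamma_\add$ part) only introduces a zero momentum coordinate, which trivially stays inside $B$, and that the configuration-space factor $h$ pulls back to a legitimate element of $C(G^m)$ — both are routine since $\gamma_\add$ and $\gamma_\sub$ are smooth. Finally one verifies the pulled-back generator still lies in $\A_0^m$ (it does, since $\gamma_{lm}^*\A_0^l\subseteq\A_0^m$ by construction of the connecting maps) and satisfies the Schwartz/compact-spectrum conditions, so that it genuinely belongs to $\M_0^m$.

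The main obstacle I anticipate is bookkeeping the momentum map precisely enough to get the quantitative shrinking: one needs that subdividing an edge into $R$ pieces really does contract the dual box by a factor $\sim 1/R$ \emph{uniformly over all edges of $l$ simultaneously}, including edges that were produced by earlier subdivisions with unequal lengths $d_1,d_2$. Handling general (non-equal-length) subdivisions is slightly delicate because $\gamma^\mom_\sub$ is then a non-uniform convex combination; the clean fix is to only ever use the uniform refinements $l\mapsto l^R$ of Definition \ref{def:l_R}, for which all weights are equal to $1/R$ and the contraction factor is exactly $1/R$ on each new coordinate, while the already-present coordinates of $l$ split into $R$ copies each carrying the rescaled frequency. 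A secondary (minor) point is to make sure that a \emph{finite} linear combination $\sum_k h_k\otimes\hat\mu_k$ can be handled with a \emph{single} $R$: take $R$ large enough to work for the maximum of the finitely many support-radii, and use that $\M_0^m$ is a linear space.
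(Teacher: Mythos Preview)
Your proposal is correct and follows essentially the same route as the paper. The paper makes your ``adjoint of the averaging map'' explicit by defining linear maps $S^{ml}:\g^l\to\g^m$ (built from $S^\sub(\xi)=(\tfrac{d_1}{d}\xi,\tfrac{d_2}{d}\xi)$ and $S^\add(\xi)=(\xi,0)$), checks that $F_C^{ml}(g\otimes\hat\mu)=(g\circ\gamma^\conf_{lm})\otimes\widehat{S^{ml}_*\mu}$, and then uses exactly the uniform refinement $l^R$ you propose, computing $(S^{l^Rl}\xi)_e=\tfrac1R\xi_{e'}$ to get the $1/R$ contraction and conclude $\supp(S^{l^Rl}_*\mu_k)\subseteq B^{l^R}$.
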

\begin{proof}
By recursively composing the maps
\begin{align}\label{S^ml}
	S^\sub(\xi):=\left(\frac{d_1}{d}\xi,\frac{d_2}{d}\xi\right),\qquad S^\add(\xi):=(\xi,0),
\end{align}
we obtain a direct system of linear maps $S^{ml}:\g^l\to\g^m$ ($l\leq m\in\L$) allowing us to write the classical embedding maps as
\begin{align}\label{eq:F_C^ml in termen van S^ml}
	F_C^{ml}(g\otimes \hat\mu)=(g\circ\gamma^\conf_{lm})\otimes \widehat{S^{ml}_*\mu}.
\end{align}
For every $F_C^l(f)=f\circ\gamma_l\in\A_0^\infty$ we can write $f=\sum_k g_k\otimes \hat{\mu}_k$ for compactly supported measures $\mu_k$. Choose $R$ such that $\supp(\mu_k)\subseteq B_{R/2}(0_\g)^l$ for all $k$, and consider the lattice $l^R\geq l$. Every edge $e\in l^R$ satisfies $d_e=\tfrac{1}{R} d_{e'}$ for the edge $e'\in l$ it lies in. Hence 
	$$(S^{l^Rl}\xi)_e=\tfrac{1}{R}\xi_{e'}\,,$$
and so $S^{l^Rl}(\supp(\mu_k))\subseteq S^{l^Rl}(B_{R/2}(0_\g)^l)\subseteq B_{1/2}(0_\g)^{l^R}=B^{l^R}$. As $S^{l^Rl}$ is a closed map, we therefore obtain $\supp(S^{l^Rl}_*\mu_k)\subseteq B^{l^R}$ for all $k$. Then \eqref{eq:F_C^ml in termen van S^ml} gives $f\circ\gamma_{ll^R}\in\M^{l^R}_0$, so $F_C^l(f)=(f\circ\gamma_{ll^R})\circ\gamma_{l^R}$ is in the set on the right hand side of \eqref{eq:classical algebra vs operator system}.
\end{proof}

\begin{rema}\label{rema:supremum trick}
	Two arbitrary functions in $\A_0^\infty$ can be written as $f_1\circ\gamma_l,f_2\circ\gamma_l$ for a certain $l\in \L$. Indeed, given $f'_1\circ\gamma_{l_1},f'_2\circ\gamma_{l_2}\in\A_0^\infty$, one takes the supremum $l$ of $l_1$ and $l_2$ (this corresponds to the coarsest lattice that is finer than both $l_1$ and $l_2$), and writes $f'_j\circ\gamma_{l_j}=(f'_j\circ\gamma_{l_jl})\circ\gamma_l\equiv f_j\circ\gamma_l$. The same goes for $k$ functions $f_1\circ\gamma_l,\ldots,f_k\circ\gamma_l$.
\end{rema} 
 
The first use of this remark is in defining a Poisson structure on $\A_0^\infty$. The Poisson bracket of $f_1\circ \gamma_l$ and $f_2\circ\gamma_l$ is defined as
\begin{align*}
	\{f_1\circ \gamma_l,f_2\circ\gamma_l\}:=\{f_1,f_2\}\circ\gamma_l,
\end{align*}
in terms of the Poisson bracket on $\A_0^l$, which is a Poisson subalgebra of $C^\infty(X^l)$. To show that the above bracket on $\A_0^\infty$ is well-defined, it suffices to show that $\{f_1\circ\gamma_{lm},f_2\circ\gamma_{lm}\}=\{f_1,f_2\}\circ\gamma_{lm}$ for all $l\leq m$. This follows from the analogous statement for $\gamma_{\text{add}}$ and $\gamma_{\text{sub}}$, which can be straightforwardly checked.

\subsection{The quantum systems and quantum embedding maps}
\label{sct:quantum systems and embedding maps}
To each lattice $l\in\L$ we will associate an operator system modeling the quantum system. This operator system is defined as a quantization of $\M_0^l$ under a quantization map $\Ql$ that defines an extension of Weyl quantization. Recall that every $f\in\M_0^l$ can be written as $f=\sum_k g_k \otimes \hat\mu_k$ for $g_k\in C^\infty(G^l)$ and $\supp(\mu_k)\subseteq B^l\subseteq\g^l$, where $B=B_{1/2}(0_\g)$. Notice that $\hbar\xi\in B^l$ for every $\hbar\in[-1,1]$ and $\xi\in\supp(\mu_k)$. 
We define the quantization map on the lattice $l$ to be
\begin{align}\label{eq:Ql}
	\Ql:\M_0^l&\to\mB(L^2(G^l)),\nonumber\\
	\Ql\bigg(\sum_{k=1}^Kg_k\otimes\hat\mu_k\bigg)\psi[y]&:=\sum_{k=1}^K \int_{\g^l} d\mu_k(\xi)g_k[y+\tfrac12\hbar\xi]\psi[y+\hbar\xi].
\end{align}

A simple calculation shows that, acting on the wave functions $\psi_a[x]:=e^{2\pi i a\cdot x}$ ($a\in(\Z^n)^l$), this quantization map has the simple form
\begin{align}\label{eq:asymptotically Ruben}
	\Ql(e^{2\pi i b\cdot}\otimes h)\psi_a=h(2\pi \hbar(a+\tfrac12 b))\psi_{a+b},
\end{align}
and therefore coincides with the one in \cite{vNS20}. Moreover, when $b$ is small enough, it coincides with Weyl quantization on the Riemannian manifold $\T^n$ as introduced in \cite[Definition 3.4.4]{Landsman98}, as the cut-off function $\kappa$ used there becomes 1 when we restrict to $\M_0^l$. The insight used by this paper is that, restricted to the operator system $\M_0^l$, the quantization map is injective. The quantum system associated to $l$ is defined by
	$$\M_\hbar^l:=\Ql(\M_0^l).$$
As $\Ql$ is linear, unital, and *-preserving, $\M_\hbar^l$ is an operator system.

\begin{exam}
A notable subset of $\M^l_0$ is $\mathfrak{W}^l_0:=\spn\{g\otimes e^{i\xi\cdot}:~g\in C^\infty(G^l),\xi\in  B^l\}$. This subset generates the C*-algebra $C(G^l)\hatotimes \W((\g^*)^l,0)$, which can be seen as a classical Weyl C*-algebra on the torus \cite{BHR,vN19,vNS20} that lies inside $A^l_0=\overline{\A_0^l}$. The image of $\mathfrak{W}^l_0$ under the above quantization map generates the crossed product C*-algebra $C(G^l)\rtimes G^l$. Indeed, we have $\Ql(g\otimes e^{i\xi\cdot})=M_{g\circ L_{[\hbar\xi/2]}}L^*_{[\hbar\xi]}$.
\end{exam}

\paragraph{Direct limit of Hilbert spaces}
To model the quantum system in infinite degrees of freedom, we will eventually construct a noncommutative C*-algebra that is canonically represented on a Hilbert space. This Hilbert space is the limit of the following direct system of Hilbert spaces:
\begin{align*}
	\quad\H^l:=L^2(G^l),\quad u^{ml}:=(\gamma^{conf}_{lm})^*:\H^l\to\H^m.
\end{align*}
Passing to the direct limit, we denote
\begin{align*}
	\H^{\infty}:=&\lim_{\rightarrow}\H^l,\quad
	u^l=(\gamma^{conf}_l)^*:\H^l\to\H^\infty.
\end{align*}
To define a direct limit of the operator systems $\M^l_\hbar$, we need to define the embedding maps and show that they satisfy the needed properties.

\paragraph{Quantum embedding maps}
The quantum embedding maps are defined by quantizing the classical embedding maps, i.e., for all $l\leq m\in\L$ and all $f\in\M_0^l$ we define
\begin{align*}
	F^{ml}_Q:\M^l_\hbar&\to\M^m_\hbar,\\
	F^{ml}_Q(\Ql(f))&:=\Qm(F^{ml}_C(f)),
\end{align*}
which is unambiguous by injectivity of $\Ql$.
\begin{exam}
	The embedding map $F_Q^\add$ is given by tensoring with $1$, which exemplifies why our quantum systems should be unital. The embedding map $F_Q^\sub$ is best understood on elements of 
	$C(G^l)\rtimes G^l.$ As depicted in Figure \ref{fig:QE1}, we have
	$$F_Q^\sub(M_gL^*_{[\xi]})=M_{g\circ \mu}L^*_{\left[\frac{d_1}{d}\xi,\frac{d_2}{d}\xi\right]},$$
where $g\in C(G)$, $\xi\in B^l$ and $\mu:\T^n\times\T^n\to\T^n$ is given by $\mu([x_1],[x_2]):=[x_1+x_2]$. One sees the metric at work, as well as the exponential map $[\cdot]:\g\to G$, and notices that the well-definedness of the UV-limit hinges on the use of operator systems.
\end{exam}

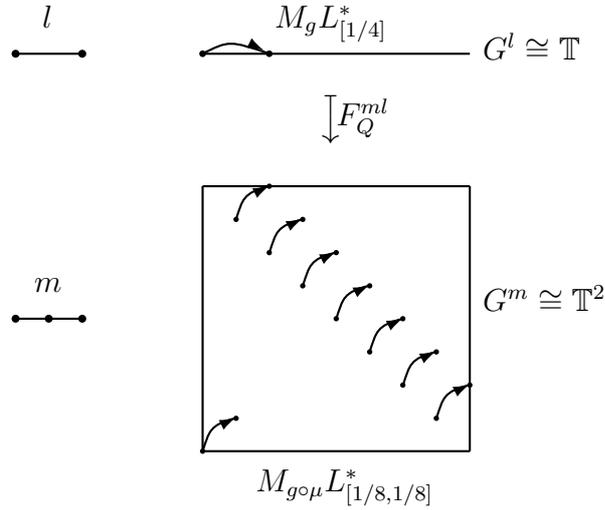
\begin{figure}[!htb]
	\centering
	\thicklines
	\begin{picture}(150,180)(-50,-20)
		\put(101,148){ $G^l\cong\T$}	
		\put(101,53){ $G^m\cong\T^2$}
		
		\put(-70,150){\line(1,0){25}}
		\put(-70,150){\circle*{3}}
		\put(-45,150){\circle*{3}}
		\put(-60,160){$l$}
		\put(-70,50){\line(1,0){25}}
		\put(-70,50){\circle*{3}}
		\put(-57.5,50){\circle*{3}}
		\put(-45,50){\circle*{3}}
		\put(-63,60){$m$}

		\put(0,150){\line(1,0){100}}

		\put(0,0){\line(1,0){100}}
		\put(0,0){\line(0,1){100}}
		\put(100,0){\line(0,1){100}}	
		\put(0,100){\line(1,0){100}}
			
		\put(45,135){\rotatebox{270}{$\longmapsto$}}
		\put(50,123){$F^{ml}_Q$}		
		
		\put(0,150){\circle*{2.5}}
		\put(25,150){\circle*{2.5}}
		\cbezier(0,150)(10,155)(15,155)(25,150)
		\put(20.5,152.5){\vector(16,-8){2.5}}
		
		\put(0,0){\circle*{2}}
		\put(12.5,12.5){\circle*{2}}
		\cbezier(0,0)(2.5,7.5)(5,10)(12.5,12.5)	
		\put(10,11.5){\vector(2,1){2.5}}
		
		\put(87.5,12.5){\circle*{2}}
		\put(100,25){\circle*{2}}
		\cbezier(87.5,12.5)(90,20)(92.5,22.5)(100,25)	
		\put(97.5,24){\vector(2,1){2.5}}
		
		\put(75,25){\circle*{2}}
		\put(87.5,37.5){\circle*{2}}
		\cbezier(75,25)(77.5,32.5)(80,35)(87.5,37.5)	
		\put(85,36.5){\vector(2,1){2.5}}
		
		\put(62.5,37.5){\circle*{2}}
		\put(75,50){\circle*{2}}
		\cbezier(62.5,37.5)(65,45)(67.5,47.5)(75,50)	
		\put(72.5,49){\vector(2,1){2.5}}
		
		\put(50,50){\circle*{2}}
		\put(62.5,62.5){\circle*{2}}
		\cbezier(50,50)(52.5,57.5)(55,60)(62.5,62.5)	
		\put(60,61.5){\vector(2,1){2.5}}
		
		\put(37.5,62.5){\circle*{2}}
		\put(50,75){\circle*{2}}
		\cbezier(37.5,62.5)(40,70)(42.5,72.5)(50,75)	
		\put(47.5,74){\vector(2,1){2.5}}
		
		\put(25,75){\circle*{2}}
		\put(37.5,87.5){\circle*{2}}
		\cbezier(25,75)(27.5,82.5)(30,85)(37.5,87.5)	
		\put(35,86.5){\vector(2,1){2.5}}
		
		\put(12.5,87.5){\circle*{2}}
		\put(25,100){\circle*{2}}
		\cbezier(12.5,87.5)(15,95)(17.5,97.5)(25,100)	
		\put(22.5,99){\vector(2,1){2.5}}
		
		\put(27,160){$M_gL^*_{[1/4]}$}
		\put(20,-15){$M_{g\circ\mu} L^*_{[1/8,1/8]}$}		
	\end{picture}	
	\caption{A pictoral representation of an operator $M_gL^*_{[1/4]}\in\mathcal{M}^l_\hbar$ and its image under the quantum embedding map, where $G=\T$, $l$ has one edge, and $m=l^2$. For the picture, $g$ is supported closely around $[1/4]\in G^l$. The embedding map clearly respects the gauge action coming from the central vertex of $m$.}
	\label{fig:QE1}
\end{figure}		

Our quantum embedding maps contrast with those used in the existing literature on C*-algebraic lattice gauge theory \cite{ASS,BG,BS19a,BS19b,Stienstra,ST} because ours do not define a direct system (inductive system) of *-algebras. They therefore warrant some motivation.

We assume the situation of Figure \ref{fig:QE1} and Figure \ref{fig:QE2}, where $G=\T=U(1)$ and a lattice $l$ consisting of a single edge is compared to a lattice $m\geq l$ with two edges. 
There exist multiple observables on the lattice $m$ that have the same behavior when restricted to $l$. This can be seen in Figure \ref{fig:QE2}, in the formulas, or by interpreting the gauge field as rigid rotors associated to every edge, as in \cite{KS}. 
Indeed, the two rotors associated to the two edges of $m$ can either both be turned clockwise by a quarter circle, or both anti-clockwise by a quarter circle. When describing the gauge field by a single rotor, the two operations appear as the same observable (see Figure \ref{fig:QE2}).

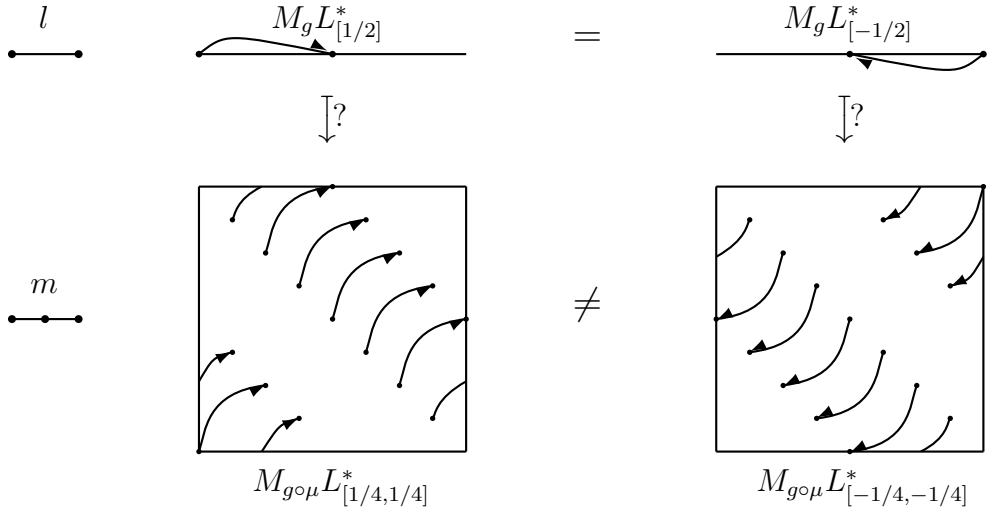
\begin{figure}[!htb]
	\centering
	\thicklines
	\begin{subfigure}{.49\textwidth}
	\centering
	\begin{picture}(135,200)(-50,-20)
		\put(-70,150){\line(1,0){25}}
		\put(-70,150){\circle*{3}}
		\put(-45,150){\circle*{3}}
		\put(-60,160){$l$}
		\put(-70,50){\line(1,0){25}}
		\put(-70,50){\circle*{3}}
		\put(-57.5,50){\circle*{3}}
		\put(-45,50){\circle*{3}}
		\put(-63,60){$m$}

		\put(0,150){\line(1,0){100}}
		
		\put(0,150){\circle*{2.5}}
		\cbezier(0,150)(10,158)(40,158)(50,150)
		\put(45.5,152.5){\vector(16,-8){2.5}}
		\put(50,150){\circle*{2.5}}

		\put(0,0){\line(1,0){100}}
		\put(0,0){\line(0,1){100}}
		\put(100,0){\line(0,1){100}}	
		\put(0,100){\line(1,0){100}}
			
		\put(45,135){\rotatebox{270}{$\longmapsto$}}
		\put(50,123){$?$}	

		\put(0,0){\circle*{2}}
		\put(25,25){\circle*{2}}
		\cbezier(0,0)(2.5,7.5)(17.5,22.5)(25,25)	
		\put(22.5,24){\vector(2,1){2.5}}
		
		\put(75,25){\circle*{2}}
		\put(100,50){\circle*{2}}
		\cbezier(75,25)(77.5,32.5)(92.5,47.5)(100,50)	
		\put(97.5,49){\vector(2,1){2.5}}
		
		\put(62.5,37.5){\circle*{2}}
		\put(87.5,62.5){\circle*{2}}
		\cbezier(62.5,37.5)(65,45)(80,60)(87.5,62.5)	
		\put(85,61.5){\vector(2,1){2.5}}

		\put(50,50){\circle*{2}}
		\put(75,75){\circle*{2}}
		\cbezier(50,50)(52.5,57.5)(67.5,72.5)(75,75)	
		\put(72.5,74){\vector(2,1){2.5}}
		
		\put(37.5,62.5){\circle*{2}}
		\put(62.5,87.5){\circle*{2}}
		\cbezier(37.5,62.5)(40,70)(55,85)(62.5,87.5)	
		\put(60,86.5){\vector(2,1){2.5}}
		
		\put(25,75){\circle*{2}}
		\put(50,100){\circle*{2}}
		\cbezier(25,75)(27.5,82.5)(42.5,97.5)(50,100)	
		\put(47.5,99){\vector(2,1){2.5}}
		
		\put(12.5,87.5){\circle*{2}}
		\cbezier(12.5,87.5)(13.5,89)(15,95)(23.5,100)

		\put(37.5,12.5){\circle*{2}}
		\cbezier(23.5,0)(30,10)(35,10.5)(37.5,12.5)
		\put(35,11.5){\vector(2,1){2.5}}

		\put(12.5,37.5){\circle*{2}}
		\cbezier(0,26.5)(5,35)(10,35.5)(12.5,37.5)
		\put(10,36.5){\vector(2,1){2.5}}
		
		\put(87.5,12.5){\circle*{2}}
		\cbezier(87.5,12.5)(88.5,14)(90,20)(100,26.5)
		
		\put(140,153){\large $=$}
		\put(140,52){\large $\neq$}
		
		\put(27,160){$M_gL^*_{[1/2]}$}
		\put(20,-15){$M_{g\circ\mu} L^*_{[1/4,1/4]}$}		
	\end{picture}
	\end{subfigure}
	\begin{subfigure}{.49\textwidth}
	\centering
	\begin{picture}(225,200)(-65,-20)

		\put(0,150){\line(1,0){100}}
		
		\put(100,150){\circle*{2.5}}
		\cbezier(100,150)(90,142)(60,142)(50,150)
		\put(54.5,147.5){\vector(-16,8){2.5}}
		\put(50,150){\circle*{2.5}}

		\put(0,0){\line(1,0){100}}
		\put(0,0){\line(0,1){100}}
		\put(100,0){\line(0,1){100}}	
		\put(0,100){\line(1,0){100}}			
		
		\put(45,135){\rotatebox{270}{$\longmapsto$}}
		\put(50,123){$?$}		
					
		\put(100,100){\circle*{2}}
		\put(75,75){\circle*{2}}
		\cbezier(100,100)(97.5,92.5)(82.5,77.5)(75,75)	
		\put(77.5,76){\vector(-2,-1){2.5}}
		
		\put(25,75){\circle*{2}}
		\put(0,50){\circle*{2}}
		\cbezier(25,75)(22.5,67.5)(7.5,52.5)(0,50)	
		\put(2.5,51){\vector(-2,-1){2.5}}

		\put(37.5,62.5){\circle*{2}}
		\put(12.5,37.5){\circle*{2}}
		\cbezier(37.5,62.5)(35,55)(20,40)(12.5,37.5)	
		\put(15,39){\vector(-2,-1){2.5}}
		
		\put(50,50){\circle*{2}}
		\put(25,25){\circle*{2}}
		\cbezier(50,50)(47.5,42.5)(32.5,27.5)(25,25)	
		\put(27.5,26){\vector(-2,-1){2.5}}

		\put(62.5,37.5){\circle*{2}}
		\put(37.5,12.5){\circle*{2}}
		\cbezier(62.5,37.5)(60,30)(45,15)(37.5,12.5)	
		\put(40,14){\vector(-2,-1){2.5}}
		
		\put(75,25){\circle*{2}}
		\put(50,0){\circle*{2}}
		\cbezier(75,25)(72.5,17.5)(57.5,2.5)(50,0)	
		\put(52.5,1){\vector(-2,-1){2.5}}
	
		\put(87.5,12.5){\circle*{2}}
		\cbezier(87.5,12.5)(86.5,11)(85,5)(76.5,0)

		\put(62.5,87.5){\circle*{2}}
		\cbezier(76.5,100)(70,90)(65,89.5)(62.5,87.5)
		\put(65,88.5){\vector(-2,-1){2.5}}

		\put(87.5,62.5){\circle*{2}}
		\cbezier(100,73.5)(95,65)(90,64.5)(87.5,62.5)
		\put(90,63.5){\vector(-2,-1){2.5}}
		
		\put(12.5,87.5){\circle*{2}}
		\cbezier(12.5,87.5)(11.5,86)(10,80)(0,73.5)

		\put(25,160){$M_gL^*_{[-1/2]}$}
		\put(15,-15){$M_{g\circ\mu} L^*_{[-1/4,-1/4]}$}		
	\end{picture}
	\end{subfigure}
	\caption{The quantum embedding map does not extend in a multiplicative way from $\mathcal{M}^l_\hbar$ to the algebra $\A^l_\hbar$ generated by $\mathcal{M}^l_\hbar$. If we would try, we would end up with two representations of the same observable in $\mathcal A^l_\hbar$ being mapped to two different observables in $\mathcal A^m_\hbar$.}
	\label{fig:QE2}
\end{figure}

 Therefore, if one wants to interpret an observable on a lattice $l$ as an observable on the continuum, a choice has to be made. We make this choice by restricting at any finite level to observables that rotate any rotor less than a certain amount, so that an embedding of such an observable can be made by fairly distributing that rotation over the smaller rotors that make up the original one. Clearly, this means giving up on multiplicative structure. This is not against the C*-algebraic philosophy, however,
which states that one can describe any physical system once we have a sufficiently rich C*-algebra of observables. The set of observables at a finite level makes up but a subset of the full algebra, and is therefore not required to completely describe a physical system. Only the full set of observables, with arbitrary lattice size, can discern between any two gauge fields, and can therefore be expected to form a *-algebra (lying densely in a C*-algebra). That is indeed what we will prove in Proposition \ref{prop:algebra}.


As further motivation of our quantum embedding maps, and to be used later, we show that they intertwine the direct system of Hilbert spaces given by $u^{ml}:\H^l\to\H^m$.

\begin{lem}\label{lem:F_Q and u}
	For $l\leq m\in\L$ and $O\in\M_\hbar^l$ we have
\begin{align*}
	F_Q^{ml}(O)u^{ml}=u^{ml}O.
\end{align*}
\end{lem}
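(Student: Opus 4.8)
The plan is to reduce the claim to the two generating operations — adding an edge and subdividing an edge — and check each directly on elements of the form $\Ql(g\otimes\hat\mu)$, using the explicit formulas for $\Ql$, $F_C^{ml}$, and $u^{ml}=(\gamma^{conf}_{lm})^*$. First I would note that both sides of $F_Q^{ml}(O)u^{ml}=u^{ml}O$ are linear in $O$ and that $\M_\hbar^l=\Ql(\M_0^l)$, so it suffices to verify the identity for $O=\Ql(f)$ with $f=g\otimes\hat\mu\in\M_0^l$, $\supp(\mu)\subseteq B^l$. Since $\gamma_{lm}$ is built by recursively composing embedded copies of $\gamma_{\add}$ and $\gamma_{\sub}$, and since the quantum embedding maps are defined by $F_Q^{ml}(\Ql(f))=\Qm(F_C^{ml}(f))$ with $F_C^{ml}=\gamma_{lm}^*$ compatible with composition, a straightforward induction reduces everything to the two atomic cases; I would state this reduction explicitly and then do the base cases.

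For the base cases I would use \eqref{eq:F_C^ml in termen van S^ml}, which gives $F_C^{ml}(g\otimes\hat\mu)=(g\circ\gamma^\conf_{lm})\otimes\widehat{S^{ml}_*\mu}$, together with the definition \eqref{eq:Ql} of the quantization map. Writing a test function $\psi\in L^2(G^m)$ and unravelling $u^{ml}\psi=\psi\circ\gamma^\conf_{lm}$... wait, rather: $u^{ml}$ maps $\H^l\to\H^m$, so for $\psi\in\H^l$ we have $(u^{ml}\psi)[z]=\psi[\gamma^\conf_{lm}(z)]$ for $z\in G^m$. The key algebraic facts to check are: (i) the configuration shift $[\,\cdot\,]\mapsto[\,\cdot\,+\hbar\xi]$ appearing in $\Qm(F_C^{ml}(f))$ is the $\gamma^\conf_{lm}$-pullback of the shift by $\hbar\xi$ on $G^l$ — this is exactly the statement that $\gamma^\conf_{lm}$ intertwines left-translation by $S^{ml}\xi$ on $G^m$ with left-translation by $\xi$ on $G^l$, which for $\gamma^\conf_\add([x_1],[x_2])=[x_1]$ and $\gamma^\conf_\sub([x_1],[x_2])=[x_1+x_2]$ is immediate from the definitions of $S^\add,S^\sub$ in \eqref{S^ml}; and (ii) the multiplication-operator symbol $g\circ\gamma^\conf_{lm}$ evaluated at the half-shifted point $[z+\tfrac12 S^{ml}\xi]$ equals $g$ evaluated at $\gamma^\conf_{lm}(z)+\tfrac12\xi$, again by the same intertwining property. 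Combining (i) and (ii) under the integral $\int d\mu(\xi)$ shows that $\Qm(F_C^{ml}(f))$ applied to $u^{ml}\psi$ equals $u^{ml}$ applied to $\Ql(f)\psi$.

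The main obstacle is bookkeeping rather than conceptual: one must be careful that when $\gamma_{lm}$ is an embedded copy of $\gamma_\add$ or $\gamma_\sub$ acting on only some of the edges (with the identity on the rest), the intertwining property (i) still holds coordinatewise, and that the support condition $\supp(\mu)\subseteq B^l$ is genuinely used only to make $\Ql(f)$ and $\Qm(F_C^{ml}(f))$ well-defined (it plays no role in the intertwining identity itself, since the latter is a pointwise statement valid for any $\xi$). I would also remark that Lemma~\ref{lem:F_Q and u} is precisely the compatibility needed to pass the operator systems $\M_\hbar^l$ and their embedding maps to a direct limit represented on $\H^\infty$, which is why it is stated here. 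A clean way to present the base-case computation is to verify the single identity $\Qm(F_C^{ml}(f))\,u^{ml}=u^{ml}\,\Ql(f)$ on the orthonormal basis $\{\psi_a:a\in(\Z^n)^l\}$ using \eqref{eq:asymptotically Ruben}: there $u^{ml}\psi_a=\psi_{(S^{ml})^{\mathrm{t}}a}$ essentially by duality, and the scalar factors $h(2\pi\hbar(a+\tfrac12 b))$ match because $S^{ml}$ and its transpose are adjoint — but the pointwise-integral argument above is more transparent and I would lead with it.
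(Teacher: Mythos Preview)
Your main argument---the pointwise computation with the integral kernel \eqref{eq:Ql}---is correct and reduces to the single identity $\gamma^\conf_{lm}\big([z]+[S^{ml}\xi]\big)=\gamma^\conf_{lm}[z]+[\xi]$, which follows because $\gamma^\conf_{lm}$ is a group homomorphism and $\gamma^\conf_{lm}\circ S^{ml}=\id_{\g^l}$ (immediate on the atomic pieces, as you note). The paper takes the alternative route you mention at the end: it works on the basis $\{\psi_a\}$ via \eqref{eq:asymptotically Ruben}, after introducing auxiliary maps $T^{ml}:\g^l\to\g^m$ (built from $T^\add(\xi)=(\xi,0)$ and $T^\sub(\xi)=(\xi,\xi)$) satisfying $u^{ml}\psi_a=\psi_{T^{ml}(a)}$ and $\gamma^\mom_{lm}\circ T^{ml}=\id$. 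Your pointwise approach is arguably cleaner for this lemma in isolation; the paper's choice pays off because $T^{ml}$ and the identities \eqref{eq:S and T} are reused in the proofs of von Neumann's and Dirac's conditions.

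One genuine slip in your closing remark: the formula $u^{ml}\psi_a=\psi_{(S^{ml})^{\mathrm t}a}$ is not even type-correct, since $(S^{ml})^{\mathrm t}:\g^m\to\g^l$ goes the wrong way for $a\in(\Z^n)^l$. The correct statement is $u^{ml}\psi_a=\psi_{T^{ml}(a)}$, where $T^{ml}$ is the transpose of $\gamma^\conf_{lm}$, \emph{not} of $S^{ml}$; for subdivision these differ ($T^\sub(\xi)=(\xi,\xi)$ versus $S^\sub(\xi)=(\tfrac{d_1}{d}\xi,\tfrac{d_2}{d}\xi)$). Since you lead with the pointwise argument this does not damage your proof, but the basis-vector alternative as you wrote it would not go through, and ``$S^{ml}$ and its transpose are adjoint'' is not the mechanism making the scalar factors match.
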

\begin{proof}
	Similar to \eqref{S^ml}, we define
	\begin{align*}
		T^\add(\xi):=(\xi,0)\qquad T^\sub(\xi):=(\xi,\xi),
	\end{align*}	
	to obtain a direct system of linear maps $T^{ml}:\g^l\to\g^m$. We account here that
	\begin{align}\label{eq:S and T}
		\gamma^\mom_{lm}(X)\cdot\xi=X\cdot S^{ml}(\xi);\qquad \gamma^\conf_{lm}(q)\cdot \xi=q\cdot T^{ml}(\xi);\qquad \gamma_{lm}^\mom\circ T^{ml}=\id_{\g^l},
	\end{align}
	such that, in particular, $u^{ml}\psi_a=\psi_{T^{ml}(a)}$ for all $a\in(\Z^n)^l$. For $f=e^{2\pi ib\cdot}\otimes h$, we get
	\begin{align*}
		\Qm(f\circ\gamma_{lm})u^{ml}\psi_a&=\Qm(f\circ\gamma_{lm})\psi_{T^{ml}(a)}\\
		&=h(\gamma^\mom_{lm}(2\pi\hbar(T^{ml}(a)+\tfrac12 T^{ml}(b))))\psi_{T^{ml}(a)+T^{ml}(b)}\\
		&=h(2\pi\hbar(a+\tfrac12 b))u^{ml}\psi_{a+b},
	\end{align*}
	so $\Qm(f\circ\gamma_{lm})u^{ml}\psi_a=u^{ml}\Ql(f)\psi_a$, which implies the lemma.
\end{proof}

\subsection{The continuum quantization map and quantum system}

To define $\Q$, we define $\Q(f\circ\gamma_l)\in\mB(\H^\infty)$ by its action on $u^m\psi\in\H^\infty$, where $m\geq l$, namely
\begin{align*}
	\Q(f\circ\gamma_l)u^m\psi:=u^mQ^m_\hbar(f\circ\gamma_{lm})\psi\qquad (\psi\in\H^m).
\end{align*}
To show that this is well defined, we use Lemma \ref{lem:F_Q and u} and find, for all $n\geq m\geq l$ and $\psi\in\H^m$,
\begin{align*}
	u^n\Qn(f\circ\gamma_{ln})u^{nm}\psi &= u^n F_Q^{nm}(\Qm(f\circ\gamma_{lm}))u^{nm}\psi\\
	&= u^m\Qm(f\circ\gamma_{lm})\psi,
\end{align*}
and conclude that $\Q(f\circ\gamma_i)$ is well-defined on the dense subset $\cup_m u^m\H^m\subseteq\H^\infty$. If we write $f=\sum_k g_k\otimes \hat\mu_k$ we obtain, 
\begin{align*}
	\quadnorm{\Q(f\circ\gamma_l)u^m\psi} &= \quadnorm{\Qm(f\circ\gamma_{lm})\psi}\leq \sum_k\supnorm{g_k}\absnorm{\mu_k}\quadnorm{u^m\psi}. 
\end{align*}
Therefore $\Q:\A_0^\infty\to\mB(\H^\infty)$ is well defined, and $\norm{\Q(f\circ\gamma_l)}\leq\sum\supnorm{g_k}\absnorm{\mu_k}$, independently from $\hbar$. The above also shows that 
\begin{align}
	\norm{\Q(f\circ\gamma_l)}=\sup_{m\geq l}\norm{\Qm(f\circ\gamma_{lm})}=\lim_{m}\norm{\Qm(f\circ\gamma_{lm})}.\label{eq:Q^infty sup over Q^j}
\end{align}
We define $$\A^\infty_\hbar:=\Q(\A^\infty_0)\equiv \{F^l_Q(a):~l\in\L,~ a\in\M_\hbar^l\}.$$ We write $\A_\hbar^\infty$ instead of $\M_\hbar^\infty$ to suggest it is in fact an algebra.

\begin{prop}\label{prop:algebra}
	The operator system $\A_\hbar^\infty=\Q(\A_0^\infty)$ is a *-algebra.
\end{prop}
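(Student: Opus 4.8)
The plan is to show that $\A_\hbar^\infty$ is closed under products. Since $\A_\hbar^\infty$ is already an operator system (being the linear span $\Q(\A_0^\infty)$ of a self-adjoint, unital set), it remains to verify that the product of two elements $\Q(f_1\circ\gamma_l)$ and $\Q(f_2\circ\gamma_l)$ again lies in $\A_\hbar^\infty$. Using Remark \ref{rema:supremum trick}, I may assume both factors are defined over a \emph{common} lattice $l$, so it suffices to analyse the product $\Q(f_1\circ\gamma_l)\Q(f_2\circ\gamma_l)$ for $f_1,f_2\in\M_0^l$. Because $\Q$ is built from the maps $Q^m_\hbar$ via $\Q(f\circ\gamma_l)u^m\psi=u^mQ^m_\hbar(f\circ\gamma_{lm})\psi$, and the $u^m$ are (up to normalisation) isometric embeddings, it is enough to understand the product $Q^m_\hbar(f_1\circ\gamma_{lm})\,Q^m_\hbar(f_2\circ\gamma_{lm})$ on $\H^m$ for all sufficiently fine $m\geq l$.

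\textbf{The key computation} is then at the finite level: I would take $f_j=\sum_k g_{j,k}\otimes\hat\mu_{j,k}$ with $\supp(\mu_{j,k})\subseteq B^m$, and compute $Q^m_\hbar(f_1\circ\gamma_{lm})Q^m_\hbar(f_2\circ\gamma_{lm})$ using the explicit formula \eqref{eq:Ql}, or more conveniently using the Fourier-mode formula \eqref{eq:asymptotically Ruben}. On the exponential basis, $Q^m_\hbar(e^{2\pi ib_1\cdot}\otimes h_1)Q^m_\hbar(e^{2\pi ib_2\cdot}\otimes h_2)\psi_a = h_1(2\pi\hbar(a+b_2+\tfrac12 b_1))h_2(2\pi\hbar(a+\tfrac12 b_2))\psi_{a+b_1+b_2}$, which is exactly $Q^m_\hbar(e^{2\pi i(b_1+b_2)\cdot}\otimes h)\psi_a$ for the function $h$ corresponding to a suitably translated product measure. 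The point is that the product of two quantized generators is \emph{again} a quantization of some $g\otimes\hat\mu$ — but now the shift $b_1+b_2$ may be twice as large, and, more importantly, $\mu$ will be a convolution-type measure whose support may have grown: roughly $\supp(\mu)\subseteq \supp(\mu_1)+\supp(\mu_2)$, which need no longer lie in $B^m$. This is precisely the failure of multiplicativity at a fixed lattice warned about in Figure \ref{fig:QE2}.

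\textbf{The main obstacle}, and the crux of the proof, is therefore to absorb this support growth by passing to a \emph{finer} lattice. Here I would invoke exactly the device used in the proof of the previous Lemma: for $m'=m^R$ with $R$ large enough, the map $S^{m'm}$ scales momenta by $1/R$ on each edge, so $S^{m'm}_*\mu$ has support inside $B^{m'}$ even when $\mu$ does not have support in $B^m$. Combined with \eqref{eq:F_C^ml in termen van S^ml}, this shows $f_1\cdot f_2$ pulled back appropriately — more precisely the element $g\otimes\hat\mu$ realising the product — lies in $\M_0^{m'}$, hence its quantization lies in $\M_\hbar^{m'}$, and so $\Q$ of the associated limit function lies in $\A_\hbar^\infty$. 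One must check this is consistent: that the operator $\Q(f_1\circ\gamma_l)\Q(f_2\circ\gamma_l)$ on $\H^\infty$ genuinely equals $\Q$ applied to this limit function. This follows because on the dense subset $\bigcup_{n\geq m'}u^n\H^n$ both operators are computed by the \emph{same} finite-level product $Q^n_\hbar(\cdot)Q^n_\hbar(\cdot)$, which by the computation above equals $Q^n_\hbar$ of the product function; here Lemma \ref{lem:F_Q and u} guarantees compatibility with the embeddings $u^{nm'}$. The only genuine work is the bookkeeping in the finite-level product formula (identifying the measure $\mu$ and checking $\supp(\mu)\subseteq\supp(\mu_1)+\supp(\mu_2)$, and that $g$ is a smooth function on $G^m$) — everything else is the scaling trick from Definition \ref{def:l_R} plus the density argument already established for the definition of $\Q$.
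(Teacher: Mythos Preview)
Your proposal is correct and follows essentially the same route as the paper: reduce to a common lattice via Remark~\ref{rema:supremum trick}, observe that the finite-level product produces a support growth (a convolution with support in $\supp(\mu_1)+\supp(\mu_2)$), absorb this by refining the lattice (the paper uses exactly $l^2$, since $B^l+B^l$ scales into $B^{l^2}$ under $S^{l^2l}$; your ``$R$ large enough'' is the same idea, with $R=2$ already sufficient), and then check consistency on the dense union $\bigcup_n u^n\H^n$ using Lemma~\ref{lem:F_Q and u}. The paper's write-up is marginally more streamlined in that it phrases everything on the quantum side---it records the two identities $F_Q^{l^2l}(O_1)F_Q^{l^2l}(O_2)\in\M_\hbar^{l^2}$ and $F_Q^{ml}(O_1)F_Q^{ml}(O_2)=F_Q^{ml^2}\big(F_Q^{l^2l}(O_1)F_Q^{l^2l}(O_2)\big)$ for all $m\geq l^2$, and deduces the result immediately---whereas you pass through the classical side by explicitly naming the convolution measure; but this is a cosmetic difference, not a mathematical one.
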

\begin{proof}
	By Remark \ref{rema:supremum trick}, we only have to show that $\Q(f_1\circ\gamma_l)\Q(f_2\circ\gamma_l)$ is in $\A_\hbar^\infty$. Write $O_1:=\Ql(f_1)$ and $O_2:=\Ql(f_2)$. Because we cannot take their product in the operator system $\M^l_\hbar$, we first subdivide the edges of $l$ to obtain the lattice $l^2$ defined by Definition \ref{def:l_R}. A straightforward computation shows firstly that
		\begin{align*}
			F_Q^{l^2l}(O_1)F_Q^{l^2l}(O_2)\in\M_\hbar^{l^2},
		\end{align*}
		and secondly that
		\begin{align*}
			F_Q^{ml}(O_1)F_Q^{ml}(O_2)=F_Q^{ml^2}(F_Q^{l^2l}(O_1)F_Q^{l^2l}(O_2)),
		\end{align*}
		for all $m\geq l^2$.
		Using this formula and Lemma \ref{lem:F_Q and u}, we obtain
		\begin{align*}
			\Q(f_1\circ\gamma_l)\Q(f_2\circ\gamma_l)u^m\psi&=u^mF_Q^{ml}(O_1)F_Q^{ml}(O_2)\psi\\
			&=F_Q^{l^2}(F_Q^{l^2l}(O_1)F_Q^{l^2l}(O_2))u^m\psi,
		\end{align*}
		for all $u^m\psi\in\H^\infty$. Hence, $\Q(f_1\circ\gamma_l)\Q(f_2\circ\gamma_l)=F_Q^{l^2}(F_Q^{l^2l}(O_1)F_Q^{l^2l}(O_2))\in\A_\hbar^\infty$.
\end{proof}

Taking the closures of $\A_0^\infty\subseteq C_b(X^\infty)$ and $\A_\hbar^\infty\subseteq\mB(\H^\infty)$, we therefore obtain C*-algebras $A_0^\infty$ and $A_\hbar^\infty$. By Theorem \ref{thm:sdq infinity}, we are justified in saying that the noncommutative C*-algebra $A_\hbar^\infty$ is obtained by strict deformation quantization of $A_0^\infty$.

\subsection{Time evolution}\label{sct:time evolution}
Before moving on to strict deformation quantization, we state two promising results with respect to time evolution. They show that our C*-algebras are invariant under what one could call `free time evolution' in the continuum limit.
On the finite level, these results can be strengthened to invariance under \textit{all} time evolutions. This is proven in \cite[Theorem 15]{vNS20} and \cite[Theorem 29]{vNS20}.
The combination of the results here and in \cite{vNS20} indicates that we are on the right track to obtaining classical and quantum C*-algebras that are invariant under respectively classical and quantum Yang--Mills time evolution, as discussed in Section \ref{sct:outlook}.
\begin{thm}\label{thm:classical time evolution infty}
	The C*-algebra $A^\infty_0\subseteq C_b(X^\infty)$ is conserved by the time evolution given on a lattice $l\in\L$ by the Hamiltonian $H_l:T^*G^l\to\R$, $H_l(q,v):=\sum_{e\in l}d_ev_e^2$.
\end{thm}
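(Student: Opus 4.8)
The plan is to reduce the continuum statement to the finite-level statement, and then to prove the finite-level statement by a direct computation showing that the Hamiltonian flow preserves the algebra $A_0^l$. The key observation is that $A_0^\infty = \overline{\A_0^\infty}$ and $\A_0^\infty = \{f\circ\gamma_l : l\in\L,\ f\in\M_0^l\}$, so it suffices to show that for each $l$ the flow $\phi_t^l$ generated by $H_l$ on $T^*G^l$ pulls $A_0^l$ back into itself, i.e.\ $(\phi_t^l)^*A_0^l \subseteq A_0^l$, and that these flows are compatible with the connecting maps $\gamma_{lm}$ in the sense that $\gamma_{lm}\circ\phi_t^m = \phi_t^l\circ\gamma_{lm}$. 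The compatibility of flows is where the specific form of the classical embedding maps and of the Hamiltonians $H_l$ enters: it suffices to check it for $\gamma_\add$ and $\gamma_\sub$. For $\gamma_\add$ this is immediate since $H$ splits as a sum over edges and $\gamma_\add$ simply forgets an edge. For $\gamma_\sub$, one checks that under the identification $v\mapsto (v,v)$ on momenta (dual to $\gamma^\mom_\sub(v_1,v_2)=(d_1v_1+d_2v_2)/d$, cf.\ the maps $S^\sub, T^\sub$ of Lemma~\ref{lem:F_Q and u}) the two-edge Hamiltonian $d_1v_1^2 + d_2v_2^2$ restricted appropriately matches $d v^2$; the configuration variable $[x_1+x_2]$ flows in the right way because the electric Hamiltonian depends only on momenta, so its flow translates the configuration variable linearly in $t$ by a momentum-dependent amount, and these translations are again compatible under $\gamma^\conf_\sub([x_1],[x_2])=[x_1+x_2]$. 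Once the flows are compatible, $(\phi_t^\infty)^*(f\circ\gamma_l) = ((\phi_t^l)^*f)\circ\gamma_l$, so conservation of $A_0^\infty$ follows from conservation of each $A_0^l$ together with continuity of $(\phi_t^\infty)^*$ (which is an isometric $*$-automorphism of $C_b(X^\infty)$, hence extends the statement from the dense subalgebra $\A_0^\infty$ to its closure $A_0^\infty$).

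So the heart of the matter is the finite-level claim $(\phi_t^l)^*A_0^l\subseteq A_0^l$. Here $H_l$ depends only on the momentum variables $v\in(\g^*)^l$, so Hamilton's equations give $\dot v = 0$ and $\dot q_e = 2 d_e v_e$ (up to constants and the identification $\g\cong\g^*$), i.e.\ the flow is $\phi_t^l(q,v) = (q + 2t\,Dv,\ v)$ where $D$ is the diagonal matrix of edge lengths, and $q+2tDv$ is taken in $G^l=\T^{nl}$. Thus pulling back $g\otimes h \in A_0^l = C(G^l)\hatotimes\Wr((\g^*)^l)$ gives the function $(q,v)\mapsto g(q+2tDv)\,h(v)$. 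Expanding $g$ in a Fourier series $g = \sum_b c_b\, e^{2\pi i b\cdot}$ (with rapidly decaying coefficients if $g$ is smooth, or handling general continuous $g$ by approximation), one gets $g(q+2tDv) = \sum_b c_b\, e^{2\pi i b\cdot q}\, e^{4\pi i t\, (Db)\cdot v}$, so the pulled-back function is $\sum_b c_b\, (e^{2\pi i b\cdot}\otimes e^{4\pi i t (Db)\cdot\,})\cdot(1\otimes h)$. The function $v\mapsto e^{4\pi i t(Db)\cdot v}$ lies in the Weyl algebra $\W((\g^*)^l,0)\subseteq\Wr((\g^*)^l)$, and $h\in\Wr((\g^*)^l)$, so each summand lies in $A_0^l$; one then needs that the sum converges in $A_0^l$. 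This is exactly the place to invoke \cite[Theorem 15 or Theorem 29]{vNS20}, which already establishes invariance of $A_0^l$ (and of its Weyl quantization) under completely general time evolutions on the lattice — the electric Hamiltonian $H_l$ being a particular (indeed especially simple, momentum-only) case. So in practice the finite-level step is a citation, and the only genuinely new content is the compatibility of the flows with the connecting maps and the passage to the limit.

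The step I expect to be the main obstacle is the convergence/closedness issue at the finite level together with verifying that $\Wr$ (not just the Weyl algebra $\W$) is preserved: the function $h$ is a general element of $\Wr((\g^*)^l)$, which includes resolvent-algebra elements, and one must be sure the product $e^{4\pi i t(Db)\cdot}\, h$ and the infinite Fourier sum stay inside $\Wr\subseteq C_b((\g^*)^l)$ with control on the norm uniformly enough to sum. This is precisely what is delicate in \cite{vNS20}, and I would lean on that reference rather than reprove it; modulo that, the argument is the soft two-part structure above. A secondary, purely bookkeeping obstacle is making the identifications $\g\cong\g^*$, the factor $2$ in $\dot q_e = 2d_e v_e$, and the relation between $S^\sub$/$T^\sub$ and $\gamma^\mom_\sub$/$\gamma^\conf_\sub$ consistent, so that the compatibility computation for $\gamma_\sub$ actually closes; but this is the same kind of elementary check already carried out for the Poisson bracket just after Remark~\ref{rema:supremum trick}.
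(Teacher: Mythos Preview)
Your proposal is correct and follows essentially the same two-step strategy as the paper: invoke \cite{vNS20} for invariance of $A_0^l$ at each finite level, then check that the flows $\phi_t^l$ are compatible with the connecting maps $\gamma_{lm}$ so that a well-defined time evolution descends to the limit. Your direct verification of flow compatibility for $\gamma_\add$ and $\gamma_\sub$ (momenta are fixed, configurations translate linearly, and the translations match under $\gamma^\conf_{lm}$) is in fact more careful than the paper's shortcut ``$H_l\circ\gamma_{lm}=H_m$'', which is not literally true as an identity of functions on $X^m$ --- e.g.\ for subdivision $d\big(\tfrac{d_1v_1+d_2v_2}{d}\big)^2\neq d_1v_1^2+d_2v_2^2$ --- although the intended conclusion $\tau_m^0(t,f\circ\gamma_{lm})=\tau_l^0(t,f)\circ\gamma_{lm}$ does hold, exactly by your direct computation.
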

\begin{proof}
	Every Hamiltonian $H_l$ induces a time-evolution $\tau^0_{l}: \R\times A_0^l\to A_0^l$ by \cite[Lemma 10]{vNS20}. It can be checked that $H_l\circ\gamma_{lm}=H_m$, and therefore $\tau_m^0(t,f\circ\gamma_{lm})=\tau_l^0(t,f)\circ\gamma_{lm}$. We conclude that the time-evolution
	\begin{align*}
		\tau_\infty^0:A_0^\infty\to A_0^\infty,\qquad\tau_\infty^0(t,f\circ\gamma_l):=\tau_l^0(t,f)\circ\gamma_l
	\end{align*}
	is well-defined.
\end{proof}

\begin{thm}\label{thm:quantum time evolution infty}
	The C*-algebra $A^\infty_\hbar\subseteq\mB(\H^\infty)$ is conserved by the time evolution given on a lattice $l\in\L$ by the Hamiltonian $\hat{H}_l:=\sum_{e\in l}d_e\partial_e^2$.
\end{thm}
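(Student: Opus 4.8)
The plan is to mimic the structure of the proof of Theorem \ref{thm:classical time evolution infty}, transporting the finite-level invariance result of \cite{vNS20} through the quantum embedding maps $F_Q^{ml}$ and the direct limit. First I would recall from \cite[Theorem 29]{vNS20} (or the relevant statement therein about conservation under general dynamics, including the free Hamiltonian) that on each lattice $l$ the self-adjoint operator $\hat H_l=\sum_{e\in l}d_e\partial_e^2$ on $\H^l=L^2(G^l)$ generates a one-parameter group $\tau^0_{l,\hbar}(t,\cdot)=\mathrm{Ad}(e^{it\hat H_l/\hbar})$ (or $e^{it\hat H_l}$, depending on the normalization fixed there) that preserves the C*-algebra $A_\hbar^l=\overline{\A_\hbar^l}$ — note this is the closure of the algebra generated by $\M_\hbar^l$, not of $\M_\hbar^l$ itself, since $\M_\hbar^l$ is only an operator system. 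Since $A_\hbar^\infty$ is the closure of $\A_\hbar^\infty$ and the latter is, by Proposition \ref{prop:algebra}, generated (via the $F_Q^l$ maps) by the images of the $\M_\hbar^l$, it is enough to produce a well-defined time evolution $\tau^0_{\infty,\hbar}$ on $\A_\hbar^\infty$ agreeing on each image $F_Q^l(\M_\hbar^l)$ with the pushforward of $\tau^0_{l,\hbar}$, and then extend by continuity to the closure.

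The crux is a compatibility statement analogous to $H_l\circ\gamma_{lm}=H_m$, namely that the quantum Hamiltonians are intertwined by the Hilbert-space embeddings: $\hat H_m\, u^{ml}=u^{ml}\,\hat H_l$ for all $l\le m\in\L$. It suffices to check this for the two elementary moves. For $\gamma_\add$, the map $u^\add=(\gamma^\conf_\add)^*$ is (up to the identification $G^m\cong G^l\times G$) just $\psi\mapsto\psi\otimes 1$, the inclusion of functions independent of the new edge; since $\hat H_m=\hat H_l\otimes 1 + 1\otimes d\,\partial^2$ and $\partial^2$ annihilates constants, the identity is clear. For $\gamma_\sub$, with an edge of length $d=d_1+d_2$ split into edges of lengths $d_1,d_2$, the map $u^\sub$ is pullback along $\mu\colon([x_1],[x_2])\mapsto[x_1+x_2]$, i.e.\ $(u^\sub\psi)([x_1],[x_2])=\psi([x_1+x_2])$; one computes that for $\psi\in L^2(\T^n)$ the function $u^\sub\psi$ is annihilated by $(d_1\partial_{x_1}-d_2\partial_{x_2})$ acting in the two new variables, while the complementary second-order combination reproduces $d\,\partial^2$ on $\psi$ after the change of variables — concretely, $d_1\partial_{x_1}^2+d_2\partial_{x_2}^2$ applied to $\psi\circ\mu$ equals $\big((d_1+d_2)\partial^2\psi\big)\circ\mu$ on the relevant subspace, matching $\hat H_m u^\sub=u^\sub\hat H_l$. (This is the precise sense in which the combinatorial weights $\tfrac{d_1}{d},\tfrac{d_2}{d}$ appearing in $\gamma^\mom_\sub$ and $S^\sub$ are forced by the requirement of Hamiltonian compatibility.) Composing recursively yields $\hat H_m u^{ml}=u^{ml}\hat H_l$ in general.

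Granting this intertwining, I would first note $u^{ml}e^{it\hat H_l}=e^{it\hat H_m}u^{ml}$ (by functional calculus, since $u^{ml}$ is a partial isometry intertwining the generators on the relevant subspaces), hence $F_Q^{ml}$ carries $\tau^0_{l,\hbar}(t,\cdot)$ to $\tau^0_{m,\hbar}(t,\cdot)$ in the sense that $\tau^0_{m,\hbar}(t,F_Q^{ml}(O))u^{ml}=u^{ml}\tau^0_{l,\hbar}(t,O)$ for $O\in\M_\hbar^l$, using Lemma \ref{lem:F_Q and u}. This compatibility lets me define $\tau^0_{\infty,\hbar}(t,\Q(f\circ\gamma_l)):=\mathrm{Ad}(e^{it\hat H^\infty})(\Q(f\circ\gamma_l))$, where $e^{it\hat H^\infty}$ is the unitary on $\H^\infty$ obtained as the direct limit of the $e^{it\hat H_l}$ (well-defined precisely by the intertwining relation); concretely it acts on $u^m\H^m$ by $u^m e^{it\hat H_m}$. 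One then checks, exactly as in the well-definedness argument for $\Q$ on the continuum, that for $O\in\M_\hbar^l$ the element $\mathrm{Ad}(e^{it\hat H^\infty})(F_Q^l(O))$ equals $F_Q^l(\tau^0_{l,\hbar}(t,O))\in\A_\hbar^\infty$, so $\tau^0_{\infty,\hbar}(t,\cdot)$ preserves $\A_\hbar^\infty$; since it is implemented by a unitary it is isometric, hence extends continuously to an automorphism of the closure $A_\hbar^\infty$. Strong continuity in $t$ follows from strong continuity of each $e^{it\hat H_l}$ together with density of $\cup_m u^m\H^m$. The main obstacle is the verification of the Hamiltonian intertwining $\hat H_m u^{ml}=u^{ml}\hat H_l$ for the subdivision move — in particular identifying the correct invariant subspace of $L^2(G^m)$ (the range of $u^{ml}$) and confirming that the second-order operator $\hat H_m$ respects it and restricts to $\hat H_l$ after the coordinate change $\mu$; everything else is a routine direct-limit argument parallel to the classical case and to the construction of $\Q$.
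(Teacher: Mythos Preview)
Your overall strategy coincides with the paper's: establish the intertwining $\hat H_m u^{ml}=u^{ml}\hat H_l$ (which is exactly what the paper hides behind the word ``well-definedness'' of $\hat H_\infty$), pass to the limit unitary group on $\H^\infty$, and then argue that conjugation preserves $\A_\hbar^\infty$. Your verification of the intertwining for the two elementary moves is correct and more explicit than the paper's.

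There is, however, a genuine gap in your closing step. You write that
\[
\mathrm{Ad}(e^{it\hat H^\infty})(F_Q^l(O))=F_Q^l(\tau^0_{l,\hbar}(t,O))\in\A_\hbar^\infty,
\]
but $F_Q^l$ is only defined on the operator system $\M_\hbar^l=\Ql(\M_0^l)$, not on $A_\hbar^l$. For the free Hamiltonian one computes (on the basis $\psi_a$) that $e^{it\hat H_l}\Ql(e^{2\pi ib\cdot}\otimes\hat\mu)e^{-it\hat H_l}=\Ql(e^{2\pi ib\cdot}\otimes\widehat{\mu'})$ with $\supp\mu'=\supp\mu-\tfrac{4\pi t}{\hbar}Db$, which for $t\neq0$ typically leaves $B^l$; hence $\tau^0_{l,\hbar}(t,O)\notin\M_\hbar^l$ and $F_Q^l(\tau^0_{l,\hbar}(t,O))$ is simply undefined. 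Invoking \cite[Theorem 29]{vNS20} only gives $\tau^0_{l,\hbar}(t,O)\in A_\hbar^l$, which is not enough, since the whole point of the paper is that $F_Q^l$ does \emph{not} extend to a well-defined map on the C*-algebra generated by $\M_\hbar^l$.

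The paper closes this gap by passing to the classical side via the exact Egorov identity \cite[Remark 27]{vNS20}, namely $e^{it\hat H_m}\Qm(f')e^{-it\hat H_m}=\Qm(\tau^0_m(t,f'))$, together with the classical compatibility $\tau^0_m(t,f\circ\gamma_{lm})=\tau^0_l(t,f)\circ\gamma_{lm}$. One then obtains
\[
\mathrm{Ad}(e^{it\hat H_\infty})\Q(f\circ\gamma_l)=\Q(\tau^0_l(t,f)\circ\gamma_l).
\]
Here $\tau^0_l(t,f)\in\A_0^l$, and by Lemma~\ref{eq:classical algebra vs operator system} (the equality $\A_0^\infty=\{g\circ\gamma_{l'}:l'\in\L,\ g\in\M_0^{l'}\}$) one can rewrite $\tau^0_l(t,f)\circ\gamma_l$ on a finer lattice $l'$ so that the right-hand side lies in $\A_\hbar^\infty$. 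In short: the missing ingredient is the exact intertwining of free quantum evolution with quantization, not merely preservation of $A_\hbar^l$, and the argument must route through $\Q$ and $\A_0^\infty$ rather than through $F_Q^l$.
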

\begin{proof}
	These Hamiltonians define a continuum Hamiltonian $\hat H_\infty$ in $\H^\infty$ with domain 
	\begin{align*}
		\dom \hat H_\infty:=\bigcup_{l\in\L} u^l(\dom \hat H_l)=\bigcup_{l\in\L} u^l(C^\infty(G^l)),
	\end{align*}
	namely $\hat{H}_\infty u^l\psi:=u^l\hat H_l\psi$. Straightforwardly, one checks well-definedness and essential self-adjointness. By \cite[Remark 27]{vNS20}, we have
	\begin{align*}
		e^{it\hat H_\infty}\Q(f\circ\gamma_l)e^{-it\hat H_\infty}u^m\psi&=u^me^{it\hat H_m}\Qm(f\circ\gamma_{lm})e^{-it\hat H_m}\psi\\
		&=u^m\Qm(\tau^0_m(t,f\circ\gamma_{lm}))\psi\\
		&=u^m\Qm(\tau^0_l(t,f)\circ\gamma_{lm})\psi\\
		&=\Q(\tau^0_l(t,f)\circ\gamma_l)u^m\psi.
	\end{align*}
	Therefore, $e^{it\hat H_\infty}\Q(f\circ\gamma_l)e^{-it\hat H_\infty}=\Q(\tau^0_l(t,f)\circ\gamma_l)\in A_\hbar^\infty$ for every $t$.
\end{proof}

\section{Strict deformation quantization}
\label{sct:SDQ}

In this section we prove our main theorem, which is formulated as follows.

\begin{thm}\label{thm:sdq infinity}
	Let $Q_0^\infty:=\id_{\A^\infty_0}$. The maps $\Q:\A_0^\infty\to A^\infty_\hbar$ for $\hbar\in I:=[-1,1]$ form a strict deformation quantization. That is, $\Q$ is a *-preserving injective linear map whose image is an algebra, and for all $f,g\in\A_0^\infty$ it holds that
	\begin{align*}
		\lim_{\hbar\to0}\norm{\Q(f)\Q(g)-\Q(fg)}=0&\qquad\text{(von Neumann's condition);}\\				\lim_{\hbar\to0}\norm{(-i\hbar)^{-1}[\Q(f),\Q(g)]-\Q(\{f,g\})}=0&\qquad\text{(Dirac's condition)};\\
		\text{the map}\quad I\to\R,\quad\hbar\mapsto\norm{\Q(f)}\quad\text{is continuous}&\qquad\text{(Rieffel's condition).}
	\end{align*}
\end{thm}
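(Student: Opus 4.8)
The plan is to reduce everything to the finite level via the formula $\norm{\Q(f\circ\gamma_l)}=\sup_{m\geq l}\norm{\Qm(f\circ\gamma_{lm})}=\lim_m\norm{\Qm(f\circ\gamma_{lm})}$ from \eqref{eq:Q^infty sup over Q^j}, together with the analogous identities $\Q(f_1\circ\gamma_l)\Q(f_2\circ\gamma_l)=F_Q^{l^2}\!\big(F_Q^{l^2l}(O_1)F_Q^{l^2l}(O_2)\big)$ and $\Qm(fg)=\Qm(F_C^{ml}(\text{product}))$ from the proof of Proposition \ref{prop:algebra}. First I would reduce to $f,g\in\M_0^l$ on a common lattice $l$ (Remark \ref{rema:supremum trick}). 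The linearity, $*$-preservation, and injectivity of $\Q$ are already established; the image being an algebra is Proposition \ref{prop:algebra}. So only von Neumann's, Dirac's, and Rieffel's conditions remain.

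For \textbf{von Neumann's condition}, using the above limit formula it suffices to bound $\norm{\Qm(f_1\circ\gamma_{lm})\Qm(f_2\circ\gamma_{lm})-\Qm((f_1f_2)\circ\gamma_{lm})}$ uniformly in $m\geq l^2$, and show this bound is $O(\hbar)$. Writing $f_j=\sum_k g_{j,k}\otimes\hat\mu_{j,k}$, a direct computation on the basis vectors $\psi_a$ using \eqref{eq:asymptotically Ruben} reduces the difference to an expression involving $g_{1,k}[y+\tfrac12\hbar\xi_1]g_{2,k'}[\cdots]-g_{1,k}[\cdots]g_{2,k'}[\cdots]$ and the operator $\Qm$ acts by translating the argument; the key point is that the discrepancy between the two orderings of the half-shifts is an $O(\hbar)$ perturbation of the arguments of smooth, hence (on the compact $G^m$) uniformly continuous, functions — crucially with a modulus of continuity that does not worsen as $m$ grows, because the relevant functions are pulled back along $\gamma^\conf_{lm}$ from the fixed lattice $l$, and the measures $S^{ml}_*\mu_{j,k}$ have total variation bounded by $\absnorm{\mu_{j,k}}$ and support shrinking inside $B^m$. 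I would make this precise by expanding $g_{1,k}\circ\gamma^\conf_{lm}$ in Fourier modes on $G^l$ and estimating mode by mode, which turns the bound into $\hbar$ times a sum controlled by $\sum_k\supnorm{g_{1,k}}\absnorm{\mu_{1,k}}\cdot\sum_{k'}\big(\text{Lipschitz-type norm of }g_{2,k'}\big)\absnorm{\mu_{2,k'}}$, uniformly in $m$.

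For \textbf{Dirac's condition}, the same reduction applies: I would compute $(-i\hbar)^{-1}[\Qm(f_1\circ\gamma_{lm}),\Qm(f_2\circ\gamma_{lm})]$ on the $\psi_a$ using \eqref{eq:asymptotically Ruben}, subtract $\Qm(\{f_1,f_2\}\circ\gamma_{lm})$ where the Poisson bracket on $X^l$ is the canonical one on $T^*G^l$, and Taylor-expand the half-shifts $h(2\pi\hbar(\cdot))$ and $g[\,\cdot+\tfrac12\hbar\xi]$ to first order in $\hbar$. The $O(1)$ terms cancel against the quantized Poisson bracket (this is the usual Weyl-calculus computation, valid here because on $\M_0^l$ the cut-off $\kappa$ in \cite[Def.~3.4.4]{Landsman98} is identically $1$, so $\Qm$ agrees with honest Weyl quantization on $\T^n$), and the remainder is $O(\hbar)$ with constants built from $\supnorm{\cdot}$ and Schwartz-seminorm data of the $g_k$ and second moments of the $\mu_k$ — again uniform in $m$ because the pullbacks along $\gamma_{lm}$ and the pushforwards $S^{ml}_*$ do not increase the relevant norms (the momentum variables get contracted by factors $d_i/d\leq 1$, if anything helping). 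Compatibility of the brackets under $\gamma_{lm}$ was already checked in the remark following the definition of the Poisson structure.

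For \textbf{Rieffel's condition}, $\hbar\mapsto\norm{\Q(f)}$ is a supremum over $m\geq l$ of the functions $\hbar\mapsto\norm{\Qm(f\circ\gamma_{lm})}$; each of these is continuous (indeed the finite-level quantization depends norm-continuously on $\hbar$, as one sees from \eqref{eq:asymptotically Ruben} and dominated convergence, the operators acting on the fixed Hilbert space $\H^m$), so the supremum is lower semicontinuous automatically. For upper semicontinuity I would use the uniform bound $\norm{\Qm(f\circ\gamma_{lm})-Q^m_{\hbar'}(f\circ\gamma_{lm})}\leq C|\hbar-\hbar'|$ with $C=C(f)$ \emph{independent of $m$} — this follows from the same Fourier-mode estimate as above applied to the difference of two quantization parameters — which makes the family $\{\hbar\mapsto\norm{\Qm(f\circ\gamma_{lm})}\}_m$ uniformly equicontinuous, hence their supremum is continuous. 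The \textbf{main obstacle} throughout is obtaining all these $\hbar$-estimates \emph{uniformly in the lattice $m$}; the single structural fact that makes this possible is that every quantity in sight is the pullback along $\gamma_{lm}$ of fixed data on $l$, with the momentum-rescaling maps $S^{ml}$ being contractions and the measures keeping bounded total variation while their supports are squeezed into $B^m$ — so I would isolate, as a lemma, a uniform-in-$m$ norm estimate of the form $\norm{Q^m_{\hbar}(f\circ\gamma_{lm})-Q^m_{\hbar'}(f\circ\gamma_{lm})}\leq|\hbar-\hbar'|\,\Phi(f)$ and a uniform-in-$m$ von Neumann/Dirac remainder estimate, and then all three conditions follow by passing to the sup/limit over $m$.
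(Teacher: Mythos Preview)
Your treatment of von Neumann's and Dirac's conditions is essentially the paper's own argument (Propositions~\ref{prop:von Neumann} and~\ref{prop:Dirac}): reduce to a common lattice $l$, compute on the basis $\psi_a$ via \eqref{eq:asymptotically Ruben}, and observe that the discrepancy is an $O(\hbar)$ shift of the arguments of fixed functions on $(\g^*)^l$, yielding an $m$-independent bound. That part is fine.

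Your Rieffel argument, however, has a genuine gap: both of its ingredients are false.

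\emph{(i)} The finite-level map $\hbar\mapsto\Qm(f\circ\gamma_{lm})$ is \emph{not} norm-continuous, and $\hbar\mapsto\norm{\Qm(f\circ\gamma_{lm})}$ is \emph{not} continuous away from $0$; the paper explicitly cites a counterexample \cite[Remark~23]{vNS20}. Concretely, take $G=\T$, $l$ a single edge, $\xi=2/5\in B$, and $f=1\otimes(1-e^{i\xi\cdot})\in\M_0^l$. Then $\norm{\Ql(f)}=\sup_{a\in\Z}\lvert 1-e^{2\pi i\hbar\xi a}\rvert=2\sup_{a\in\Z}\lvert\sin(\pi\hbar\xi a)\rvert$, which equals $2$ whenever $\hbar\xi$ is irrational but drops to $\sqrt{3}$ at $\hbar=5/6$ (where $\hbar\xi=1/3$). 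So the finite-level norm functions are only lower semicontinuous; this still yields lower semicontinuity of their supremum, but not continuity.

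\emph{(ii)} The uniform Lipschitz bound $\norm{Q^m_\hbar(f\circ\gamma_{lm})-Q^m_{\hbar'}(f\circ\gamma_{lm})}\leq C(f)\lvert\hbar-\hbar'\rvert$ does \emph{not} follow from the von Neumann estimate and is in fact false. In von Neumann's condition the arguments of $h_j$ differ by $\pi\hbar b_j$ with $b_j$ \emph{fixed}; here they differ by $2\pi(\hbar-\hbar')X$ with $X=\gamma^\mom_{lm}(a)+\tfrac12 b$ ranging over an \emph{unbounded} set. For the same $f$ as above one gets $\norm{Q^m_\hbar(f\circ\gamma_{lm})-Q^m_{\hbar'}(f\circ\gamma_{lm})}=\sup_a\lvert e^{2\pi i\hbar'\xi X_a}-e^{2\pi i\hbar\xi X_a}\rvert$, and for every $\hbar\neq\hbar'$ with $(\hbar-\hbar')\xi$ irrational this supremum is already $2$ at $m=l$. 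There is no equicontinuity to exploit.

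The paper's proof of Rieffel's condition is accordingly of a different nature and splits into two cases. At $\hbar_1=0$ (Proposition~\ref{prop:Rieffel0}) one partitions $G^l$ into Voronoi-type cells, localizes an almost-maximizing $\psi$ to one cell, lifts it to $L^2(\g^m)$, and compares with the translation-operator expression for $\supnorm{f}$ in Lemma~\ref{lem:supnorm from operators}. Away from $0$ (Proposition~\ref{prop:Rieffel1}) the argument genuinely \emph{uses} the continuum limit: passing to the refinement $l^R$ makes the shifts $S^{ml}(\hbar\xi)$ small, and one builds a diffeomorphism $F$ on (most of) $G^m$ satisfying $F(q+S^{ml}(\hbar_1\xi))=F(q)+S^{ml}(\hbar\xi)$, which converts an almost-maximizer for $Q^m_\hbar$ into one for $Q^m_{\hbar_1}$. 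This rescaling trick is precisely why Rieffel's condition holds in the limit although it fails on each fixed lattice; your proposal misses this mechanism.
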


For readability, the proof of Theorem \ref{thm:sdq infinity} is split up into Propositions \ref{prop:star-preserving}, \ref{prop:injective}, \ref{prop:von Neumann}, \ref{prop:Dirac}, \ref{prop:Rieffel0}, and \ref{prop:Rieffel1}.

\begin{prop}\label{prop:star-preserving}
	The map $\Q:\A_0^\infty\to\A_\hbar^\infty$ is linear and *-preserving for all $\hbar\in I$.
\end{prop}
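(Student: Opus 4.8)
The plan is to reduce the claim to the corresponding statement at finite level, where $Q^l_\hbar$ is manifestly linear and $*$-preserving by its defining formula \eqref{eq:Ql}. Linearity of $\Q$ is immediate: by Remark \ref{rema:supremum trick} any two elements of $\A_0^\infty$ can be written with a common lattice, say as $f_1\circ\gamma_l$ and $f_2\circ\gamma_l$ with $f_1,f_2\in\M_0^l$, and then $\lambda_1(f_1\circ\gamma_l)+\lambda_2(f_2\circ\gamma_l)=(\lambda_1 f_1+\lambda_2 f_2)\circ\gamma_l$, so that evaluating $\Q$ on $u^m\psi$ and using linearity of $\Qm$ together with linearity of $F^{ml}_C=\gamma_{lm}^*|_{\M^l_0}$ gives the result. (Here one uses that $\M_0^l$ is a linear space, which is part of its definition as an operator system.)

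For the $*$-preserving property, first I would record that $\Q(f\circ\gamma_l)^* = \Q((f\circ\gamma_l)^*) = \Q(\bar f\circ\gamma_l)$, since complex conjugation commutes with precomposition by $\gamma_l$; thus it suffices to check $\Q(\bar f\circ\gamma_l)=\Q(f\circ\gamma_l)^*$. By definition $\Q(f\circ\gamma_l)$ acts on the dense subspace $\cup_m u^m\H^m$ by $\Q(f\circ\gamma_l)u^m\psi = u^m\Qm(f\circ\gamma_{lm})\psi$, and it is bounded, so it suffices to compute matrix elements $\p{\Q(f\circ\gamma_l)u^m\psi}{u^m\phi}$ for $\psi,\phi\in\H^m$ (taking a common index $m\geq l$ for both vectors, again justified by Remark \ref{rema:supremum trick}, and using that the maps $u^m$ are isometric so that the inner product on $\H^\infty$ restricts correctly). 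This reduces the identity to $\p{\Qm(f\circ\gamma_{lm})\psi}{\phi}=\p{\psi}{\Qm(\overline{f\circ\gamma_{lm}})\phi}$, i.e. to $\Qm$ being $*$-preserving on $\M_0^m$, together with the fact that $F^{ml}_C(\bar f)=\overline{F^{ml}_C(f)}$ (clear since $\gamma_{lm}^*$ commutes with conjugation).

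It then remains to verify that each $\Qm:\M_0^m\to\mB(\H^m)$ is $*$-preserving. On a generator $g\otimes\hat\mu$ with $\supp\mu\subseteq B^m$, formula \eqref{eq:Ql} gives an explicit integral operator, and a short computation — changing variables $\xi\mapsto-\xi$ in the defining integral, using $\overline{g\otimes\hat\mu}=\bar g\otimes\widehat{\mu'}$ where $\mu'$ is the pushforward of $\bar\mu$ under $\xi\mapsto-\xi$ (still supported in $B^m$), and comparing with the adjoint of the integral kernel on $L^2(G^m)$ with its translation-invariant Haar measure — shows $\Qm(g\otimes\hat\mu)^* = \Qm(\overline{g\otimes\hat\mu})$; extending by linearity covers all of $\M_0^m$. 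This is the only place where one actually touches the concrete formula, and it is a routine kernel computation; I expect the only mild subtlety to be bookkeeping the conjugation on the measure $\mu$ versus on the Fourier transform $\hat\mu$, and making sure the support condition $\supp\mu\subseteq B^m$ is preserved under $\xi\mapsto-\xi$ (which it is, as $B$ is symmetric about $0$). No genuine obstacle is anticipated; the content of the proposition is essentially that the finite-level properties pass to the limit, which the well-definedness argument for $\Q$ already set up.
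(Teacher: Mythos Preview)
Your proof is correct and follows essentially the same route as the paper: reduce to the finite-level statement that $\Qm$ is $*$-preserving by testing matrix elements on the dense subspace $\bigcup_m u^m\H^m$, using isometry of the $u^m$. The only cosmetic difference is that the paper takes test vectors $u^m\psi^m$ and $u^n\psi^n$ from possibly different lattices and passes to a common $p\geq l,m,n$, whereas you invoke (the Hilbert-space analogue of) Remark~\ref{rema:supremum trick} to assume from the start that both lie in the same $\H^m$; these are equivalent, and your additional sketch of the kernel computation for $\Qm(g\otimes\hat\mu)^*$ simply spells out what the paper defers to \cite{vNS20} or to~\eqref{eq:Ql}.
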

\begin{proof}
	Linearity is obvious, so we are left to prove that $\Q(f)^*=\Q(\overline{f})$ for $f\in\A_0^\infty$. Given $f\circ\gamma_l\in\A_0^\infty$ and $u^m\psi^m,u^n\psi^n\in\H^\infty$, choose $p\geq l,m,n$. By using that $\Ql:\A_0^l\to \A_\hbar^l$ is star-preserving (which can be derived from \cite[Propostion 18(1)]{vNS20}, or directly from \eqref{eq:Ql}), we get
		\begin{align*}
			\p{\Q(f\circ\gamma_l)u^m\psi^m}{u^n\psi^n}
			&=\p{u^pQ_\hbar^p(f\circ \gamma_{lp})u^{pm}\psi^m}{u^pu^{pn}\psi^n}\\
			&=\p{\Qp(f\circ\gamma_{lp})u^{pm}\psi^m}{u^{pn}\psi^n}\\
			&=\p{u^{pm}\psi^m}{\Qp(\overline{f}\circ\gamma_{lp})u^{pn}\psi^n}\\
			&= \p{u^m\psi^m}{\Q(\overline{f}\circ\gamma_l)u^n\psi^n}.
		\end{align*}
		Therefore $\Q(f\circ\gamma_l)^*$ equals $\Q(\overline{f}\circ\gamma_l)$ on a dense subset of $\H^\infty$, hence on the whole of $\H^\infty$ by boundedness.
\end{proof}

\begin{prop}\label{prop:injective}
	The map $\Q:\A_0^\infty\to A_\hbar^\infty$ is injective for all $\hbar\in I$.
\end{prop}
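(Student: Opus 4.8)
The plan is to reduce injectivity of $\Q$ on the direct limit $\A_0^\infty$ to injectivity of the finite-level maps $\Ql$ on the operator systems $\M_0^l$, which the paper has already asserted (and which follows from the explicit formula \eqref{eq:asymptotically Ruben}). First I would take $f\circ\gamma_l\in\A_0^\infty$ with $\Q(f\circ\gamma_l)=0$ and, using the lemma relating $\A_0^\infty$ to the operator systems, assume without loss of generality (after passing to a finer lattice $l^R$ via Definition \ref{def:l_R} and the identity \eqref{eq:F_C^ml in termen van S^ml}) that $f\in\M_0^l$. Then by \eqref{eq:Q^infty sup over Q^j} we have $\norm{\Qm(f\circ\gamma_{lm})}=0$ for every $m\geq l$; in particular, taking $m=l$, we get $\Ql(f)=0$, and injectivity of $\Ql$ on $\M_0^l$ forces $f=0$, hence $f\circ\gamma_l=0$.

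The one thing that needs a genuine argument is the injectivity of $\Ql$ on $\M_0^l$ itself, since the excerpt merely states it. Here I would argue directly from \eqref{eq:asymptotically Ruben}: an element of $\M_0^l$ has the form $\sum_k e^{2\pi i b_k\cdot}\otimes h_k$ with $h_k=\hat\mu_k$ and $\supp(\mu_k)\subseteq B^l$, and applying the operator to the orthonormal basis vectors $\psi_a$ ($a\in(\Z^n)^l$) yields $\sum_k h_k(2\pi\hbar(a+\tfrac12 b_k))\psi_{a+b_k}$. Grouping by the value of $a+b_k$ and using orthogonality of the $\psi$'s, vanishing of the operator gives, for each fixed $a$ and each fixed shift $c$, that $\sum_{k:\,b_k=c-a}h_k(2\pi\hbar(a+\tfrac12 b_k))=0$; letting $a$ range, the functions $h_k$ (Fourier transforms of measures supported in the ball $B^l$ where $x\mapsto[x]$ is injective, so that distinct $b_k$ genuinely give distinct exponential characters) are forced to vanish, and one recovers $f=0$. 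For $\hbar=0$ the map $\Ql$ is literally evaluation-at-$0$ composed with the obvious identifications, and injectivity on $\M_0^l$ is immediate since $f\mapsto f$ is the identity there; in the statement we have set $Q_0^\infty=\id$, so the $\hbar=0$ case is trivial and can be dispatched in one line.

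The main obstacle is organizational rather than deep: one must be careful that the reduction "assume $f\in\M_0^l$'' is legitimate, i.e. that every element of $\A_0^\infty$ really is of the form $(\text{element of }\M_0^{l^R})\circ\gamma_{l^R}$ and that $\Q$ respects this rewriting, which is exactly what the lemma preceding Remark \ref{rema:supremum trick} together with $\Q\circ F_C^l = F_Q^l\circ\Ql$ provides. I expect the write-up to be short: invoke \eqref{eq:Q^infty sup over Q^j} to pass from $\Q$ to $\Ql$, then cite or quickly re-derive injectivity of $\Ql$ on $\M_0^l$ from \eqref{eq:asymptotically Ruben}, noting the crucial point that the support condition $\supp(\mu)\subseteq B^l$ makes the characters $e^{2\pi i b\cdot}$ linearly independent in a way that survives quantization for every $\hbar\in[-1,1]$.
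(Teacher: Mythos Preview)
Your proposal is correct and follows essentially the same route as the paper: reduce to $f\in\M_0^l$, deduce $\Ql(f)=0$, and use the support condition $\supp(\mu_k)\subseteq B^l$ to conclude $f=0$. The only cosmetic differences are that the paper obtains $\Ql(f)=0$ by applying $\Q(f\circ\gamma_l)$ directly to vectors $u^l\psi$ (rather than going through \eqref{eq:Q^infty sup over Q^j}) and phrases the last step via the integral representation $\sum_k\int d\mu_k(\xi)\,M_{g_k(\cdot+\hbar\xi/2)}L^*_{[\hbar\xi]}$ together with the injectivity of $\xi\mapsto[\hbar\xi]$ on $B^l$, rather than via the basis formula \eqref{eq:asymptotically Ruben}.
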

\begin{proof}
	Suppose $\Q(f\circ\gamma_l)=0$ for some $f\in\M_0^l$. Then
		\begin{align*}
			0&=\Q(f\circ\gamma_l)u^l\psi=u^l\Ql(f)\psi
		\end{align*}
		for all $\psi\in\H^l$. So $0=\Ql(f)=\sum_k\int d\mu_k(\xi) M_{g_k(\cdot+ \hbar\xi/2)}L^*_{[\hbar\xi]}$. Since $|\hbar|\leq 1$, we find that we always have $\hbar\xi\in B^l$ under the integral, and must have $f=0$.
\end{proof}

\begin{prop}\label{prop:von Neumann}
	\emph{(von Neumann's condition)} For each $f,g\in\A_0^\infty$, we have
			\begin{align*}
				\lim_{\hbar\to0}\norm{\Q(f)\Q(g)-\Q(fg)}=0.
			\end{align*}
\end{prop}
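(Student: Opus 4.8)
The plan is to collapse the continuum norm to a supremum of finite‑lattice norms, reduce to elementary generators, and then carry out the standard finite‑dimensional `von Neumann' estimate, keeping track that the constant stays bounded under refinement.

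First I would fix a common lattice. By \eqref{eq:classical algebra vs operator system} and Remark \ref{rema:supremum trick}, write $f=f_1\circ\gamma_l$ and $g=f_2\circ\gamma_l$ with $f_1,f_2\in\M_0^l$ for one $l\in\L$. Both $(f_1,f_2)\mapsto\Q(f)\Q(g)$ and $(f_1,f_2)\mapsto\Q(fg)$ are bilinear, and $\M_0^l$ is spanned by elements $e^{2\pi ib\cdot}\otimes\hat\mu\in\A_0^l$ with $b\in(\Z^n)^l$, $\supp\mu\subseteq B^l$; hence it suffices to prove a bound $\norm{\Q(f)\Q(g)-\Q(fg)}\le\pi|\hbar|\,C$, with $C$ depending only on the generators, when $f_1=e^{2\pi ib_1\cdot}\otimes\hat\mu_1$ and $f_2=e^{2\pi ib_2\cdot}\otimes\hat\mu_2$ are of this form.

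Next I would descend to finite lattices. For $p\ge l^2$, applying the definition of $\Q$ twice gives $\Q(f)\Q(g)u^p\psi=u^p\,\Qp(f_1\circ\gamma_{lp})\Qp(f_2\circ\gamma_{lp})\psi$, while $fg=\big((f_1\circ\gamma_{ll^2})(f_2\circ\gamma_{ll^2})\big)\circ\gamma_{l^2}$ with the bracket lying in $\M_0^{l^2}$ (its measure is a convolution of two measures supported in $B_{1/4}(0_\g)^{l^2}$, hence supported in $B^{l^2}$), so $\Q(fg)u^p\psi=u^p\,\Qp\big((f_1\circ\gamma_{lp})(f_2\circ\gamma_{lp})\big)\psi$. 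Since each $u^p$ is isometric and $\bigcup_{p\ge l^2}u^p\H^p$ is dense in $\H^\infty$ (the lattices $\ge l^2$ being cofinal), this yields
\begin{align*}
	\norm{\Q(f)\Q(g)-\Q(fg)}\ \le\ \sup_{p\ge l^2}\ \norm{\Qp(f_1\circ\gamma_{lp})\Qp(f_2\circ\gamma_{lp})-\Qp\big((f_1\circ\gamma_{lp})(f_2\circ\gamma_{lp})\big)}.
\end{align*}

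It then remains to bound the $p$‑th term uniformly in $p$, which is the real work. By \eqref{eq:F_C^ml in termen van S^ml} and \eqref{eq:S and T} one has $f_i\circ\gamma_{lp}=e^{2\pi i\beta_i\cdot}\otimes(\hat\mu_i\circ\gamma^\mom_{lp})$ with $\beta_i:=T^{pl}(b_i)\in(\Z^n)^p$, so \eqref{eq:asymptotically Ruben} exhibits the operator inside the norm as the unitary shift $\psi_a\mapsto\psi_{a+\beta_1+\beta_2}$ composed with multiplication, on $\psi_a$ ($a\in(\Z^n)^p$), by a difference of two products of values of $\hat\mu_1\circ\gamma^\mom_{lp}$ and $\hat\mu_2\circ\gamma^\mom_{lp}$. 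The difficulty is that $\norm{\beta_i}=\norm{T^{pl}(b_i)}$ grows without bound as $p$ is refined, so these arguments are a priori uncontrolled; this is tamed by pushing the arguments through $\gamma^\mom_{lp}$ and using $\gamma^\mom_{lp}\circ T^{pl}=\id_{\g^l}$ from \eqref{eq:S and T}. Writing $\alpha:=\gamma^\mom_{lp}(a)\in\g^l$, the multiplier reduces to
\begin{align*}
	\hat\mu_2\big(2\pi\hbar(\alpha+\tfrac12 b_2)\big)\,\hat\mu_1\big(2\pi\hbar(\alpha+b_2+\tfrac12 b_1)\big)-\hat\mu_1\big(2\pi\hbar(\alpha+\tfrac12(b_1+b_2))\big)\,\hat\mu_2\big(2\pi\hbar(\alpha+\tfrac12(b_1+b_2))\big),
\end{align*}
which involves only the fixed data on $l$. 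Now $\supnorm{\hat\mu_i}\le\absnorm{\mu_i}$ and $\hat\mu_i$ is Lipschitz with constant $\le\int\norm{\zeta}\,d|\mu_i|(\zeta)$; since the two inner arguments differ from the common argument $2\pi\hbar(\alpha+\tfrac12(b_1+b_2))$ by $\pi\hbar b_1$ and $\pi\hbar b_2$ up to sign, a two‑term telescoping bounds the multiplier by $\pi|\hbar|\,C$ uniformly in $a$ and $p$, with $C$ a fixed multiple of $\norm{b_1}\absnorm{\mu_1}\int\norm{\zeta}\,d|\mu_2|(\zeta)+\norm{b_2}\absnorm{\mu_2}\int\norm{\zeta}\,d|\mu_1|(\zeta)$. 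Taking the supremum over $p$ and summing over the finitely many elementary pieces of $f$ and $g$ gives $\norm{\Q(f)\Q(g)-\Q(fg)}\le\pi|\hbar|\,C(f,g)\to0$. The main obstacle is thus precisely this uniformity in the ultraviolet: on a fixed lattice the estimate is routine, and the reason it survives $p\to\infty$ is the compatibility $\gamma^\mom_{lp}\circ T^{pl}=\id$ together with the operator‑system restriction $\supp\mu_i\subseteq B^l$ keeping every operator in play inside the $\M^p_\hbar$ (so that \eqref{eq:asymptotically Ruben} applies) for all $p\ge l^2$ — an instance of the two limits cooperating, as advertised in the introduction.
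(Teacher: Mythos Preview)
Your proof is correct and follows essentially the same route as the paper's: reduce by bilinearity and Remark~\ref{rema:supremum trick} to elementary tensors $e^{2\pi ib_i\cdot}\otimes h_i$, use \eqref{eq:asymptotically Ruben} together with $\gamma^\mom_{lp}\circ T^{pl}=\id$ from \eqref{eq:S and T} to reduce the multiplier to data on the fixed lattice $l$, and bound the resulting difference by $\pi|\hbar|$ times a Lipschitz/derivative constant independent of $p$. Your passage through $p\ge l^2$ to ensure $(f_1\circ\gamma_{lp})(f_2\circ\gamma_{lp})\in\M_0^p$ is a bit more careful than the paper, which silently extends $\Qm$ to all of $\A_0^m$ via \eqref{eq:asymptotically Ruben}, but the substance is identical.
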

\begin{proof}
	The proof is based on that of \cite[Theorem 22(2)]{vNS20}, but more complicated because $\Q$ is defined on $\H^\infty$, which includes all $\H^m$. Therefore, estimating an operator norm in $\mB(\H^\infty)$ amounts to taking a supremum over $m$. For two lattices $l\leq m\in\L$ and a function $e^{2\pi i b\cdot}\otimes h\in\M_0^l$ we have,
		\begin{align*}
			(e^{2\pi i b\cdot}\otimes h)\circ\gamma_{lm}=e^{2\pi ib\cdot \gamma^{\text{conf}}_{lm}(\cdot)}\otimes h\circ\gamma_{lm}^{\text{mom}}
			=e^{2\pi iT^{ml}(b)\cdot}\otimes h\circ\gamma_{lm}^{\text{mom}}
		\end{align*}
		where we used \eqref{eq:S and T}. Combining \eqref{eq:asymptotically Ruben} with \eqref{eq:S and T}, we find
		\begin{align*}
			\Qm((e^{2\pi i b\cdot}\otimes h)\circ\gamma_{lm})\psi_a&=h(\gamma^\mom_{lm}(2\pi\hbar(a+\tfrac{1}{2}T^{ml}(b))))\psi_{a+T^{ml}(b)}\\
			&=h(2\pi\hbar(\gamma^\mom_{lm}(a)+\tfrac{1}{2}b))\psi_{a+T^{ml}(b)}\,.
		\end{align*}
		Fix $f_1=e^{2\pi ib_1\cdot}\otimes h_1,f_2=e^{2\pi ib_2\cdot}\otimes h_2\in\M_0^l$ for an $l\in\L$. By bilinearity and Remark \ref{rema:supremum trick} it suffices to prove the proposition for $f=f_1\circ\gamma_l$ and $g=f_2\circ\gamma_l$. We note that if $O\psi_a=F(a)\psi_{a+b}$ for some $F\in C_b((\Z^n)^m)$ and $b\in(\Z^n)^m$, then clearly $\norm{O}=\sup_{a\in(\Z^n)^m}\quadnorm{O\psi_a}$. We find, 
		\begin{align*}
			&\sup_{m\in\L_{\geq l}}\sup_{a\in(\Z^n)^{m}}\quadnorm{\left(\Qm(f_1f_2\circ\gamma_{lm})-\Qm(f_1\circ\gamma_{lm})\Qm(f_2\circ\gamma_{lm})\right)\psi_a}\\
			&\quad\leq \sup_{m\in\L_{\geq l}}\sup_{a\in(\Z^n)^m}\Big|h_1(2\pi\hbar(\gamma_{lm}^\mom(a)+\tfrac{1}{2}b_1+\tfrac{1}{2}b_2))h_2(2\pi\hbar(\gamma_{lm}(a)+\tfrac{1}{2}b_1+\tfrac{1}{2}b_2))\\
			&\qquad-h_1(2\pi\hbar(\gamma_{lm}^\mom(a)+\tfrac{1}{2}b_1+b_2))h_2(2\pi\hbar(\gamma^\mom_{lm}(a)+\tfrac{1}{2}b_2))\Big|\\
			&\quad\leq \supnorm{h_1}\pi|\hbar|\supnorm{\partial_{b_1} h_2} + \pi|\hbar|\supnorm{\partial_{b_2} h_1}\supnorm{h_2}\to 0\quad(\hbar\to0),
		\end{align*}
		which by \eqref{eq:Q^infty sup over Q^j} completes the proof.
\end{proof}

\begin{prop}\label{prop:Dirac}
	\emph{(Dirac's condition)} For each $f,g\in\A_0^\infty$, we have
			\begin{align*}
				\lim_{\hbar\to0}\norm{(-i\hbar)^{-1}[\Q(f),\Q(g)]-\Q(\{f,g\})}=0.
			\end{align*}
\end{prop}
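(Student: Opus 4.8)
The plan is to follow the three-move pattern of Proposition~\ref{prop:von Neumann}: reduce to generators, turn the operator-norm estimate into a scalar estimate, and finish with a second-order Taylor expansion; the genuinely new feature is controlling everything uniformly in the lattice $m$.

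\textbf{Step 1: reduction to generators.} Since $[\cdot,\cdot]$, $\{\cdot,\cdot\}$ and $\Q$ are (bi)linear, Remark~\ref{rema:supremum trick} lets me assume, exactly as in the proof of Proposition~\ref{prop:von Neumann}, that $f=f_1\circ\gamma_l$ and $g=f_2\circ\gamma_l$ with $f_j=e^{2\pi i b_j\cdot}\otimes h_j\in\M_0^l$, $h_j=\hat\mu_j$, $\supp(\mu_j)\subseteq B^l$; then $h_j$ is smooth and $h_j$ and all its directional derivatives are bounded on $(\g^*)^l$. A direct computation of the canonical Poisson bracket on $T^*G^l$ gives
\begin{align*}
\{f_1,f_2\}=2\pi i\, e^{2\pi i(b_1+b_2)\cdot}\otimes\big(h_2\,\partial_{b_2}h_1-h_1\,\partial_{b_1}h_2\big),
\end{align*}
whose spectral measure is still compactly supported, so by the scaling argument of the proof of \eqref{eq:classical algebra vs operator system} we have $\{f_1,f_2\}\circ\gamma_{ll^2}\in\M_0^{l^2}$ (notation of Definition~\ref{def:l_R}). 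As $\{m\in\L:m\geq l^2\}$ is cofinal, \eqref{eq:Q^infty sup over Q^j} reduces Dirac's condition to showing
\begin{align*}
\sup_{m\geq l^2}\norm{(-i\hbar)^{-1}\big[\Qm(f_1\circ\gamma_{lm}),\Qm(f_2\circ\gamma_{lm})\big]-\Qm(\{f_1,f_2\}\circ\gamma_{lm})}\to0\quad(\hbar\to0).
\end{align*}

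\textbf{Step 2: reduction to a scalar supremum.} I would then feed in the identity established inside the proof of Proposition~\ref{prop:von Neumann},
\begin{align*}
\Qm\big((e^{2\pi i b\cdot}\otimes h)\circ\gamma_{lm}\big)\psi_a=h\big(2\pi\hbar(\gamma^\mom_{lm}(a)+\tfrac12 b)\big)\,\psi_{a+T^{ml}(b)}\qquad(a\in(\Z^n)^m),
\end{align*}
together with the fact that $a\mapsto a+T^{ml}(b_1+b_2)$ permutes the orthonormal basis $\{\psi_a\}$ of $\H^m$. Composing this formula twice and writing $w:=\gamma^\mom_{lm}(a)+\tfrac12(b_1+b_2)\in(\g^*)^l$ and
\begin{align*}
C_\hbar(w):=h_2\big(2\pi\hbar(w-\tfrac12 b_1)\big)h_1\big(2\pi\hbar(w+\tfrac12 b_2)\big)-h_1\big(2\pi\hbar(w-\tfrac12 b_2)\big)h_2\big(2\pi\hbar(w+\tfrac12 b_1)\big),
\end{align*}
one finds that the operator in the last display of Step 1 sends $\psi_a$ to $\big[(-i\hbar)^{-1}C_\hbar(w)-2\pi i(h_2\partial_{b_2}h_1-h_1\partial_{b_1}h_2)(2\pi\hbar w)\big]\psi_{a+T^{ml}(b_1+b_2)}$; hence its norm equals the supremum over $a$ of the modulus of this scalar, which depends on $a$ and $m$ only through $w\in(\g^*)^l$. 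So it is enough to prove that $(-i\hbar)^{-1}C_\hbar(w)-2\pi i(h_2\partial_{b_2}h_1-h_1\partial_{b_1}h_2)(2\pi\hbar w)\to0$ uniformly in $w\in(\g^*)^l$ -- a finite-dimensional statement in which $m$ no longer appears.

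\textbf{Step 3: Taylor expansion and the main obstacle.} Finally I would Taylor-expand each of the four factors in $C_\hbar(w)$ to second order about the common base point $2\pi\hbar w$: the zeroth-order terms cancel, the first-order terms add up to $2\pi\hbar(h_2\partial_{b_2}h_1-h_1\partial_{b_1}h_2)(2\pi\hbar w)$ (which, since $(-i\hbar)^{-1}\cdot2\pi\hbar=2\pi i$, exactly cancels the Poisson-bracket term), and the Lagrange remainders are bounded by $C|\hbar|^2$ with $C$ depending only on $b_1,b_2$ and on $\supnorm{h_j},\supnorm{\partial_{b_k}h_j},\supnorm{\partial_{b_k}^2 h_j}$. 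These sup-norms are finite -- because $\mu_1,\mu_2$ are compactly supported -- and independent of $w$, $a$ and $m$, so the displayed difference is $O(\hbar)$ uniformly, which would complete the proof. The algebra identifying the $O(\hbar)$ part of the commutator with $\Q(\{f_1,f_2\})$ is essentially the Weyl-quantization computation on $\T^n$ of \cite{vNS20} (cf.\ \cite{Landsman98}); the main obstacle is the interaction with the second limit, namely (i) keeping the infrared regularization intact -- which is why one passes to the cofinal family $m\geq l^2$, so that $\{f_1,f_2\}\circ\gamma_{lm}$ genuinely lies in $\M_0^m$ -- and, the real crux, (ii) verifying that the Taylor remainder is uniform in the lattice $m$ and the multi-index $a$, which works precisely because it is controlled by fixed lattice-$l$ data ($h_1,h_2$ and their first two directional derivatives, bounded on all of $(\g^*)^l$).
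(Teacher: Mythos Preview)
Your proposal is correct and follows essentially the same route as the paper: reduce to generators $e^{2\pi ib_j\cdot}\otimes h_j$, use \eqref{eq:asymptotically Ruben} to turn the operator norm into a scalar supremum depending on $m$ and $a$ only through $\gamma_{lm}^\mom(a)$, and then Taylor-expand; the paper merely writes ``similar to the proof of Proposition~\ref{prop:von Neumann}'' and displays the resulting scalar expression, whereas you spell out the expansion and are slightly more careful in passing to the cofinal family $m\geq l^2$ so that $\{f_1,f_2\}\circ\gamma_{lm}\in\M_0^m$ before applying $\Qm$.
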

\begin{proof}
	Similar to the proof of Proposition \ref{prop:von Neumann}, 
	we obtain
		\begin{align*}
			&\sup_{m\in\L_{\geq l}}\sup_{a\in(\Z^n)^{m}}\norm{\left(\frac{i}{\hbar}[\Qm(f_1\circ\gamma_{lm}),\Qm(f_2\circ\gamma_{lm})]-\Qm(\{f_1\circ\gamma_{lm},f_2\circ\gamma_{lm}\})\right)\psi_a}\\
			&\quad\leq \sup_{m\in\L_{\geq l}}\sup_{a\in(\Z^n)^{m}}\Big|\frac{i}{\hbar}\Big(h_1\big(2\pi\hbar(\gamma_{lm}^\mom(a)+b_2+\tfrac12 b_1)\big)h_2\big(2\pi\hbar(\gamma_{lm}^\mom(a)+\tfrac12 b_2)\big)\\
			&\qquad\quad-h_1\big(2\pi\hbar(\gamma_{lm}^\mom(a)+\tfrac12 b_1)\big)h_2\big(2\pi\hbar(\gamma_{lm}^\mom(a)+b_1+\tfrac12 b_2)\big)\Big)\\
			&\qquad -2\pi i\Big(\partial_{b_2}h_1\cdot h_2-h_1\cdot\partial_{b_1}h_2\Big)\Big(2\pi\hbar(\gamma_{lm}^\mom(a)+\tfrac12(b_1+b_2))\Big)\Big|\\
			&\quad\to0\quad(\hbar\to0),
		\end{align*}
		which by \eqref{eq:Q^infty sup over Q^j} completes the proof.
\end{proof}

\subsection{Rieffel's condition at zero}

Rieffel's condition is all that remains to prove in order to establish our main theorem. Its proof is by far the most difficult part of this paper, and is split into two parts, the first part giving continuity around $\hbar=0$ and the second part giving continuity elsewhere. 

For the first part we will use the following lemma.

\begin{lem}\label{lem:supnorm from operators}
	Let $f=\sum_{k=1}^Kg_k\otimes \hat\mu_k\in\M_0^l$ for $l\in\L$. 
	For every $m\geq l$, we have
	\begin{align*}
		\supnorm{f}=\supnorm{F_C^{ml}(f)}=\sup_{q\in G^m}\norm{\sum_{k=1}^K\int_{\g^l}d\mu_k(\xi)g_k(\gamma_{lm}^\conf(q))L^*_{S^{ml}(\hbar\xi)}},
	\end{align*}
	where, on the right hand side, the norm is the operator norm on $\mB(L^2(\g^m))$ and the integral is interpreted strongly.
\end{lem}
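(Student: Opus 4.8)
The plan is to prove the two equalities in turn. The first equality $\supnorm{f}=\supnorm{F_C^{ml}(f)}$ is really a statement about the map $\gamma_{lm}^\conf$ and the supports of the measures. Recall from \eqref{eq:F_C^ml in termen van S^ml} that $F_C^{ml}(f)=\sum_k(g_k\circ\gamma^\conf_{lm})\otimes\widehat{S^{ml}_*\mu_k}$, so it suffices to show that evaluating the function $F_C^{ml}(f)$ over all of $X^m$ gives the same supremum as evaluating $f$ over all of $X^l$. Since $\gamma_{lm}=(\gamma^\conf_{lm},\gamma^\mom_{lm})$ is \emph{surjective} onto $X^l$ (it is built by composing the visibly surjective maps $\gamma_{\add}$ and $\gamma_{\sub}$), we immediately get $\supnorm{F_C^{ml}(f)}=\supnorm{f\circ\gamma_{lm}}=\supnorm f$; no support condition is even needed for this half. (One should double-check that $\gamma_{\add}$ and $\gamma_{\sub}$ are surjective: for $\gamma_{\sub}$, given $([x],v)$ pick $([x],[0])$ upstairs for the configuration part and $(v,v)$ for the momentum part, using $d_1+d_2=d$.)

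The second equality is the substantive one: it rewrites $\supnorm{f}$ as a supremum over $q\in G^m$ of operator norms on $\mB(L^2(\g^m))$. The idea is to recognize the right-hand side operator as (essentially) a Weyl-type quantization evaluated at a point, and to use that its norm is computed fibrewise. Fix $q\in G^m$ and abbreviate $c_k:=g_k(\gamma^\conf_{lm}(q))\in\C$. The operator $\sum_k\int d\mu_k(\xi)\,c_k\,L^*_{S^{ml}(\hbar\xi)}$ on $L^2(\g^m)$ is a superposition of translations, hence (via the Fourier transform on $\g^m$, which turns $L^*_{v}$ into the multiplication operator by $w\mapsto e^{-iv\cdot w}$) unitarily equivalent to multiplication by the function
\[
	w\longmapsto\sum_k c_k\int_{\g^l}d\mu_k(\xi)\,e^{-iS^{ml}(\hbar\xi)\cdot w}
	=\sum_k c_k\,\hat\mu_k\!\big(-\hbar\,(S^{ml})^{\!*}w\big)
\]
using $S^{ml}(\xi)\cdot w=\xi\cdot(S^{ml})^{*}w$. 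A multiplication operator on $L^2(\g^m)$ has operator norm equal to the sup norm of its (continuous) symbol, so the norm of the bracketed operator is $\sup_{w\in\g^m}\big|\sum_k c_k\,\hat\mu_k(-\hbar(S^{ml})^{*}w)\big|$. Now take the supremum over $q\in G^m$: since $q\mapsto\gamma^\conf_{lm}(q)$ is onto $G^l$, this is $\sup_{q'\in G^l}\sup_{w\in\g^m}\big|\sum_k g_k(q')\,\hat\mu_k(-\hbar(S^{ml})^{*}w)\big|$.

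It remains to identify this double supremum with $\supnorm f=\sup_{(q',v)\in G^l\times(\g^*)^l}|\sum_k g_k(q')\hat\mu_k(v)|$. For this I need that as $w$ ranges over $\g^m$, the point $-\hbar(S^{ml})^{*}w$ ranges over a set that is ``large enough'' to recover the supremum of the (continuous, bounded) symbol $v\mapsto\sum_k g_k(q')\hat\mu_k(v)$. This is exactly where the hypotheses $\supp(\mu_k)\subseteq B^l$ and $|\hbar|\le1$ enter — and this is the step I expect to be the main obstacle. The map $(S^{ml})^{*}:\g^m\to\g^l$ need not be surjective (e.g. $S^{\add}(\xi)=(\xi,0)$ has adjoint $(w_1,w_2)\mapsto w_1$, which \emph{is} onto, but $S^{\sub}$ contributes constraints), so one cannot simply say $-\hbar(S^{ml})^{*}w$ covers all of $\g^l$. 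The resolution should be: $\hat\mu_k$ is the Fourier transform of a measure supported in $B^l=B_{1/2}(0)^l$, hence $\hat\mu_k$ depends on $v$ only through $v\bmod$ (something), or more precisely the relevant quantity is periodic / the symbol factors through the quotient by $\ker(S^{ml})^{*}{}^{\perp}$ in a way compatible with the $1/2$-support. Concretely, $(S^{ml})^{*}$ composed with the inclusion $\g^m\hookrightarrow(\g^*)^m$, together with $|\hbar|\le1$, should surject onto a fundamental domain large enough that $\{-\hbar(S^{ml})^{*}w:w\in\g^m,|\hbar|\le1\}$ meets every relevant ``slice'', and because $\supp\mu_k\subseteq B$ the function $\hat\mu_k$ restricted to that slice already attains the full sup — I would verify this first on the generators $S^{\add},S^{\sub}$ (where $d_1/d,d_2/d\in(0,1)$ and $\xi\cdot(d_j/d)w$ sweeps all of $\R$ as $w$ sweeps $\R$) and then propagate through the recursive composition. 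Granting that, the double supremum equals $\supnorm f$ and the lemma follows.
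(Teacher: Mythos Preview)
Your core idea --- conjugate by the Fourier transform on $\g^m$ so that $L^*_{S^{ml}(\hbar\xi)}$ becomes multiplication by $e^{-i\hbar\,\xi\cdot (S^{ml})^{*}w}$, then read off the operator norm as a sup norm --- is exactly the paper's approach (the paper phrases it as ``using Parseval's identity twice''). The paper, however, first observes that by \eqref{eq:F_C^ml in termen van S^ml} the right-hand side is literally the $l=m$ case of the lemma applied to $F_C^{ml}(f)\in\M_0^m$; together with the first equality this reduces everything to $l=m$, where $S^{ml}=\id$ and your computation gives the result in one line.

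The step you flag as the ``main obstacle'' is a non-issue, and your proposed resolution of it is wrong. By the identity $\gamma_{lm}^\mom(X)\cdot\xi=X\cdot S^{ml}(\xi)$ from \eqref{eq:S and T} we have $(S^{ml})^{*}=\gamma_{lm}^\mom$, and $\gamma_{lm}^\mom$ is surjective (each of $\gamma_\add^\mom$ and $\gamma_\sub^\mom$ visibly is). So for $\hbar\neq0$ the image $\{-\hbar(S^{ml})^{*}w:w\in\g^m\}$ is all of $(\g^*)^l$ and the double supremum is immediately $\supnorm{f}$; the hypotheses $\supp(\mu_k)\subseteq B^l$ and $|\hbar|\le1$ play no role in this lemma. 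In particular, your suggestion that $\hat\mu_k$ ``depends on $v$ only through $v\bmod$ (something)'' is false: the Fourier transform of a compactly supported measure is entire, not periodic. The only implicit assumption you should record is $\hbar\neq0$ (which is how the lemma is used in the proof of Rieffel's condition).
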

\begin{proof}
	The first equality is immediate, as $F^{ml}_C=(\gamma_{lm})^*$ and $\gamma_{lm}$ is surjective. By \eqref{eq:F_C^ml in termen van S^ml}, it now suffices to prove the lemma in the case where $l=m$, so $\gamma_{lm}^\conf=\id$ and $S^{ml}=\id$.	We obtain
	\begin{align*}
		\supnorm{f}&=\sup_{q\in G^l}\supnorm{\sum g_k(q)\int d\mu_k(\xi)e^{i\hbar\xi\cdot}}\\
		&=\sup_{q\in G^l}\sup_{\substack{\psi\in L^2((\g^*)^l)\\\quadnorm{\psi}=1}}\quadnorm{\sum \int d\mu_k(\xi)g_k(q) e^{i\hbar\xi\cdot}\psi(\cdot)}\\
		&=\sup_{q\in G^l}\sup_{\substack{\hat\psi\in L^2(\g^l)\\\quadnorm{\hat\psi}=1}}\quadnorm{\sum \int d\mu_k(\xi)g_k(q)\hat\psi(\cdot+\hbar\xi)},
	\end{align*}
	by using Parseval's identity twice in the last step. The lemma follows.
\end{proof}

\begin{prop}\label{prop:Rieffel0}
	\emph{(Rieffel's condition at 0)} For each $f\in\A_0^\infty$, we have
			\begin{align*}
				\lim_{\hbar\to0}\norm{\Q(f)}=\supnorm{f}.
			\end{align*}
\end{prop}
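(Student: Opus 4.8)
The plan is to pinch $\norm{\Q(f)}$ between the single--lattice norm $\norm{\Ql(f)}$ and $\norm{\Ql(f)}+C\hbar$ for a constant $C$ independent of $m$ and $\hbar$, and then to invoke Rieffel's condition at $0$ for a \emph{single} lattice. First I would reduce: by \eqref{eq:classical algebra vs operator system} and Remark \ref{rema:supremum trick} we may take $f=f\circ\gamma_l$ with $l\in\L$ and $f\in\M_0^l$, and writing $f=\sum_{k=1}^K g_k\otimes\hat\mu_k$ with $g_k\in C^\infty(G^l)$ and $\supp\mu_k\subseteq B^l$. The lower bound $\norm{\Q(f\circ\gamma_l)}\geq\norm{\Ql(f)}$ is immediate from \eqref{eq:Q^infty sup over Q^j} (take $m=l$), so what remains is the matching upper bound $\sup_{m\geq l}\norm{\Qm(f\circ\gamma_{lm})}\leq\norm{\Ql(f)}+C\hbar$, uniformly in $m$.

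The key step is a ``sheet decomposition''. Fix $m\geq l$ and regard $\gamma^\mom_{lm}$ as the linear map $(\g^*)^m\to(\g^*)^l$. Since $\gamma^\mom_{lm}\circ T^{ml}=\id$ by \eqref{eq:S and T}, the subgroup $H_m:=T^{ml}((\Z^n)^l)\oplus\big(\ker\gamma^\mom_{lm}\cap(\Z^n)^m\big)$ has finite index in $(\Z^n)^m$, so $L^2(G^m)\cong\ell^2((\Z^n)^m)$ splits as a finite orthogonal sum of subspaces $\ell^2(a_0+H_m)\cong\ell^2((\Z^n)^l)\otimes\ell^2\big(\ker\gamma^\mom_{lm}\cap(\Z^n)^m\big)$, under which $\psi_{a_0+T^{ml}(c)+w}$ corresponds to $\psi_c\otimes\psi_w$. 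From \eqref{eq:asymptotically Ruben} and \eqref{eq:F_C^ml in termen van S^ml} (expanding each $g_k$ in its Fourier series, exactly as in the computation in the proof of Proposition \ref{prop:von Neumann}), $\Qm(f\circ\gamma_{lm})$ sends $\psi_a$ to a combination of vectors $\psi_{a+T^{ml}(b)}$ with $b\in(\Z^n)^l$, with coefficients depending on $a$ only through $\gamma^\mom_{lm}(a)$; hence $\Qm(f\circ\gamma_{lm})$ preserves each sheet and acts there as $\Ql(f^{(p)})\otimes\id$, where, choosing the coset representative $a_0$ so that $p:=\gamma^\mom_{lm}(a_0)\in[0,1)^{nl}$, the ``twisted symbol'' is $f^{(p)}:=\sum_k g_k\otimes\widehat{e^{2\pi i\hbar p\cdot}\mu_k}$. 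Crucially $f^{(p)}\in\M_0^l$ as well, because $\supp(e^{2\pi i\hbar p\cdot}\mu_k)=\supp\mu_k\subseteq B^l$; this is precisely the place where the restriction to operator systems is used. Therefore $\norm{\Qm(f\circ\gamma_{lm})}=\max_{p}\norm{\Ql(f^{(p)})}$, the maximum over the finitely many sheet labels $p\in[0,1)^{nl}$.

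It remains to bound $\norm{\Ql(f^{(p)})}$ uniformly. From \eqref{eq:Ql} one reads off the elementary estimate $\norm{\Ql(\sum_k g_k\otimes\hat\nu_k)}\leq\sum_k\supnorm{g_k}\absnorm{\nu_k}$; applying it to $f^{(p)}-f=\sum_k g_k\otimes\widehat{(e^{2\pi i\hbar p\cdot}-1)\mu_k}$ and using $|e^{2\pi i\hbar p\cdot\xi}-1|\leq 2\pi\hbar|p||\xi|$ gives $\norm{\Ql(f^{(p)})-\Ql(f)}\leq 2\pi\hbar|p|\sum_k\supnorm{g_k}\int_{\g^l}|\xi|\,d|\mu_k|(\xi)\leq C\hbar$ with $C:=2\pi\sqrt{nl}\sum_k\supnorm{g_k}\int_{\g^l}|\xi|\,d|\mu_k|(\xi)$, which is independent of $m$, $p$ and $\hbar$ since $|p|\leq\sqrt{nl}$. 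Combining this with the previous paragraph, $\norm{\Ql(f)}\leq\norm{\Q(f\circ\gamma_l)}=\sup_{m\geq l}\norm{\Qm(f\circ\gamma_{lm})}\leq\norm{\Ql(f)}+C\hbar$.

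Letting $\hbar\to0$, it then suffices to know $\lim_{\hbar\to0}\norm{\Ql(f)}=\supnorm f$, i.e.\ Rieffel's condition at $0$ on the single lattice $l$. On $\M_0^l$ the map $\Ql$ is Weyl quantization on the torus $\T^{nl}$ (the cut--off being trivial there), so this follows from the strictness of Weyl quantization \cite{Landsman98} (equivalently from \cite{vNS20}); alternatively one obtains $\limsup_{\hbar\to0}\norm{\Ql(f)}\leq\supnorm f$ by comparison with Berezin quantization and the reverse inequality by testing on coherent states. I expect the sheet decomposition of the second paragraph to be the main obstacle -- in particular verifying the group--theoretic splitting of $(\Z^n)^m$ by $H_m$ and checking that $\Qm(f\circ\gamma_{lm})$ restricts on each sheet to $\Ql(f^{(p)})\otimes\id$; everything else is bookkeeping plus the cited single--lattice input. (The paper's own hint to use Lemma \ref{lem:supnorm from operators}, which rewrites $\supnorm f$ as a supremum of operator norms on $L^2(\g^m)$, suggests an alternative route, namely comparing $\norm{\Qm(f\circ\gamma_{lm})}$ directly with those continuum operators as $\hbar\to0$.)
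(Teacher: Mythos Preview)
Your approach is genuinely different from the paper's and, once a misstatement is repaired, considerably shorter. The paper does \emph{not} reduce to the single--lattice case: it fixes $\epsilon>0$, partitions $G^l$ into Voronoi--type cells $V_{\delta,j}$ around finitely many points, pulls these back to $G^m$, localizes a near--optimal $\psi\in\H^m$ to a single small cell, and then compares the resulting expression with the continuum operators on $L^2(\g^m)$ via Lemma~\ref{lem:supnorm from operators}. Your route bypasses all of this by observing that $\Qm(f\circ\gamma_{lm})$ is block--diagonal in the Fourier basis and each block is (unitarily equivalent to) $\Ql(f^{(p)})$ for a shift parameter $p$ that can be taken in $[0,1)^{n|l|}$, after which a one--line perturbation estimate gives $\norm{\Qm(f\circ\gamma_{lm})}\leq\norm{\Ql(f)}+C|\hbar|$ uniformly in $m$. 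This is a real simplification: it makes transparent that the continuum norm differs from the lattice--$l$ norm by $O(\hbar)$, and it isolates the analytic input as the single--lattice statement $\lim_{\hbar\to0}\norm{\Ql(f)}=\supnorm{f}$, which is indeed covered by Weyl quantization on compact Riemannian manifolds in \cite{Landsman98} (and \cite[Theorem 7.8]{Stienstra} for the lower bound).

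There is, however, a genuine error in your sheet decomposition as written. The subgroup $H_m:=T^{ml}((\Z^n)^l)\oplus\big(\ker\gamma^\mom_{lm}\cap(\Z^n)^m\big)$ does \emph{not} have finite index in $(\Z^n)^m$ in general. Already for a single subdivision, $\gamma^\mom_\sub(v_1,v_2)=(d_1v_1+d_2v_2)/d$, so if $d_1/d_2$ is irrational then $\ker\gamma^\mom_\sub\cap(\Z^n)^2=\{0\}$ and $H_m=\{(c,c):c\in\Z^n\}$ has infinite index in $(\Z^n)^2$. Consequently your ``finitely many sheet labels'' is false. The fix is painless: simply decompose $\ell^2((\Z^n)^m)$ as the (countable) orthogonal sum over cosets of $T^{ml}((\Z^n)^l)$; each coset is a copy of $(\Z^n)^l$ via $c\mapsto a_0+T^{ml}(c)$, and since $\gamma^\mom_{lm}(a_0+T^{ml}(c))=\gamma^\mom_{lm}(a_0)+c$ the restriction of $\Qm(f\circ\gamma_{lm})$ is exactly $\Ql(f^{(p)})$ with $p=\gamma^\mom_{lm}(a_0)\bmod(\Z^n)^l\in[0,1)^{n|l|}$ (no $\otimes\id$ factor is needed). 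The operator norm is then a \emph{supremum} over infinitely many $p$'s, but your bound $\norm{\Ql(f^{(p)})-\Ql(f)}\leq 2\pi|\hbar|\,|p|\sum_k\supnorm{g_k}\!\int|\xi|\,d|\mu_k|(\xi)$ is uniform on $[0,1)^{n|l|}$, so the conclusion survives unchanged. You should also note that $f^{(p)}\in\M_0^l$ (because $e^{2\pi i\hbar p\cdot}\hat g_k$ is again in $C_c^\infty$ with the same support), so that $\Ql(f^{(p)})$ is defined; you state this, and it is correct.
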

\begin{proof}
	Let $f\circ\gamma_l\in\A^\infty_0$ be arbitrary, for arbitrary $l\in\L$ and $f\in\M^l_0$. Write $f=\sum_{k=1}^K g_k\otimes \hat\mu_k$. 
		We need to prove that $\norm{\Q(f\circ\gamma_l)}$ converges to $\supnorm{f\circ\gamma_l}=\supnorm{f}$, which by \eqref{eq:Q^infty sup over Q^j} comes down to proving that $\norm{\Qm(f\circ\gamma_{lm})}$ converges to $\supnorm{f}$ \emph{uniformly} in $m$. 
		
		For proving $\lim_{\hbar\to0} \norm{\Q(f\circ\gamma_l)}\geq \supnorm{f}$, we can simply use the similar statement for $\Ql$. Indeed, \cite[Theorem 7.8(1)]{Stienstra} gives
		\begin{align*}
			\lim_{\hbar\to0}\norm{\Q(f\circ\gamma_l)}=\lim_{\hbar\to0}\sup_{m\in\L_{\geq l}}\norm{\Qm(f\circ\gamma_{lm})}\geq\lim_{\hbar\to0}\norm{\Ql(f)}\geq \supnorm{f}.
		\end{align*}
		
		The reverse inequality, however, is considerably more difficult. For any $\epsilon>0$, we will need to construct an $\hbar_0>0$ such that for all $|\hbar|\leq\hbar_0$ we have $\norm{\Qm(f\circ\gamma_{lm})}\leq \supnorm{f}+\epsilon$, independently of $m$.

		Let $\epsilon>0$ be arbitrary. We define
		\begin{align}\label{eq:Q}
			Q:=\sum_{k=1}^K\supnorm{g_k}\absnorm{\mu_k},	
		\end{align}
		and remark that $\norm{\Q(f\circ\gamma_l)}\leq Q$ for all $\hbar\in[-1,1]$. Pick $N\in\N$ and distinct points $x_1,\ldots,x_N\in G^l$ such that for $$r:=\sup_{y\in G^l}\inf_{j=1}^N d(y,x_j),$$
		we have $B_{r}[0_{\g^l}]\subseteq (B_{1/2}[0_\g])^l$, as well as $r<1/4$ and
		\begin{align}\label{eq:uniform continuity}
			d(x,y)<2r \Rightarrow |g_k(x)-g_k(y)|<\frac{\epsilon}{12Q\sum_k\absnorm{\mu_k}}.
		\end{align}
		We define, for all $j\in\{1,\ldots,N\}$ and $\delta\geq0$, the sets
		\begin{align*}
			V_{\delta,j}:=\{y\in G^l:~d(y,x_j)+\delta\leq d(y,x_{j'})\text{ for all } j'\neq j\}.
		\end{align*}
		We have $V_{\delta,j}\subseteq V_{0,j}\subseteq L_{x_j}(B_{1/2}[0]^l)$.
		Choose $\delta>0$ such that $\delta\leq r$ and 
		\begin{align*}
			\vol(G^l\setminus\cup_{j=1}^N V_{\delta,j})<\frac{\epsilon}{3Q^2}.
		\end{align*}
		Choose $\hbar_0>0$ such that 
		\begin{align}\label{eq:hbar0}
			\max_{\xi\in\cup_k\supp(\mu_k)}\norm{\hbar_0\xi}<\frac{\delta}{2}.
		\end{align}
		Let $\hbar\in[-1,1]$ with $|\hbar|<\hbar_0$ be arbitrary. 
		Let $n\geq l$ be arbitrary. Let $m\in\L$ be the (unique) lattice for which $l\leq m\leq n$, $m\subseteq n$ and $m\setminus\{e\}\ngeq l$ for all $e\in m$, i.e., $m$ is made from $l$ by subdivisions, and $n$ is made from $m$ by additions of edges. As $F^\add_Q$ is isometric, 
		\begin{align}\label{eq:j or k same same}
			\norm{\Qn(f\circ\gamma_{ln})}=\norm{F_Q^{nm}(\Qm(f\circ\gamma_{lm}))}=\norm{\Qm(f\circ\gamma_{lm})},
		\end{align}
		so it suffices to prove that $\norm{\Qm(f\circ\gamma_{lm})}\leq\supnorm{f}+\epsilon$.
		Define 
		\begin{align*}
			\tilde V_{\delta,j}:=(\gamma_{lm}^\conf)^{-1}(V_{\delta,j}),\qquad\tilde V:=\cup_{j=1}^N\tilde V_{\delta,j},
		\end{align*}
		as depicted in Figure \ref{fig:1a}. 
		It is easily checked that $U\mapsto  (\gamma_{lm}^\conf)^{-1}(U)$ preserves volume. Hence $\vol(G^m\setminus\tilde V)<\epsilon/(3Q^2)$.
\begin{figure}[!htb]
	\centering
	\thicklines
	\begin{subfigure}{.49\textwidth}
		\centering
		\scalebox{0.7}{
		\begin{picture}(200,260)(0,0)
			\put(0,0){\line(1,0){200}}
			\put(0,0){\line(0,1){200}}
			\put(200,0){\line(0,1){200}}	
			\put(0,200){\line(1,0){200}}
	
			\put(0,2){\line(1,-1){2}}
			\put(0,48){\line(1,-1){48}}
			\put(0,52){\line(1,-1){52}}
			\put(0,98){\line(1,-1){98}}
			\put(0,102){\line(1,-1){102}}
			\put(0,148){\line(1,-1){148}}
			\put(0,152){\line(1,-1){152}}
			\put(0,198){\line(1,-1){198}}
			\put(1,200){\line(1,-1){198}}
			\put(48,200){\line(1,-1){152}}
			\put(52,200){\line(1,-1){148}}
			\put(98,200){\line(1,-1){102}}
			\put(102,200){\line(1,-1){98}}
			\put(148,200){\line(1,-1){52}}
			\put(152,200){\line(1,-1){48}}
			\put(198,200){\line(1,-1){2}}
			
			\put(52,60){\large $\tilde V_{\delta,3}$}
			\put(52,110){\large $\tilde V_{\delta,4}$}
			\put(120,96){\large $\tilde V_{\delta,1}$}
			\put(120,144){\large $\tilde V_{\delta,2}$}
			
			\put(0,250){\line(1,0){200}}
			\put(2,254){\line(0,-1){8}}
			\put(25,250){\circle*{4}}
			\put(20,258){\large $x_1$}
			\put(10,230){\large $V_{\delta,1}$}
			\put(48,254){\line(0,-1){8}}
			\put(52,254){\line(0,-1){8}}
			\put(75,250){\circle*{4}}
			\put(70,258){\large $x_2$}
			\put(60,230){\large $V_{\delta,2}$}
			\put(98,254){\line(0,-1){8}}
			\put(102,254){\line(0,-1){8}}
			\put(125,250){\circle*{4}}
			\put(120,258){\large $x_3$}
			\put(110,230){\large $V_{\delta,3}$}
			\put(148,254){\line(0,-1){8}}
			\put(152,254){\line(0,-1){8}}
			\put(175,250){\circle*{4}}
			\put(170,258){\large $x_4$}
			\put(160,230){\large $V_{\delta,4}$}
			\put(198,254){\line(0,-1){8}}	
		\end{picture}
		}	
		\caption{The subspaces $\tilde V_{\delta,j}:=(\gamma_{lm}^\conf)^{-1}(V_{\delta,j})$.}
		\label{fig:1a}
	\end{subfigure}
	\begin{subfigure}{0.49\textwidth}
		\centering
		\scalebox{0.7}{
		\begin{picture}(200,260)(0,0)
			\put(0,0){\line(1,0){200}}
			\put(0,0){\line(0,1){200}}
			\put(200,0){\line(0,1){200}}	
			\put(0,200){\line(1,0){200}}
	
			\put(0,250){\line(1,0){200}}
			\put(0,254){\line(0,-1){8}}
			\put(25,250){\circle*{4}}
			\put(50,254){\line(0,-1){8}}
			\put(75,250){\circle*{4}}
			\put(100,254){\line(0,-1){8}}
			\put(125,250){\circle*{4}}
			\put(120,258){\large $x_j$}
			\put(110,230){\large $V_{0,j}$}
			\put(150,254){\line(0,-1){8}}
			\put(175,250){\circle*{4}}
			\put(200,254){\line(0,-1){8}}	
			
			\put(0,50){\line(1,-1){50}}
			\put(0,100){\line(1,-1){100}}
			\put(0,150){\line(1,-1){150}}
			\put(0,200){\line(1,-1){200}}
			\put(50,200){\line(1,-1){150}}
			\put(100,200){\line(1,-1){100}}
			\put(150,200){\line(1,-1){50}}
			
			\put(0,100){\line(1,3){12.5}}
			\put(40,60){\line(1,3){12.5}}
			\put(80,20){\line(1,3){12.5}}
			\put(126.666667,0){\line(1,3){5.833333}}
			\put(120,180){\line(1,3){6.666667}}
			\put(160,140){\line(1,3){12.5}}
			\put(20,95){\large $U_1$}
			\put(60,55){\large $U_2$}
			\put(100,15){\large $U_3$}
			\put(140,175){\large $\ddots$}
			\put(176,135){\large $U_M$}
		\end{picture}
		}
		\caption{The subspaces $U_1,\ldots,U_M$ for a fixed $j$.}
		\label{fig:1b}
	\end{subfigure}
	\caption{Dividing the configuration space $G^m\cong\T^2$ into small subspaces when $m$ has two edges (of different length) and $l$ has one.}
\end{figure}
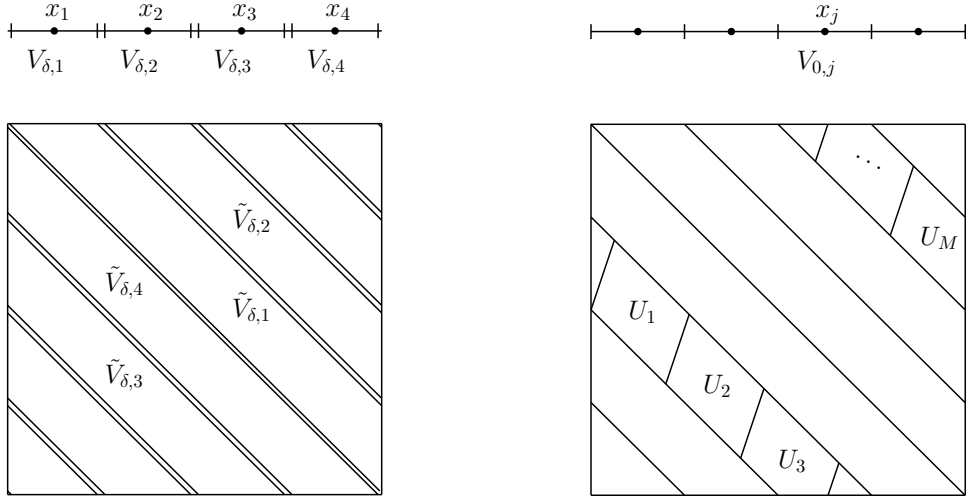
		Choose $\psi\in\H^m$ such that $\quadnorm{\psi}=1$ and
		\begin{align}\label{eq:eps over 3}
			\quadnorm{\Qm(f\circ\gamma_{lm})\psi}^2\geq \norm{\Qm(f\circ\gamma_{lm})}^2-\frac{\epsilon}{3}.
		\end{align}
		We now claim that there exists a point $q_0\in G^m$ such that 
		\begin{align}\label{eq:contr}
			\int_{G^m\setminus L_{q_0}(\tilde V)}\!dq\,|\Qm(f\circ\gamma_{lm})\psi(q)|^2\leq \vol(G^m\setminus\tilde V)Q^2<\frac{\epsilon}{3}.
		\end{align}
		Indeed, if there were no such $q_0\in G^m$, we would obtain
		\begin{align*}
			\vol(G^m\setminus\tilde V)Q^2&<\int_{G^m}dq_0\int_{G^m\setminus L_{q_0}(\tilde V)}\! dq\,|\Qm(f\circ\gamma_{lm})\psi(q)|^2\\
			&=\int_{G^m\setminus \tilde V}\!dq\int_{G^m}dq_0\,|\Qm(f\circ\gamma_{lm})\psi(q_0q)|^2\\
			&=\vol(G^m\setminus\tilde V)\quadnorm{\Qm(f\circ\gamma_{lm})\psi}^2\leq\vol(G^m\setminus\tilde V)Q^2,
		\end{align*}
		which is a contradiction. Therefore a $q_0\in G^m$ satisfying \eqref{eq:contr} does exist, and is fixed throughout the rest of the proof. Using \eqref{eq:eps over 3}, we conclude
		\begin{align}
			\norm{\Qm(f\circ\gamma_{lm})}^2-\frac{2\epsilon}{3}&\leq\quadnorm{\Qm(f\circ\gamma_{lm})\psi}^2-\frac{\epsilon}{3}\nonumber\\
			&<\sum_{j=1}^N\int_{L_{q_0}(\tilde V_{\delta,j})}dq\,|\Qm(f\circ\gamma_{lm})\psi(q)|^2.
		\label{eq:Rieffel0 phase 1}
		\end{align}
		For all $j\in\{1,\ldots,N\}$, we define 
		\begin{align*}
			\psi_j:=\psi \indicator_{L_{q_0}(\tilde V_{0,j})}.
		\end{align*}
		By $|\hbar|\leq\hbar_0$ and \eqref{eq:hbar0}, we have $\norm{\hbar\xi}<\delta/2$ for all $\xi\in\cup_k\supp(\mu_k)$. By using $\gamma_{lm}^\conf[S^{ml}(\hbar\xi)]=[\hbar\xi]$ we infer that $q\in L_{q_0}(\tilde V_{\delta,j})$ implies $q+S^{ml}(\hbar\xi)\in L_{q_0}(\tilde V_{0,j})$. Therefore, by using
		\begin{align}\label{eq:Qj}
			\Qm(f\circ\gamma_{lm})\psi(q)=\sum_{k=1}^K\int d\mu_k(\xi) g_k(\gamma_{lm}^\conf(q+\tfrac12 S^{ml}(\hbar\xi)))\psi(q+S^{ml}(\hbar\xi)),
		\end{align}
		we obtain that, for all $q\in L_{q_0}(\tilde{V}_{\delta,j})$,
		\begin{align*}
			\Qm(f\circ\gamma_{lm})\psi(q)=\Qm(f\circ\gamma_{lm})\psi_j(q).
		\end{align*}				
		Hence, \eqref{eq:Rieffel0 phase 1} becomes
		\begin{align*}
			\norm{\Qm(f\circ\gamma_{lm})}^2-\frac{2\epsilon}{3}\leq  \sum_{j=1}^N\int_{L_{q_0}(\tilde V_{\delta,j})}dq\,|\Qm(f\circ\gamma_{lm})\psi_j(q)|^2.
		\end{align*}
		By an argument similar to how we found $q_0$ (finding a contradiction if it would not exist) now using $\sum\|\psi_j\|_2^2=\|\psi\|_2^2$, we may fix a $j\in\{1,\ldots,N\}$ such that
		\begin{align}\label{eq:bound psi_m}
			\int_{L_{q_0}(\tilde V_{\delta,j})}dq\,|\Qm(f\circ\gamma_{lm})\psi_j(q)|^2\geq\quadnorm{\psi_j}^2\left(\norm{\Qm(f\circ\gamma_{lm})}^2-\frac{2\epsilon}{3}\right).
		\end{align}
		
		We fix subspaces $U_1,\ldots,U_M\subseteq\tilde V_{0,j}$ and points $y_1,\ldots,y_M\in (\gamma_{lm}^\conf)^{-1}(\{x_j\})\subseteq\tilde V_{0,j}$ such that $y_s\in U_s$ for all $s=1,\ldots,M$ and such that
		\begin{enumerate}[(a)]
			\item\label{item:a} $\bigcup_{s}U_s=\tilde V_{0,j}$ and the $U_s$ are disjoint;
			\item\label{item:b} $L_{[S^{ml}\xi]}(U_s\cap\tilde V_{\delta,j})\subseteq U_s$ for all $\xi\in B_{\delta/2}(0)\subseteq\g^l$;
			\item\label{item:c} $U_s\subseteq L_{y_s}(B^m)$ for all $s$. 
		\end{enumerate}
		An example of such sets is depicted in Figure \ref{fig:1b}.
		Define, for all $s$,
		\begin{align*}
			\psi_{j,s}:=\psi_j\indicator_{L_{q_0}(U_s)}=\psi\indicator_{L_{q_0}(U_s)}.
		\end{align*}
		By \eqref{item:a}, we have
		\begin{align*}
			&\int_{L_{q_0}(\tilde V_{\delta,j})}dq\,|\Qm(f\circ\gamma_{lm})\psi_j(q)|^2
			=\sum_{s=1}^M\int_{L_{q_0}(U_s\cap \tilde V_{\delta,j})}dq\,|\Qm(f\circ\gamma_{lm})\psi_j(q)|^2.
		\end{align*}
		Notice that, for all $\xi\in\cup_k\supp(\mu_k)$, we have $\hbar\xi\in B_{\delta/2}(0)$. Therefore, by \eqref{item:b}, we find that $q \in L_{q_0}(U_s\cap\tilde V_{\delta,j})$ implies that $q+ S^{ml}(\hbar\xi)\in L_{q_0}(U_s)$. Hence \eqref{eq:Qj} gives
		\begin{align*}
			&\int_{L_{q_0}(\tilde V_{\delta,j})}dq\,|\Qm(f\circ\gamma_{lm})\psi_j(q)|^2
			=\sum_{s=1}^M\int_{L_{q_0}(U_s\cap \tilde V_{\delta,j})}dq\,\Big|\Qm(f\circ\gamma_{lm})\psi_{j,s}(q)\Big|^2.
		\end{align*}
		Therefore, \eqref{eq:bound psi_m} gives
		\begin{align*}
			\sum_{s=1}^M\int_{L_{q_0}(U_s)}dq\,\Big|\Qm(f\circ\gamma_{lm})\psi_{j,s}(q)\Big|^2\geq\quadnorm{\psi_j}^2\left(\norm{\Qm(f\circ\gamma_{lm})}^2-\frac{2\epsilon}{3}\right).
		\end{align*}
		Again arguing by contradiction, and using that $\sum_{s=1}^M\quadnorm{\psi_{j,s}}^2=\quadnorm{\psi_j}^2$, we may fix an $s$ such that
		\begin{align}\label{eq:psi_ms ineq}
			\int_{L_{q_0}(U_s)}dq\,\Big|\Qm(f\circ\gamma_{lm})\psi_{j,s}(q)\Big|^2\geq\quadnorm{\psi_{j,s}}^2\left(\norm{\Qm(f\circ\gamma_{lm})}^2-\frac{2\epsilon}{3}\right).
		\end{align}
		Using the function $\psi_{j,s}\in L^2(G^m)$ we constructed, which is supported in $L_{q_0}(U_s)$, we can subsequently construct a function $\tilde\psi\in L^2(\g^m)$, as follows. First define $\breve{U}:=L_{q_0}(U_s)$ and $\breve{y}:=q_0+y_s\in \breve U$, so that the support of $q\mapsto \psi_{j,s}(\breve y+q)$ lies in $L_{\breve y}^{-1}(L_{q_0}(U_s))=L_{y_s}^{-1}(U_s)\subseteq B^m=[B_{1/2}(0_\g)^m]$ by \eqref{item:c} above. Define
		\begin{align*}
			\tilde{\psi}(X):=\begin{cases}
			\psi_{j,s}(\breve y+X)\quad&\text{if $X\in B_{1/2}(0_\g)^m$}\\
			0&\text{if $X\notin B_{1/2}(0_\g)^m$},
			\end{cases}
		\end{align*}
		which implies $\|\tilde\psi\|_2^2=\quadnorm{\psi_{j,s}}^2$. Using \eqref{eq:psi_ms ineq} we get
		\begin{align*}
		\left(\norm{\Qm(f\circ\gamma_{lm})}^2-\frac{2\epsilon}{3}\right)\big\|\tilde{\psi}\big\|_2^2&\leq \int_{\breve U}dq\,|\Qm(f\circ\gamma_{lm})\psi_{j,s}(q)|^2,
		\end{align*}
		in which we can use \eqref{eq:Qj} and expand the square of the absolute value of the sum over $k$. For brevity, we write $\dot g_k:=g_k(\gamma_{lm}^\conf(\breve y))$ and $g_{k,\xi}^q:=g_k(\gamma_{lm}^\conf(q+\tfrac12 S^{ml}(\hbar\xi)))-g_k(\gamma_{lm}^\conf(\breve y))$. We obtain
		\begin{align*}
		&\left(\norm{\Qm(f\circ\gamma_{lm})}^2-\frac{2\epsilon}{3}\right)\big\|\tilde{\psi}\big\|_2^2\\
		&\quad\leq \bigg|\sum_{k,k'=1}^K\int_{\breve U}dq\int d\overline{\mu_k}(\xi)\,\overline{(\dot g_k+g_{k,\xi}^q)\psi_{j,s}(q+S^{ml}(\hbar\xi))}\\
		&\qquad\quad\int d\mu_{k'}(\xi')(\dot g_{k'}+g_{k',\xi'}^q)\psi_{j,s}(q+S^{ml}(\hbar\xi'))\bigg|\\
		&\quad\leq \int_{\breve U}dq\,\bigg|\sum_{k=1}^K \int d\mu_k(\xi)\dot g_k\psi_{j,s}(q+S^{ml}(\hbar\xi))\bigg|^2
		\\&\qquad
		+\sum_{k,k'=1}^K\int d|\mu_k|(\xi)\int d|\mu_{k'}|(\xi')\Big(2|\dot g_k|+\sup_{q\in\breve U}|g_{k,\xi}^q|\Big)\sup_{q\in\breve U}|g_{k',\xi'}^q|\norm{\psi_{j,s}}^2.
		\end{align*}
		Because $\norm{\tfrac12\hbar\xi}<r$, because $U_s\subseteq\tilde V_{0,j}$ and because $d(x,x_j)\leq r$ for all $x\in V_{0,j}$ we can apply \eqref{eq:uniform continuity} to find, for all $k$ and $\xi\in\supp(\mu_k)$,
		\begin{align*}
			\sup_{q\in\breve U}|g_{k,\xi}^q|<\frac{\epsilon}{12Q\sum_k\absnorm{\mu_k}}.
		\end{align*}
		Therefore, and by Lemma \ref{lem:supnorm from operators},
		\begin{align*}
			&\left(\norm{\Qm(f\circ\gamma_{lm})}^2-\frac{2\epsilon}{3}\right)\big\|\tilde{\psi}\big\|_2^2\\
			&\quad\leq \int_{\g^m} dX\bigg|\sum_{k=1}^K \int d\mu_k(\xi)g_k(\gamma^\conf_{lm}(\breve y)) \tilde\psi(X+S^{ml}(\hbar \xi))\bigg|^2\\
			&\qquad+\sum_{k,k'=1}^K\absnorm{\mu_k}4\supnorm{g_k}\absnorm{\mu_{k'}}\sup_{\xi'\in\supp(\mu_{k'})}\sup_{q\in\breve U}|g_{k',\xi'}^q|\quadnorm{\psi_{j,s}}^2\\
			&\quad\leq \sup_{q\in G^m}\norm{\sum_{k=1}^K\int d\mu_k(\xi)g_k(\gamma^\conf_{lm}(q))L^*_{S^{ml}(\hbar\xi)}}^2\big\|\tilde\psi\big\|_2^2
			+\frac{\epsilon}{3}\big\|\tilde\psi\big\|_2^2\\
			&\quad=\left(\supnorm{f}+\frac{\epsilon}{3}\right)\big\|\tilde\psi\big\|_2^2.
		\end{align*}
		By \eqref{eq:j or k same same} we conclude that $\norm{\Qn(f\circ\gamma_{ln})}^2\leq\supnorm{f}+\epsilon$. Since $n\geq l$ was arbitrary, we have $\norm{\Q(f\circ\gamma_{l})}^2\leq\supnorm{f}+\epsilon$, which concludes the proof.
\end{proof}

\subsection{Rieffel's condition away from zero}
Now that we have continuity of $\hbar\mapsto\norm{\Q(f\circ\gamma_{l})}$ at $\hbar=0$, we are left to prove continuity at an arbitrary $\hbar_1\in[-1,1]\setminus\{0\}$. In the rest of the paper, we fix such an $\hbar_1$, as well as a function $f\in\M_0^{l}$, expanded as $f=\sum_{k=1}^{K}g_k\otimes \hat\mu_k$. 


The reason that Rieffel's condition away from zero holds in the infinite dimensional case, as opposed to the case on a finite lattice (see \cite[Remark 23]{vNS20} for a counterexample) is that $\norm{Q^\infty_{\hbar_1}(f\circ\gamma_l)}$ is given by a supremum over lattices $m\geq l$ as shown in \eqref{eq:Q^infty sup over Q^j}. Better yet: it is also given by a supremum over lattices $m\geq l^R$, with $l^R$ from Definition \ref{def:l_R}. If we choose $R$ large enough, the components of the $S^{l^Rl}(\xi)$'s, for $\xi\in\cup_k\supp(\mu_k)$, become arbitrarily small. We take advantage of this fact by the following construction.
	For every edge $e\in l$, we choose a single edge $e'\in l^R$ that lies inside $e$. The edge $e'$ has a length $1/R$ times the length of $e$, 
	so we have $S^{l^Rl}(\xi)_{e'}=\frac{1}{R}\xi_e$.
	We then define the projection
	\begin{align*}
		\chi_{ll^R}:G^{l^R}\to G^{l},\quad \chi_{ll^R}(q)_e:=q_{e'},
	\end{align*}		
	and note that it satisfies $\chi_{ll^R}[S^{l^Rl}(\xi)]=[\tfrac1R  \xi]$.
	
	Similarly to the proof of Proposition \ref{prop:Rieffel0}, we will define subsets of $G^l$, which in volume approximate the whole of $G^l$ but are topologically better behaved than $G^l$.
\begin{defi}\label{def:U_delta en V_delta}
For all $\delta\geq0$, define
	\begin{align*}
		U_\delta:=\{\xi \in\g^{l}:~\xi_e\in(-\tfrac12+\tfrac12\delta,\tfrac12-\tfrac12\delta)^n \text{ for all $e\in l$}\}.
	\end{align*}
	In particular, $U_0$ is the open unit cube around 0. Using $U_\delta$, we define a subset $V_\delta\subseteq G^{l}$ with volume $\vol(V_\delta)=(1-\delta)^{n|l|}$, $n=\dim G$, by setting
	\begin{align*}
		V_\delta:=\{[\xi]\in G^{l}:~\xi\in U_\delta\}.
	\end{align*}
\end{defi}
Using these we will define subsets of $G^m$, for a particular class of lattices $m\geq l^R$. 

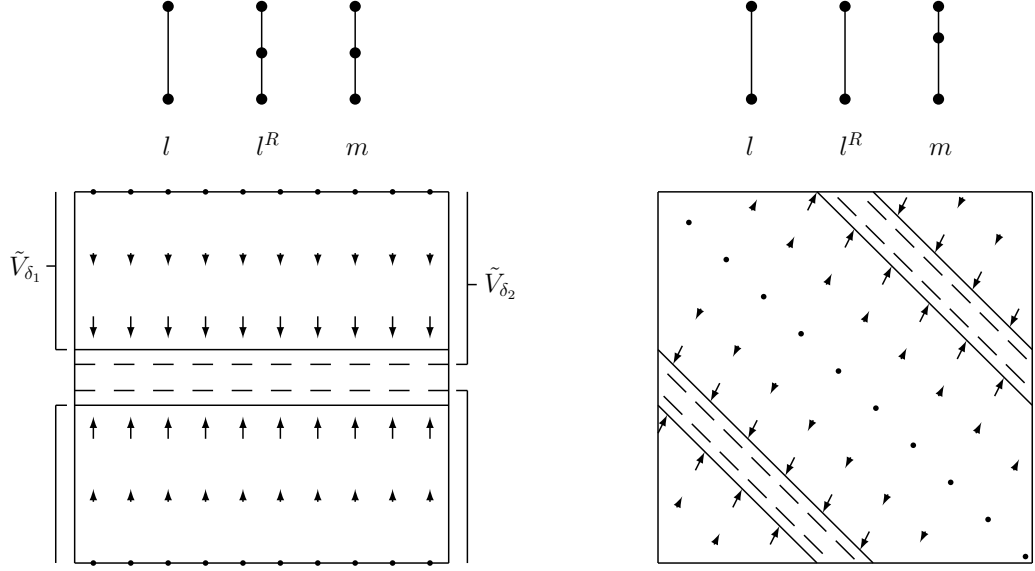
\begin{figure}[!htb]
	\centering
	\thicklines
	\begin{subfigure}{.49\textwidth}
		\centering
		\scalebox{0.7}{
		\begin{picture}(200,320)
			\put(50,250){\line(0,1){50}}
			\put(50,250){\circle*{6}}
			\put(50,300){\circle*{6}}
			\put(100,250){\line(0,1){50}}
			\put(100,250){\circle*{6}}
			\put(100,275){\circle*{6}}
			\put(100,300){\circle*{6}}
			\put(150,250){\line(0,1){50}}
			\put(150,250){\circle*{6}}
			\put(150,275){\circle*{6}}
			\put(150,300){\circle*{6}}
			\put(47,220){\large $l$}
			\put(97,220){\large $l^R$}
			\put(145,220){\large $m$}
		
			\put(0,0){\line(1,0){200}}
			\put(0,0){\line(0,1){200}}
			\put(200,0){\line(0,1){200}}	
			\put(0,200){\line(1,0){200}}

			\put(0,85){\line(1,0){200}}
			\put(0,115){\line(1,0){200}}
			
			\multiput(0,93)(21,0){10}{\line(1,0){11}}
			\multiput(0,107)(21,0){10}{\line(1,0){11}}
	
		\put(-35,155){\large $\tilde V_{\delta_1}$}
		\put(219,146){\large $\tilde V_{\delta_2}$}
\put(-10,85){\line(1,0){6}}
\put(-10,85){\line(0,-1){85}}
\put(-10,115){\line(1,0){6}}
\put(-10,115){\line(0,1){85}}
\put(-10,160){\line(-1,0){5}}

\put(210,93){\line(-1,0){6}}
\put(210,93){\line(0,-1){93}}
\put(210,107){\line(-1,0){6}}
\put(210,107){\line(0,1){93}}
\put(210,150){\line(1,0){5}}	
			
			\multiput(10,67)(20,0){10}{\vector(0,1){12}}
			\multiput(10,133)(20,0){10}{\vector(0,-1){12}}
			\multiput(10,33)(20,0){10}{\vector(0,1){7}}
			\multiput(10,167)(20,0){10}{\vector(0,-1){7}}
	
			\multiput(10,0)(20,0){10}{\circle*{3}}
			\multiput(10,200)(20,0){10}{\circle*{3}}

		\end{picture}
		}	
		\caption{Subdivision where $R=2$ and $m=l^R$.}
		\label{fig:2a}
	\end{subfigure}
	\begin{subfigure}{0.49\textwidth}
		\centering
		\scalebox{0.7}{
		\begin{picture}(200,320)
			\put(50,250){\line(0,1){50}}
			\put(50,250){\circle*{6}}
			\put(50,300){\circle*{6}}
			\put(100,250){\line(0,1){50}}
			\put(100,250){\circle*{6}}
			\put(100,300){\circle*{6}}
			\put(150,250){\line(0,1){50}}
			\put(150,250){\circle*{6}}
			\put(150,283){\circle*{6}}
			\put(150,300){\circle*{6}}
			\put(47,220){\large $l$}
			\put(97,220){\large $l^R$}
			\put(145,220){\large $m$}

			\put(0,0){\line(1,0){200}}
			\put(0,0){\line(0,1){200}}
			\put(200,0){\line(0,1){200}}	
			\put(0,200){\line(1,0){200}}
			
			\put(0,85){\line(1,-1){85}}
			\put(0,115){\line(1,-1){115}}
			
			\put(85,200){\line(1,-1){115}}
			\put(115,200){\line(1,-1){85}}

	
			\multiput(80,190)(20,-20){6}{\vector(1,2){5}}
			\multiput(50,190)(20,-20){8}{\vector(1,2){3}}
			\multiput(16.5,183.5)(20,-20){10}{\circle*{3}}
			\multiput(23,137)(20,-20){7}{\vector(-1,-2){3}}
			\multiput(13,117)(20,-20){6}{\vector(-1,-2){5}}
			
			\multiput(0,70)(20,-20){4}{\vector(1,2){5}}
			\multiput(10,30)(20,-20){2}{\vector(1,2){3}}	
			\multiput(163,197)(20,-20){2}{\vector(-1,-2){3}}
			\multiput(133,197)(20,-20){4}{\vector(-1,-2){5}}
			
			\multiput(96,197)(15,-15){7}{\line(1,-1){10}}
			\multiput(112,195)(15,-15){6}{\line(1,-1){10}}
			\multiput(5,102)(15,-15){7}{\line(1,-1){10}}
			\multiput(3,90)(15,-15){6}{\line(1,-1){10}}
		\end{picture}
		}
		\caption{Subdivision where $R=1$ and $m>l^R$.}
		\label{fig:2b}
	\end{subfigure}
	\caption{Choosing subsets $\tilde V_\delta$ ($\delta>0$) within the configuration space $G^m\cong \T^2$ such that, when $\delta_1>\delta_2$, $\tilde V_{\delta_1}\subseteq\tilde V_{\delta_2}$. The bijection $F:\tilde V_{\delta_1}\to\tilde V_{\delta_2}$ expands the subset $\tilde V_{\delta_1}$ onto $\tilde V_{\delta_2}$ along the direction of $S^{ml}\circ\uvp$, as indicated by the arrows. Here $m$ has two edges (of possibly different length) and $l$ has one.}
	\label{fig:2}
\end{figure}

\begin{lem}\label{lem:phi smooth}
	Given a  lattice $m$ obtained from $l^R$ by subdivisions (hence in particular $l\leq l^R\leq m$) the map $\varphi_{ll^Rm}:G^m\to G^{l}$ defined by
		$$\varphi_{ll^Rm}:=\chi_{ll^R}\circ\gamma^\conf_{l^Rm}:G^m\to G^l$$
	is smooth, and $U\mapsto\varphi_{ll^Rm}^{-1}(U)$ preserves volume.
\end{lem}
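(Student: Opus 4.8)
The plan is to observe that, because $m$ is obtained from $l^R$ by subdivisions only, the map $\varphi_{ll^Rm}$ is in fact a \emph{continuous surjective group homomorphism} $G^m\to G^l$; both assertions of the lemma then drop out of standard facts. For smoothness I would recall that $\gamma^\conf_{l^Rm}:G^m\to G^{l^R}$ is obtained by recursively composing embedded copies of $\gamma^\conf_\sub([x_1],[x_2])=[x_1+x_2]$ (no copies of $\gamma^\conf_\add$ occur, precisely because $l^R\leq m$ by subdivisions), and each such factor is smooth, being induced by the smooth addition $\R^n\times\R^n\to\R^n$ descending to $G\times G\to G$. The projection $\chi_{ll^R}:G^{l^R}\to G^l$ is smooth as a projection onto a sub-collection of the coordinate circles. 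Hence the composite $\varphi_{ll^Rm}=\chi_{ll^R}\circ\gamma^\conf_{l^Rm}$ is smooth.

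Next I would make the group-theoretic structure explicit. Unwinding the recursion, $\gamma^\conf_{l^Rm}(q)_{e''}=\sum_{\tilde e\subseteq e''}q_{\tilde e}$, where the sum runs over the edges $\tilde e$ of $m$ lying inside the edge $e''$ of $l^R$; this is a continuous homomorphism in each coordinate, and surjective in each coordinate since already $\gamma^\conf_\sub([x],[0])=[x]$. The coordinate projection $\chi_{ll^R}$ is plainly a continuous surjective homomorphism $G^{l^R}\to G^l$. Composing, $\varphi:=\varphi_{ll^Rm}$ is a continuous surjective group homomorphism.

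It then remains to deduce volume preservation. The volume on $G^m$ (resp.\ $G^l$) is the normalized Haar measure $\mu_{G^m}$ (resp.\ $\mu_{G^l}$) of total mass $1$. The pushforward $\varphi_*\mu_{G^m}$ is a Borel probability measure on $G^l$, and it is left-invariant: for any $h=\varphi(k)\in G^l$ one has $L_h\circ\varphi=\varphi\circ L_k$, so $(L_h)_*\varphi_*\mu_{G^m}=\varphi_*(L_k)_*\mu_{G^m}=\varphi_*\mu_{G^m}$, where I use that $\varphi$ is onto and that $\mu_{G^m}$ is left-invariant. By uniqueness of Haar measure, $\varphi_*\mu_{G^m}=\mu_{G^l}$, whence $\vol(\varphi^{-1}(U))=\varphi_*\mu_{G^m}(U)=\vol(U)$ for every Borel $U\subseteq G^l$.

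The argument is essentially routine; the only two points deserving care are (i) that in the subdivision-only regime $\gamma^\conf_{l^Rm}$ really is the ``sum the constituent edges'' homomorphism, independently of the order in which the subdivisions are performed --- which rests on the already-established fact that the $\gamma_{lm}$ form an inverse system --- and (ii) the surjectivity of $\varphi$, which is exactly what upgrades mere translation-invariance of $\varphi_*\mu_{G^m}$ to full Haar-invariance on $G^l$, and hence to the volume identity.
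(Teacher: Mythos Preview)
Your proof is correct. The smoothness argument is essentially the same as the paper's: both reduce to smoothness of the elementary factors $\gamma^\conf_\sub$ and the coordinate projection.

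For volume preservation you take a genuinely different route. The paper simply checks, separately for $\chi_{ll^R}$ and for $\gamma^\conf_{l^Rm}$, that each elementary step (the projection $G^{l^R}\to G^l$ and the subdivision map $\gamma^\conf_\sub:G^2\to G$) preserves volume under inverse image --- a hands-on Fubini-type verification --- and then composes. You instead observe that the composite $\varphi_{ll^Rm}$ is a continuous surjective group homomorphism between compact abelian groups, push forward the normalized Haar measure, and invoke uniqueness of Haar measure on $G^l$ to conclude $\varphi_*\mu_{G^m}=\mu_{G^l}$, hence $\vol(\varphi^{-1}(U))=\vol(U)$. Your argument is cleaner and more conceptual, and it makes clear why the result holds in one stroke rather than factor by factor; the paper's approach is more elementary (no appeal to Haar uniqueness) and would survive even if one of the factors failed to be a group homomorphism. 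Both are perfectly adequate for this lemma.
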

\begin{proof}
	By first considering the elementary steps of adding and subdividing an edge, one finds that both $\chi_{ll^R}$ and $\gamma_{l^Rm}^\conf$ are smooth and preserve volume by inverse image.
\end{proof}

	 For any $\delta\geq0$, we set
	\begin{align*}
		\tilde{V}_{\delta}:=\varphi_{ll^Rm}^{-1}(V_{\delta})\subseteq G^m.
	\end{align*}
	For $\hbar\in[-1,1]$ of the same sign as $\hbar_1$, we define a map $F:\tilde V_0\to G^m$ by
	\begin{align}\label{eq:F}
		F(q):=q+R\left(\frac{\hbar}{\hbar_1}-1\right)S^{ml}(\uvp(q)),
	\end{align}
	where $\uvp:\tilde V_0\to\g^{l}$ is defined by
	\begin{align*}
		\uvp(q):=\xi\in\g^{l}\quad\text{if}\quad \varphi_{ll^Rm}(q)=[\xi]\in G^{l}\quad\text{for}\quad \xi\in U_0.
	\end{align*}
	
\begin{lem}\label{lem:F}
	Let $\hbar\in[-1,1]$ be of the same sign as $\hbar_1$ and let $\delta_1,\delta_2\in(0,1)$ satisfy
	\begin{align}\label{eq:hbars deltas}
		\hbar(1-\delta_1)=\hbar_1(1-\delta_2).
	\end{align}
	Then $F$ restricts to a diffeomorphism $F:\tilde{V}_{\delta_1}\to\tilde V_{\delta_2}$ satisfying for all $q\in\tilde V_{\delta_1}$:
	\begin{align*}
		|\det d_qF|=(\hbar/\hbar_1)^{n|l|}.
	\end{align*}
	Moreover, when $q+tS^{ml}(\hbar_1\xi)\in\tilde V_{\delta_1}$ for all $t \in[0,1]$, we have	
	\begin{align}\label{eq:the point of F}
		F(q+S^{ml}(\hbar_1\xi))=F(q)+S^{ml}(\hbar\xi).
	\end{align}
\end{lem}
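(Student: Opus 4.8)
The plan is to leverage that $\varphi:=\varphi_{ll^Rm}=\chi_{ll^R}\circ\gamma^\conf_{l^Rm}$ is, by Lemma \ref{lem:phi smooth} and because each of $\chi_{ll^R},\gamma^\conf_\add,\gamma^\conf_\sub$ is a group homomorphism, a surjective Lie group homomorphism $G^m\to G^l$; it thus lifts to a linear map $P:\g^m\to\g^l$ with $\varphi([x])=[Px]$ for all $x\in\g^m$. The identities $\gamma^\conf_\add\circ[S^\add(\cdot)]=[\cdot]$ and $\gamma^\conf_\sub\circ[S^\sub(\cdot)]=[\cdot]$ (using $d_1+d_2=d$) and $\chi_{ll^R}\circ[S^{l^Rl}(\cdot)]=[\cdot/R]$, composed along the elementary steps that define $\gamma^\conf_{l^Rm}$, give $\varphi([S^{ml}(\xi)])=[\xi/R]$, equivalently $P\circ S^{ml}=\tfrac1R\id_{\g^l}$. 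Hence the smooth map $\uvp=([\cdot]|_{U_0})^{-1}\circ\varphi$ on $\tilde V_0$ has $d_q\uvp=P$ at every $q$. Abbreviate $\lambda:=\hbar/\hbar_1$, which is $>0$ (same sign, and $\neq0$ since \eqref{eq:hbars deltas} forces $\hbar\neq0$), and $c:=R(\lambda-1)$, so that $F(q)=q+cS^{ml}(\uvp(q))$ and $1+c/R=\lambda$.

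Next I would establish the diffeomorphism claim. Applying $\varphi$ to the definition of $F$ and using that $\varphi$ lifts to $P$,
\[
\varphi(F(q))=[\uvp(q)]+[cPS^{ml}\uvp(q)]=\big[(1+\tfrac cR)\uvp(q)\big]=[\lambda\uvp(q)]\qquad(q\in\tilde V_0).
\]
For $q\in\tilde V_{\delta_1}$ we have $\uvp(q)\in U_{\delta_1}$, so every coordinate of $\lambda\uvp(q)$ has modulus $<\lambda(1-\delta_1)/2=(1-\delta_2)/2$ by \eqref{eq:hbars deltas}; hence $\lambda\uvp(q)\in U_{\delta_2}\subseteq U_0$, which gives $F(q)\in\varphi^{-1}(V_{\delta_2})=\tilde V_{\delta_2}$ and, by injectivity of $[\cdot]$ on $U_0$, $\uvp(F(q))=\lambda\uvp(q)$. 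Then I would exhibit the inverse $G(q'):=q'-(c/\lambda)S^{ml}(\uvp(q'))$: this is a map of the very same form with $\hbar$ and $\hbar_1$ interchanged, and \eqref{eq:hbars deltas} is symmetric under $(\hbar,\delta_1)\leftrightarrow(\hbar_1,\delta_2)$, so the argument just given (applied to $G$) yields $G(\tilde V_{\delta_2})\subseteq\tilde V_{\delta_1}$ and $\uvp(G(q'))=\uvp(q')/\lambda$. Substituting these back shows $G\circ F=\id$ on $\tilde V_{\delta_1}$ and $F\circ G=\id$ on $\tilde V_{\delta_2}$; since $F$ and $G$ are smooth, $F:\tilde V_{\delta_1}\to\tilde V_{\delta_2}$ is a diffeomorphism.

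For the Jacobian, identify $T_qG^m\cong\g^m$ by translation, so $d_qF=\id_{\g^m}+cS^{ml}P=\id_{\g^m}+cE$ with $E:=S^{ml}P$. From $P\circ S^{ml}=\tfrac1R\id$ we get $E^2=\tfrac1R E$, so $RE$ is idempotent, of rank $n|l|$ because $PS^{ml}$ invertible forces $S^{ml}$ injective and $P$ surjective. On the decomposition $\g^m=\ker(RE)\oplus\ran(RE)$ the operator $d_qF$ preserves each summand and acts as $\id$ on the first, $(1+c/R)\id=\lambda\id$ on the second; therefore $|\det d_qF|=\lambda^{n|l|}=(\hbar/\hbar_1)^{n|l|}$, independently of $q$.

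Finally, for \eqref{eq:the point of F}, fix $\xi$ and assume $q_t:=q+tS^{ml}(\hbar_1\xi)\in\tilde V_{\delta_1}$ for all $t\in[0,1]$. Both $t\mapsto\uvp(q_t)$ and the affine path $t\mapsto\uvp(q)+t\hbar_1\xi/R$ are continuous lifts through the covering $[\cdot]:\g^l\to G^l$ of the path $t\mapsto\varphi(q_t)=[\uvp(q)+t\hbar_1\xi/R]$, and they agree at $t=0$; uniqueness of path lifting forces $\uvp(q+S^{ml}(\hbar_1\xi))=\uvp(q)+\hbar_1\xi/R$. Then, by linearity of $S^{ml}$ and $\lambda\hbar_1=\hbar$,
\[
F(q+S^{ml}(\hbar_1\xi))=q+S^{ml}(\hbar_1\xi)+cS^{ml}\!\big(\uvp(q)+\tfrac{\hbar_1}{R}\xi\big)=F(q)+\big(1+\tfrac cR\big)S^{ml}(\hbar_1\xi)=F(q)+S^{ml}(\hbar\xi).
\]
I expect the one genuine subtlety to be exactly this last point: $\uvp$ is only a \emph{local} section of $\varphi$, so it is additive only as long as one stays inside the fundamental domain $U_0$, and the hypothesis ``$q_t\in\tilde V_{\delta_1}$ for all $t\in[0,1]$'' is precisely what makes the path-lifting step legitimate. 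The remainder is bookkeeping with the linear maps $S^{ml}$, $P$ and the scaling relation \eqref{eq:hbars deltas}.
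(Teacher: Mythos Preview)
Your proof is correct and follows the same core idea as the paper: you show that $\varphi_{ll^Rm}\circ F$ scales the lifted coordinate $\uvp$ by $\lambda=\hbar/\hbar_1$, and deduce the bijection $\tilde V_{\delta_1}\to\tilde V_{\delta_2}$ from the relation \eqref{eq:hbars deltas}. Where the paper simply asserts bijectivity from this scaling, you exhibit the explicit inverse via the $(\hbar,\delta_1)\leftrightarrow(\hbar_1,\delta_2)$ symmetry, which is a clean way to close that step. The one genuine methodological difference is the Jacobian: the paper observes that $d_qF$ is constant and computes $|\det d_qF|$ as the volume ratio $\vol(\tilde V_{\delta_2})/\vol(\tilde V_{\delta_1})=(1-\delta_2)^{n|l|}/(1-\delta_1)^{n|l|}=(\hbar/\hbar_1)^{n|l|}$ (using Lemma~\ref{lem:phi smooth}), whereas you compute it directly from the linear algebra, noting that $RE=RS^{ml}P$ is idempotent of rank $n|l|$ so that $d_qF=\id+cE$ has eigenvalue $1$ on $\ker E$ and $\lambda$ on $\ran E$. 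Both arguments are short; yours is more self-contained, the paper's recycles the volume-preservation already proved in Lemma~\ref{lem:phi smooth}. Your careful path-lifting justification for \eqref{eq:the point of F} makes explicit exactly what the paper dismisses as ``a simple check,'' and correctly identifies why the hypothesis on the whole segment $q_t$ is needed.
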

\begin{proof}
	By Lemma \ref{lem:phi smooth}, $\uvp$ is smooth, which implies that $F$ is smooth. It follows from \eqref{eq:F} that the map $\tilde V_{\delta_1}\to \text{End}(\g^m)$,  $q\mapsto d_qF$ is constant, and that
	\begin{align*}
		\varphi_{ll^Rm}(F[x])=\left[(\hbar/\hbar_1)x\right],
	\end{align*}		
	which by \eqref{eq:hbars deltas} implies that $F:\tilde V_{\delta_1}\to\tilde V_{\delta_2}$ is bijective. Therefore $F$ is a diffeomorphism and 
$|\det d_q F|$ is given by
		$\vol(\tilde V_{\delta_1})/\vol(\tilde V_{\delta_2})=\vol(V_{\delta_1})/\vol(V_{\delta_2})=(\hbar/\hbar_1)^{n|l|},$
	by use of Definition \ref{def:U_delta en V_delta} and Lemma \ref{lem:phi smooth}. The last statement of the lemma is a simple check.
\end{proof}
In Figure \ref{fig:2}, two key examples show how $F$ maps the points of $\tilde V_{\delta_1}$ to $\tilde V_{\delta_2}$. We now have all the tools we need to establish the last part of our main result.

\begin{prop}\label{prop:Rieffel1}
	\emph{(Rieffel's condition away from 0)} For each $f\in\A_0^\infty$, and each $\hbar_1\in[-1,1]\setminus\{0\}$, we have
			\begin{align*}
				\lim_{\hbar\to\hbar_1}\norm{\Q(f)}=\norm{Q^\infty_{\hbar_1}(f)}.
			\end{align*}
\end{prop}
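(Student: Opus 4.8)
The strategy is to reduce, exactly as in the proof of Proposition~\ref{prop:Rieffel0}, to a statement at the finite level that can be made uniform in the lattice, and then to exploit the freedom to work over lattices $m\geq l^R$ with $R$ large. By \eqref{eq:Q^infty sup over Q^j} we have $\norm{\Q(f\circ\gamma_l)}=\sup_{m\geq l}\norm{\Qm(f\circ\gamma_{lm})}$, and since every $m\geq l$ factors through some $m'\geq l^R$ by further subdivision — with the additional-edge steps contributing nothing to the norm because $F_Q^\add$ is isometric, as in \eqref{eq:j or k same same} — we may equally take the supremum over lattices $m$ obtained from $l^R$ by subdivisions. The claim to establish is then: for every $\epsilon>0$ there is a neighbourhood of $\hbar_1$ such that $\bigl|\,\norm{\Qm(f\circ\gamma_{lm})}-\norm{Q^m_{\hbar_1}(f\circ\gamma_{lm})}\,\bigr|<\epsilon$ for \emph{all} such $m$ simultaneously; taking suprema over $m$ then gives the proposition.

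\textbf{Key steps.} First, fix $R$ large enough that $\norm{S^{l^Rl}(\hbar\xi)}=\tfrac1R\norm{\hbar\xi}$ is as small as we like for all $\xi\in\cup_k\supp(\mu_k)$ and all $|\hbar|\le1$; this is the point of Definition~\ref{def:l_R} and the remarks preceding Definition~\ref{def:U_delta en V_delta}. Second, using Lemma~\ref{lem:supnorm from operators} rewrite, for a lattice $m$ obtained from $l^R$ by subdivisions, both operator norms as $\sup_{q\in G^m}$ of the norm of a strongly-interpreted integral operator $\sum_k\int d\mu_k(\xi)\,g_k(\gamma^\conf_{lm}(q))\,L^*_{S^{ml}(\hbar\xi)}$ acting on $L^2(\g^m)$. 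Third — and this is the heart — for a test function $\psi\in L^2(G^m)$ nearly attaining $\norm{\Qm(f\circ\gamma_{lm})}$, localize it in configuration space onto a translate of a small cell $\tilde V_{\delta_1}$ (using the now-familiar "average over translates, find a good $q_0$, then find a good cell" contradiction argument from Proposition~\ref{prop:Rieffel0}, together with the fact that $\varphi_{ll^Rm}$ preserves volume by inverse image, Lemma~\ref{lem:phi smooth}), so that on that cell the $g_k$-factors are constant up to $\epsilon$ and the translations $q\mapsto q+S^{ml}(\hbar\xi)$ stay inside the slightly larger cell $\tilde V_0$. Fourth, transport $\psi$ via the diffeomorphism $F:\tilde V_{\delta_1}\to\tilde V_{\delta_2}$ of Lemma~\ref{lem:F}: because $F$ has constant derivative with $|\det d_qF|=(\hbar/\hbar_1)^{n|l|}$ and intertwines the shift by $S^{ml}(\hbar_1\xi)$ with the shift by $S^{ml}(\hbar\xi)$ via \eqref{eq:the point of F} (valid precisely because on the cell the whole segment $q+tS^{ml}(\hbar_1\xi)$ stays in $\tilde V_{\delta_1}$), pushing forward $\psi$ turns a near-maximizer for $Q^m_{\hbar_1}$ into a vector witnessing $\norm{\Qm(f\circ\gamma_{lm})}\geq\norm{Q^m_{\hbar_1}(f\circ\gamma_{lm})}-\epsilon$, the Jacobian factor cancelling in the normalized quotient and the constancy of the $g_k$ up to $\epsilon$ absorbing the error. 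Fifth, run the same argument with the roles of $\hbar$ and $\hbar_1$ reversed to get the opposite inequality; the resulting bounds are uniform in $m$ because all the quantities controlling $\delta_1,\delta_2$ and the choice of $R$ depend only on $l$, $f$, and $\epsilon$, not on $m$. Conclude by taking $\sup_m$ and invoking \eqref{eq:Q^infty sup over Q^j}, together with Proposition~\ref{prop:Rieffel0} to cover a neighbourhood of $0$ if $\hbar_1$ is close to $0$ (or simply noting $\hbar_1\neq0$ and restricting to a neighbourhood bounded away from $0$).

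\textbf{Main obstacle.} The technical crux is the simultaneous-in-$m$ bookkeeping: one must choose the parameters $R$, $\delta_1$, $\delta_2$, and the cell geometry once and for all at the level of $l$, yet have them do their job inside $G^m$ for arbitrarily fine $m$. The enabling fact is that $\gamma^\conf_{l^Rm}$ and $\chi_{ll^R}$ are volume-preserving under inverse image (Lemma~\ref{lem:phi smooth}), so $\tilde V_\delta=\varphi_{ll^Rm}^{-1}(V_\delta)$ has volume $(1-\delta)^{n|l|}$ independently of $m$, and the containment $\tilde V_{\delta_1}\subseteq\tilde V_{\delta_2}$ plus the diffeomorphism $F$ are all pulled back from the fixed finite-dimensional picture on $G^l$. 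A secondary subtlety is verifying that the shifted points needed in the argument — $q+tS^{ml}(\hbar_1\xi)$ for $t\in[0,1]$ — really remain in $\tilde V_{\delta_1}$; this is exactly what the choice of large $R$ (shrinking $\norm{S^{ml}(\hbar_1\xi)}$ relative to the "width" $\delta_1$ of the cell) guarantees, and it is the hypothesis under which \eqref{eq:the point of F} was stated in Lemma~\ref{lem:F}.
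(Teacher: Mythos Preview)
Your high-level architecture is right and matches the paper: reduce to lattices $m$ obtained from $l^R$ by subdivision, pick a near-maximizer $\psi$, localize via the averaging-over-translates trick, and transport via the diffeomorphism $F$ of Lemma~\ref{lem:F}, whose intertwining property \eqref{eq:the point of F} converts $S^{ml}(\hbar_1\xi)$-shifts into $S^{ml}(\hbar\xi)$-shifts. Two of your steps, however, contain genuine misconceptions.

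\textbf{Step 2 is incorrect.} Lemma~\ref{lem:supnorm from operators} computes the \emph{classical} sup norm $\supnorm{f}$, not the quantum operator norm $\norm{\Qm(f\circ\gamma_{lm})}$. There is no identity expressing $\norm{\Qm(f\circ\gamma_{lm})}$ as a $\sup_{q\in G^m}$ of the norm of a translation operator on $L^2(\g^m)$ with frozen coefficient $g_k(\gamma^\conf_{lm}(q))$: in $\Qm$ the argument of $g_k$ is $\gamma^\conf_{lm}(q+\tfrac12 S^{ml}(\hbar\xi))$ \emph{inside} the integral (see \eqref{eq:Qj}) and cannot be pulled outside as a supremum. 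Lemma~\ref{lem:supnorm from operators} plays no role in this proposition.

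\textbf{Step 3 misreads the geometry and the error mechanism.} By Definition~\ref{def:U_delta en V_delta}, $\tilde V_\delta$ has volume $(1-\delta)^{n|l|}$: it is \emph{large}, not a small cell on which the $g_k$ are approximately constant. The localization via $q_0$ only discards a set of volume $<\epsilon/(3Q^2)$ so that $\uvp$ and hence $F$ are defined; there is no ``find a good cell'' sub-step here, and the $g_k$ vary freely over $\tilde V_\delta$. The actual error after transporting by $F_{q_0}$ is
\[
\overline g^q_{k,\xi}=g_k\bigl(\gamma^\conf_{lm}(F_{q_0}(q+\tfrac12 S^{ml}(\hbar_1\xi)))\bigr)-g_k\bigl(\gamma^\conf_{lm}(q+\tfrac12 S^{ml}(\hbar_1\xi))\bigr),
\]
and it is small because $F$ is $C^0$-close to the identity: from \eqref{eq:F} together with $\gamma^\conf_{lm}[S^{ml}X]=[X]$ one gets $|\overline g^q_{k,\xi}|\le\supnorm{\nabla g_k}\,R\,|\hbar/\hbar_1-1|\,\tfrac{\sqrt{|l|}}{2}$, which is made $<\epsilon$ by shrinking the neighbourhood of $\hbar_1$ (the last condition in \eqref{eq:assumptions c}). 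The large $R$ is there only so that the shifts satisfy \eqref{eq:what xi's do to V's} and the segment $q+tS^{ml}(\hbar_1\xi)$ stays in the domain where \eqref{eq:the point of F} applies; it is not there to shrink cells. In short, you have grafted the small-cell localization of Proposition~\ref{prop:Rieffel0} onto the present argument, where it is neither needed nor compatible with the volume-based choice of $q_0$.

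Finally, the paper proves only the inequality $\norm{\Qm(f\circ\gamma_{lm})}^2\le\norm{Q^m_{\hbar_1}(f\circ\gamma_{lm})}^2+\epsilon$ in this way and invokes \cite[Proposition~24]{vNS20} for lower semicontinuity, so your symmetric Step~5 is unnecessary, though it would work.
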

\begin{proof}
	Let $f\in\M_0^{l}$ for some $l\in\L$, write $f=\sum_{k=1}^K g_k\otimes \hat\mu_k$ for $g_k\in C^\infty(G^{l})$ and $\mu_k$ a finite complex measure supported in $B^{l}$, and let $\hbar_1\in[-1,1]\setminus\{0\}$. 
	By \cite[Proposition 24]{vNS20} we already have
	\begin{align*}
		\lim_{\hbar\to\hbar_1}\norm{\Q(f)}\geq\norm{Q^\infty_{\hbar_1}(f)}.
	\end{align*}
	In order to also prove 
	\begin{align*}
		\lim_{\hbar\to\hbar_1}\norm{\Q(f)}\leq\norm{Q^\infty_{\hbar_1}(f)},
	\end{align*}
	we let $\epsilon>0$ be arbitrary. By Definition \ref{def:U_delta en V_delta} and Lemma \ref{lem:phi smooth}
	we can choose $\delta\in(0,1)$ small enough such that, with $Q$ from \eqref{eq:Q},
	\begin{align}\label{eq:vol assump hbar1}
		\vol(G^m\setminus\tilde V_\delta)=\vol(G^{l}\setminus V_\delta)<\frac{\epsilon}{3Q^2}.
	\end{align}
	Choose a natural number $R\in\N$, big enough such that
	\begin{align}\label{eq:R}
		\frac{1}{R}\sqrt{|l|}<\delta,
	\end{align}
	where $|l|$ denotes the number of edges in $l$.
	For all $\xi\in\cup_k\supp(\mu_k)$, we have $\hbar_1\xi\in B^{l}$, which is an open set. We choose a number $c>0$ such that for all $\hbar\in[-1,1]$ with $|\hbar-\hbar_1|<c$ it holds that
	\begin{align}
		 1-\frac{\hbar}{\hbar_1}(1-\delta)\in(0,1);\qquad&\hbar\xi\in B^{l}\text{ for all $\xi\in\cup_k\supp(\mu_k)$;}\nonumber\\
		  1-\frac{\hbar}{\hbar_1}\left(1-\frac{\delta}{2}\right)\in(0,\delta);\qquad&\sgn(\hbar)=\sgn(\hbar_1);\nonumber\\
		  1-\frac{\hbar}{\hbar_1}\left(1-\frac{\delta}{4}\right)\in(0,1); \qquad&\supnorm{\nabla g_k}R\left|\frac{\hbar}{\hbar_1}-1\right|\sqrt{|l|}\leq \frac{\epsilon}{6Q\sum_k\absnorm{\mu_k}}.\label{eq:assumptions c}
	\end{align}
	Let $\hbar\in [-1,1]$ be arbitrary such that $|\hbar-\hbar_1|<c$. By 
	\eqref{eq:Q^infty sup over Q^j} and \eqref{eq:j or k same same}, it suffices to prove
	\begin{align*}
		\norm{Q^m_{\hbar}(f\circ\gamma_{lm})}\leq\norm{Q^m_{\hbar_1}(f\circ\gamma_{lm})}+\epsilon,
	\end{align*}
	for all lattices $m\geq l^R$ obtained from $l^R$ purely by subdivision of edges. We let $m$ be such a lattice in the following.
	We choose a $\psi\in\H^m$ such that $\quadnorm{\psi}=1$ and
	\begin{align*}
		\norm{\Qm(f\circ\gamma_{lm})}^2-\frac{\epsilon}{3}\leq \quadnorm{\Qm(f\circ\gamma_{lm})\psi}^2.
	\end{align*}
	By a proof by contradiction (as we gave several times in the proof of Proposition \ref{prop:Rieffel0}) using \eqref{eq:vol assump hbar1} we obtain a point $q_0\in G^m$ such that
	\begin{align*}
		\int_{L_{q_0}(G^m\setminus\tilde{V}_\delta)}dq|\Qm(f\circ\gamma_{lm})\psi(q)|^2\leq \frac{\epsilon}{3}.
	\end{align*}
	Therefore
	\begin{align}\label{eq:2epsilon/3}
		\norm{\Qm(f\circ\gamma_{lm})}^2-\frac{2\epsilon}{3}\leq \int_{L_{q_0}(\tilde V_\delta)}dq\left|\Qm(f\circ\gamma_{lm})\psi(q)\right|^2.
	\end{align}		
	

	Using \eqref{eq:assumptions c}, define $F_{q_0}:L_{q_0}(\tilde V_{\delta/4})\to G^m$ by $F_{q_0}(q):=F(q-q_0)+q_0$, so that when $q+tS^{ml}(\hbar_1\xi)\in L_{q_0}(\tilde V_{\delta/4})$ for all $t\in[0,1]$, \eqref{eq:the point of F} gives
	\begin{align}\label{eq:Fqo}
		F_{q_0}(q+S^{ml}(\hbar_1\xi))
		&=F_{q_0}(q)+S^{ml}(\hbar\xi).
	\end{align}
	As $\tilde V_\delta\subseteq \tilde V_{\delta/4}$, we may in particular define $\tilde\psi\in\H^m= L^2(G^m)$ by
	\begin{align*}
		\tilde\psi(q):=\begin{cases}
		\sqrt{(\hbar/\hbar_1)^{n|l|}}\psi(F_{q_0}(q))\quad&\text{if }q\in L_{q_0}(\tilde V_{\delta})\\
		0\quad&\text{if }q\in G^m\setminus L_{q_0}(\tilde V_{\delta}).
		\end{cases}
	\end{align*}
	From Lemma \ref{lem:F} and the first assumption of \eqref{eq:assumptions c} we obtain that $\|\tilde\psi\|_2^2\leq\norm{\psi}_2^2=1$.

	We have $\hbar_1\xi\in B^{l}$, so $\norm{\hbar_1\xi}\leq \sqrt{|l|}/2$ for all $\xi\in\cup_k\supp(\mu_k)$. By \eqref{eq:R}, and because $\varphi_{ll^Rm}[S^{ml}(X)]=[\tfrac1R X]$, we have
	\begin{align}\label{eq:what xi's do to V's}
		\norm{\uvp[S^{ml}(\hbar_1\xi)]}\leq \delta/2.
	\end{align}
	Therefore $q+S^{ml}(\hbar_1\xi)\in \tilde V_{\delta}$ implies $q\in\tilde V_{\delta/2}$. Translating this implication with $L_{q_0}$, we obtain,
	\begin{align}
		\norm{Q^m_{\hbar_1}(f\circ\gamma_{lm})\tilde\psi}^2=\int_{G^m}&dq\left|\sum \int d\mu_k(\xi)g_k(\gamma^\conf_{lm}(q+\tfrac12 S^{ml}(\hbar_1\xi)))\tilde\psi(q+S^{ml}(\hbar_1\xi))\right|^2\nonumber\\
		=\int_{L_{q_0}(\tilde V_{\delta/2})}&dq\left|\sum \int d\mu_k(\xi)\,\dot{g}_k^q\,\tilde\psi(q+S^{ml}(\hbar_1\xi))\right|^2\label{eq:bigint hbar1}
	\end{align}
	when we define, for all $q\in L_{q_0}(\tilde V_{\delta/2})$,
	\begin{align*}
		\dot g_k^q&:=g_k(\gamma_{lm}^\conf(q+\tfrac12 S^{ml}(\hbar_1\xi)));\\ 
		\overline g_{k,\xi}^q&:=g_k(\gamma^\conf_{lm}(F_{q_0}(q+\tfrac12 S^{ml}(\hbar_1\xi))))-g_k(\gamma_{lm}^\conf(q+\tfrac12 S^{ml}(\hbar_1\xi))).
	\end{align*}
	We choose $\delta_4$ such that $F:\tilde{V}_{\delta/2}\to\tilde V_{\delta_4}$ is a bijection by Lemma \ref{lem:F}, i.e., we define $\delta_4:=1-\hbar/\hbar_1(1-\delta/2)$. By \eqref{eq:assumptions c}, we have $\delta_4\in(0,\delta)$, and therefore $\tilde V_{\delta}\subseteq\tilde V_{\delta_4}$. When we apply a change of variables $q\mapsto F_{q_0}(q)$ to \eqref{eq:2epsilon/3} we obtain, by Lemma \ref{lem:F} and \eqref{eq:Fqo},
	\begin{align}
		&\norm{\Qm(f\circ\gamma_{lm})}^2-\frac{2\epsilon}{3}\nonumber\\
		&\quad\leq \int_{L_{q_0}(\tilde V_{\delta_4})} dq\left|\sum \int d\mu_k(\xi)g_k(\gamma_{lm}^\conf(q+\tfrac12 S^{ml}(\hbar\xi)))\psi\big(q+S^{ml}(\hbar\xi)\big)\right|^2\nonumber\\
		&\quad= \left(\frac{\hbar}{\hbar_1}\right)^{n|l|}\int_{L_{q_0}(\tilde V_{\delta/2})} dq\bigg|\sum\int d\mu_k(\xi)
		 (\dot{g}_k^q+\overline{g}_{k,\xi}^q)\psi\big(F_{q_0}(q)+S^{ml}(\hbar\xi)\big)\bigg|^2\nonumber\\
		&\quad=\int_{L_{q_0}(\tilde V_{\delta/2})}dq\left|\sum\int d\mu_k(\xi) (\dot{g}_k^q+\overline{g}_{k,\xi}^q)\tilde\psi(q+S^{ml}(\hbar_1\xi))\right|^2.\label{eq:bigint hbar}
	\end{align}
	The only difference between \eqref{eq:bigint hbar1} and \eqref{eq:bigint hbar} is now the appearance of $\overline g^q_{k,\xi}$ in the latter expression.
	For all $q\in L_{q_0}(\tilde V_{\delta/2})$, $\xi\in\cup_k\supp(\mu_k)$, and $t\in[0,1]$, we have $q+\tfrac t2 S^{ml}(\hbar_1\xi)\in L_{q_0}(\tilde V_{\delta/4})$ by \eqref{eq:what xi's do to V's}.
	By Lemma \ref{lem:F} and \eqref{eq:assumptions c}, we obtain
	\begin{align*}
		|\overline{g}^q_{k,\xi}|&\leq\supnorm{\nabla g_k} d\left(\gamma_{lm}^\conf(F_{q_0}(q+\tfrac12 S^{ml}(\hbar_1\xi))),\gamma_{lm}^\conf(q+\tfrac12 S^{ml}(\hbar_1\xi))\right)\\
		&\leq \supnorm{\nabla g_k}\norm{R\left(\frac{\hbar}{\hbar_1}-1\right)\uvp(q-q_0+\tfrac12 S^{ml}(\hbar_1\xi)}\\
		&\leq \supnorm{\nabla g_k}R\left|\frac{\hbar}{\hbar_1}-1\right|\frac{\sqrt{|l|}}{2}
		\leq\frac{\epsilon}{12Q\sum_k\absnorm{\mu_k}},
	\end{align*}
	for all $q,k,$ and $\xi$. Expanding the square of the absolute value in \eqref{eq:bigint hbar}, and using \eqref{eq:bigint hbar1},
		\begin{align*}
		&\norm{\Qm(f\circ\gamma_{lm})}^2-\frac{2\epsilon}{3}\\
		&\quad\leq \int_{L_{q_0}(\tilde V_{\delta/2})}dq\,\left|\sum_{k=1}^K \int d\mu_k(\xi)\dot\, g^q_k\,\tilde\psi(q+S^{ml}(\hbar_1\xi))\right|^2
		\\&\qquad\quad
		+\sum_{k,k'=1}^K\int d|\mu_k|(\xi)\Big(2\sup_q |\dot g^q_{k}|+\sup_{q}|\overline g_{k,\xi}^q|\Big)\int d|\mu_{k'}|(\xi')\sup_{q}|\overline g_{k',\xi'}^q|\big\|\tilde\psi\big\|_2^2\\
		&\quad\leq \quadnorm{Q^m_{\hbar_1}(f\circ\gamma_{lm})\tilde\psi}^2
		+4Q\sum_{k=1}^K\absnorm{\mu_k}\sup_{\xi}\sup_{q}|\overline g_{k,\xi}^q|\\
		&\quad\leq \norm{Q^m_{\hbar_1}(f\circ\gamma_{lm})}^2
		+\frac{\epsilon}{3}.
		\end{align*}
		Therefore $\norm{\Qm(f\circ\gamma_{lm})}^2\leq \norm{Q^m_{\hbar_1}(f\circ\gamma_{lm})}^2+\epsilon$ for all lattices $m\geq l^R\geq l$, which is what we needed to prove.
\end{proof}
We conclude that $\Q:\A^\infty_0\to\A^\infty_\hbar$ is a strict deformation quantization:

\begin{proof}[Proof of Theorem \ref{thm:sdq infinity}]
	Combine Propositions \ref{prop:algebra}, \ref{prop:star-preserving}, \ref{prop:injective}, \ref{prop:von Neumann}, \ref{prop:Dirac}, \ref{prop:Rieffel0}, and \ref{prop:Rieffel1}.
\end{proof}

\section{Outlook}\label{sct:outlook}
We constructed field C*-algebras for classical and quantum abelian gauge theories, noted their advantageous properties, and connected them via strict deformation quantization. 
These results already show that our new method, guided by quantization and using operator systems at the finite level, can be a valuable contribution to the program of C*-algebraic lattice gauge theory. We will now discuss how our method can be exploited even further, when taking three important next steps.

\paragraph{Reduction}
We have constructed field algebras for classical and quantum abelian gauge theories. The next step would be to define \emph{observable} algebras, by reduction with respect to the gauge group that is attached to each vertex of the lattice. The embedding maps from this paper respect gauge transformations, and can therefore be used to define gauge transformations on the continuum algebras. The way the classical and quantum gauge actions relate still needs to be worked out, which makes it hard to say whether any choice of observable algebra is correct at this point. The C*-algebraic quantization route looks very promising, as there one can use the fact that a restriction of a strict deformation quantization map to a *-subalgebra, whose image is multiplicatively closed, is also a strict deformation quantization.

\paragraph{Full time evolution}
The combined results of \cite{vNS20} and this paper give hope that C*-algebras like $A_0^\infty$ and $A_\hbar^\infty$ are invariant under the full abelian Yang--Mills time evolution in 3+1 dimensions.
The Hamiltonians of Theorems \ref{thm:classical time evolution infty} and \ref{thm:quantum time evolution infty} give rise to the classical and quantum time evolutions corresponding to the electric part of the Kogut--Susskind Hamiltonian, with a rescaling of each term by a factor proportional to the square of the length scale of the cubic lattice. See \cite{KS} and \cite[page 34]{Stienstra}. This rescaling can be countered in many ways; perhaps the most natural way is to reinterpret the configuration space of an edge from a singular parallel transport to an average of parallel transports, effectively assigning each edge not only a length but a width and breadth as well. Accordingly, the embedding map for adding an edge should be altered in the same way that we altered the embedding map for subdivision in this paper. The above reinterpretation might also make it easier to obtain invariance under full time evolution, by ensuring that the finite approximations (for which we can apply \cite{vNS20}) converge to the continuum in a stronger sense.


\paragraph{Generalization to non-abelian groups}
Many of the definitions in this paper were inspired by the more general case where the gauge group $G=\T^n$ is replaced with any compact Lie group $G$. The quantization map generalizes by correctly incorporating the exponential map in \eqref{eq:Ql}. In particular,
\begin{align*}
	\Ql(g \otimes e^{i\xi\cdot}):=M_{g\circ L_{\exp(\hbar\xi/2)}}L^*_{\exp(\hbar\xi)}\in\mB(L^2(G^l)).
\end{align*}

The embedding maps for adding an edge are unaltered in the non-abelian case. The embedding maps for subdivision generalize as well, giving in particular,
\begin{align*}
	F_C^\sub(f)(q_1,q_2,p_1,p_2)&:=f(\mu(q_1,q_2),p_1+p_2),\\
	F_Q^\sub(M_gL^*_{\exp(X)})&:=M_{g\circ \mu}L^*_{\exp\left(\frac{d_1}{d}X,\frac{d_2}{d}X\right)},
\end{align*}
where $\mu:G\times G\to G$ is the group multiplication. These formulas suggest that it is possible to define a natural quantization map in the continuum limit.
It would be interesting to see whether the properties discussed in this paper still hold and, in particular, whether this map is a strict deformation quantization.


\begin{thebibliography}{10}

\bibitem{ASS}
F.~Arici, R.~Stienstra, and W.~D.~van Suijlekom,
Quantum lattice gauge fields and groupoid C*-algebras.
\textit{Ann. Henri Poincar\'e}
(11) \textbf{19} (2018), 3241--3266.

\bibitem{BG}
D.~Buchholz and H.~Grundling, 
The resolvent algebra: a new approach to canonical quantum systems.
\textit{J. Funct. Anal.}
(11) \textbf{254} (2008), 2725--2779.

\bibitem{BHR}
E.~Binz, R.~Honegger, and A.~Rieckers,
Field-theoretic Weyl quantization as a strict and continuous deformation quantization.
\textit{Ann. Henri Poincar\'e}
(2) \textbf{5} (2004), 327--346.

\bibitem{BS19a}
A.~Brothier and A.~Stottmeister,
Canonical quantization of 1+1-dimensional Yang-Mills theory: An operator-algebraic approach.
arXiv:1907.05549 [math-ph] (2019).

\bibitem{BS19b}
A.~Brothier and A.~Stottmeister,
Operator-Algebraic Construction of Gauge Theories and Jones' Actions of Thompson's Groups
\textit{Communications in Mathematical Physics}
(2) \textbf{376} (2020), 841--891.

\bibitem{GR}
H.~Grundling and G.~Rudolph,
Dynamics for QCD on an infinite lattice.
\textit{Commun. Math. Phys.}
(3) \textbf{349} (2017), 1163--1202.

\bibitem{Haag}
R.~Haag,
\textit{Local quantum physics: Fields, particles, algebras.}
Springer Science \& Business Media, 2012.

\bibitem{KS}
J.~Kogut and L.~Susskind,
Hamiltonian formulation of Wilson's lattice gauge theories.
\textit{Phys. Rev. D}
(2) \textbf{11} (1975), 395--408.

\bibitem{KR}
J.~Kijowski and G.~Rudoph,
On the Gauss law and global charge for quantum chromodynamics.
\textit{J. Math. Phys}
\textbf{43} (2002) 1796--1808.

\bibitem{Landsman93}
N.~P.~Landsman,
Strict deformation quantization of a particle in external gravitational and Yang--Mills fields.
\textit{J. Geom. Phys.}
\textbf{12} (1993) 93--132.

\bibitem{Landsman98}
N.~P.~Landsman,
\textit{Mathematical topics between classical and quantum mechanics}.
Springer Monographs in Mathematics. Springer-Verlag, New York, 1998, xx+529 pp.

\bibitem{LMvdV}
N.~P.~Landsman, V.~Moretti, and C.~J.~F. van de Ven,
Strict deformation quantization of the state space of $M_k(\C)$ with applications to the Curie-Weiss model.
\textit{Rev. Math. Phys.}
(10) \textbf{32} (2020), 2050031.

\bibitem{vN19}
T.~D.~H.~van Nuland, 
Quantization and the resolvent algebra. 
\textit{J. Funct. Anal.}
(8) \textbf{277} (2019), 2815--2838.

\bibitem{vNS20}
T.~D.~H.~van Nuland and R.~Stienstra, 
Classical and quantized resolvent algebras for the cylinder. 
arXiv:2003.13492 [math-ph] (2020).

\bibitem{Rieffel89}
M.~A.~Rieffel,
Deformation quantization of Heisenberg manifolds.
\textit{Commun. Math. Phys.}
\textbf{122} (1989), 531--562.

\bibitem{Rieffel}
M.~A.~Rieffel,
Deformation quantization for actions of $\R^d$. 
\textit{Mem. Amer. Math. Soc.}
\textbf{106} (1993) x+93 pp.

\bibitem{Stienstra}
R.~Stienstra. 
\textit{Quantisation versus lattice gauge theory.}
PhD thesis, Radboud University,
2019.

\bibitem{ST}
A.~Stottmeister and T.~Thiemann.
Coherent states, quantum gravity, and the Born--Openheimer approximation. III.: Applications to loop quantum gravity.
\textit{J. Math. Phys.}
(8) \textbf{57} (2016), 083509.

\bibitem{Wilson}
K.~G.~Wilson,
Confinement of quarks.
\textit{Phys. Rev. D}
(8) \textbf{10} (1974), 2445--2459.


\end{thebibliography}
\end{document}